\documentclass[11pt,amssymb,amsfont,a4paper]{article}
\usepackage{latexsym,amssymb,amsmath,amsthm,color}
\usepackage{geometry}
\usepackage{url}
\usepackage[utf8]{inputenc}
\usepackage{authblk}

\theoremstyle{definition}
\newtheorem{definition}{Definition}
\newtheorem{example}{Example}

\theoremstyle{plain}
\newtheorem{theorem}{Theorem}
\newtheorem{proposition}{Proposition}
\newtheorem{lemma}{Lemma}
\newtheorem{remark}{Remark}
\newtheorem{corollary}{Corollary}

\title{Rank error-correcting pairs}
\author[1]{Umberto Mart{\'i}nez-Pe\~{n}as \thanks{umberto@math.aau.dk}}
\author[2]{Ruud Pellikaan \thanks{g.r.pellikaan@tue.nl}}
\affil[1]{Department of Mathematical Sciences, Aalborg University, Denmark}
\affil[2]{Department of Mathematics and Computing Science, Eindhoven University of Technology, The Netherlands}

\begin{document}

\maketitle

\begin{abstract}
Error-correcting pairs were introduced independently by Pellikaan and K{\"o}tter as a general method of decoding linear codes with respect to the Hamming metric using coordinatewise products of vectors, and are used for many well-known families of codes. In this paper, we define new types of vector products, extending the coordinatewise product, some of which preserve symbolic products of linearized polynomials after evaluation and some of which coincide with usual products of matrices. Then we define rank error-correcting pairs for codes that are linear over the extension field and for codes that are linear over the base field, and relate both types. Bounds on the minimum rank distance of codes and MRD conditions are given. Finally we show that some well-known families of rank-metric codes admit rank error-correcting pairs, and show that the given algorithm generalizes the classical algorithm using error-correcting pairs for the Hamming metric.

\textbf{Keywords:} Decoding, error-correcting pairs, linearized polynomials, rank metric, vector products. 

\textbf{MSC:} 15B33, 94B35, 94B65.
\end{abstract}

\section{Introduction}

Error-correcting pairs were introduced independently by Pellikaan in \cite{on-ECP, pellikaan-error-location} and by K{\"o}tter in \cite{unified}. These are pairs of linear codes satisfying some conditions with respect to the coordinatewise product and a given linear code, for which they define an error-correcting algorithm with respect to the Hamming metric in polynomial time. 

Linear codes with an error-correcting pair include many well-known families, such as (generalized) Reed-Solomon codes, many cyclic codes (such as BCH codes), Goppa codes and algebraic geometry codes (see \cite{error-locating, pellikaan-error-location, on-the-existence}).

Error-correcting codes with respect to the rank metric \cite{gabidulin} have recently gained considerable attention due to their applications in network coding \cite{on-metrics}. In the rank metric, maximum rank distance (MRD) Gabidulin codes, as defined in \cite{gabidulin, new-construction}, have been widely used, and decoding algorithms using linearized polynomials are given in \cite{gabidulin, new-construction, loidreau-gabidulin}. A related construction, the so-called $ q $-cyclic or skew cyclic codes, were introduced by Gabidulin in \cite{gabidulin} for square matrices and generalized independently by himself in \cite{qcyclic} and by Ulmer et al. in \cite{skewcyclic1}.

However, more general methods of decoding with respect to the rank metric are lacking, specially for codes that are linear over the base field instead of the extension field.

The contributions of this paper are organized as follows. In Section 3, we introduce some families of vector products that coincide with usual products of matrices for some sizes. One of these products preserves symbolic products of linearized polynomials after evaluation and is the unique product with this property for some particular sizes. In Section 4, we introduce the concept of rank error-correcting pair and give efficient decoding algorithms based on them. Subsection 4.1 treats linear codes over the extension field, and Subsection 4.2 treats linear codes over the base field. In Section 5, we prove that the latter type of rank error-correcting pairs generalize the former type. In Section 6, we derive bounds on the minimum rank distance and give MRD conditions based on rank error-correcting pairs. Finally, in Section 7, we study some families of codes that admit rank error-correcting pairs, showing that the given algorithm generalizes the classical algorithm using error-correcting pairs for the Hamming metric.

\section{Preliminaries} \label{preliminaries}

Fix a prime power $ q $ and positive integers $ m $ and $ n $. Fix a basis $ \alpha_1, \alpha_2, \ldots, \alpha_m $ of $ \mathbb{F}_{q^m} $ over $ \mathbb{F}_q $. We will use the following classical matrix representation of vectors in $ \mathbb{F}_{q^m}^n $. Let $ \mathbf{c} \in \mathbb{F}_{q^m}^n $, there exist unique $ \mathbf{c}_i \in \mathbb{F}_q^n $, for $ i = 1,2, \ldots, m $, such that $ \mathbf{c} = \sum_{i=1}^m \alpha_i \mathbf{c}_i $. Let $ \mathbf{c}_i = (c_{i,1}, c_{i,2}, \ldots, c_{i,n}) $ or, equivalently, $ \mathbf{c} = (c_1, c_2, \ldots, c_n) $ and $ c_j = \sum_{i=1}^m \alpha_i c_{i,j} $. Then we define the $ m \times n $ matrix, with coefficients in $ \mathbb{F}_q $,
\begin{equation} \label{matrix representation}
M(\mathbf{c}) = (c_{i,j})_{1 \leq i \leq m, 1 \leq j \leq n}.
\end{equation}
The map $ M : \mathbb{F}_{q^m}^n \longrightarrow \mathbb{F}_q^{m \times n} $ is an $ \mathbb{F}_q $-linear vector space isomorphism, where $ \mathbb{F}_q^{m \times n} $ represents the space of $ m \times n $ matrices over $ \mathbb{F}_q $. Unless it is necessary, we will not write subscripts for $ M $ regarding the values $ m $, $ n $, or the basis $ \alpha_1, \alpha_2, \ldots, \alpha_m $ (which of course change the map $ M $).

By definition \cite{gabidulin}, the rank weight of $ \mathbf{c} $ is $ {\rm wt_R}(\mathbf{c}) = {\rm Rk}(M(\mathbf{c})) $, the rank of the matrix $ M(\mathbf{c}) $, for every $ \mathbf{c} \in \mathbb{F}_{q^m}^n $. We also define the rank support of $ \mathbf{c} $ as the row space of the matrix $ M(\mathbf{c}) $, that is, $ {\rm RSupp}(\mathbf{c}) = {\rm Row}(M(\mathbf{c})) \subseteq \mathbb{F}_q^n $. We may identify any code $ \mathcal{C} \subseteq \mathbb{F}_{q^m}^n $ with $ M(\mathcal{C}) \subseteq \mathbb{F}_q^{m \times n} $ and write $ d_R(\mathcal{C}) = d_R(M(\mathcal{C})) $ for their minimum rank distance \cite{gabidulin}.

We also define the map $ D : \mathbb{F}_q^n \longrightarrow \mathbb{F}_q^{n \times n} $ as follows. For every vector $ \mathbf{c} \in \mathbb{F}_q^n $, define the matrix
\begin{equation} \label{diagonal}
D(\mathbf{c}) = {\rm diag}(\mathbf{c}) = (c_i \delta_{i,j})_{1 \leq i \leq n, 1 \leq j \leq n},
\end{equation}
that is, the diagonal $ n \times n $ matrix with coefficients in $ \mathbb{F}_q $ whose diagonal vector is $ \mathbf{c} $. The map $ D $ is $ \mathbb{F}_q $-linear and one to one. Moreover, the Hamming weight of a vector $ \mathbf{c} \in \mathbb{F}_q^n $ is $ {\rm wt_H}(\mathbf{c}) = {\rm Rk}(D(\mathbf{c})) $.

This gives a way to represent error-correcting codes $ \mathcal{C} \subseteq \mathbb{F}_q^n $ in the Hamming metric as error-correcting codes $ D(\mathcal{C}) \subseteq \mathbb{F}_q^{n \times n} $ in the rank metric, where the Hamming weight distribution of $ \mathcal{C} $ corresponds bijectively to the rank weight distribution of $ D(\mathcal{C}) $. In particular, the minimum Hamming distance of $ \mathcal{C} $ satisfies $ d_H(\mathcal{C}) = d_R(D(\mathcal{C})) $. 

On the other hand, an $ \mathbb{F}_q $-linear map $ \phi : \mathbb{F}_q^n \longrightarrow \mathbb{F}_q^n $ (respectively, an $ \mathbb{F}_{q^m} $-linear map $ \phi : \mathbb{F}_{q^m}^n \longrightarrow \mathbb{F}_{q^m}^n $) is a Hamming-metric equivalence (respectively, rank-metric equivalence) if $ {\rm wt_H}(\phi(\mathbf{c})) = {\rm wt_H}(\mathbf{c}) $ (respectively, $ {\rm wt_R}(\phi(\mathbf{c})) = {\rm wt_R}(\mathbf{c}) $), for all $ \mathbf{c} \in \mathbb{F}_q^n $ (respectively, $ \mathbf{c} \in \mathbb{F}_{q^m}^n $). 

Let $ \phi : \mathbb{F}_q^n \longrightarrow \mathbb{F}_q^n $ be a Hamming-metric equivalence. It is well-known that $ \phi $ is a monomial map, that is, there exist $ a_1,a_2, \ldots, a_n \in \mathbb{F}_q^* $ and a permutation $ \sigma $ with $ \phi(\mathbf{c}) = (a_1c_{\sigma(1)}, a_2c_{\sigma(2)}, \ldots, a_nc_{\sigma(n)}) $, for all $ \mathbf{c} \in \mathbb{F}_q^n $. Then $ \phi^\prime : \mathbb{F}_{q^n}^n \longrightarrow \mathbb{F}_{q^n}^n $, defined by the same formula, is an $ \mathbb{F}_{q^n} $-linear rank-metric equivalence map (see \cite{berger, similarities}), and $ M^{-1}(D(\phi(\mathbf{c}))) = \phi^\prime (M^{-1}(D(\mathbf{c}))) $, for all $ \mathbf{c} \in \mathbb{F}_q^n $. This implies that Hamming-metric equivalent codes $ \mathcal{C}_1 $ and $ \mathcal{C}_2 $ in $ \mathbb{F}_q^n $ correspond to rank-metric equivalent codes $ D(\mathcal{C}_1) $ and $ D(\mathcal{C}_2) $ in $ \mathbb{F}_q^{n \times n} $.

We will also use the following notation. Given a subset $ \mathcal{A} \subseteq \mathbb{F}_{q^m}^n $, we denote by $ \langle \mathcal{A} \rangle_{\mathbb{F}_q} $ and $ \langle \mathcal{A} \rangle_{\mathbb{F}_{q^m}} $ the $ \mathbb{F}_q $-linear and $ \mathbb{F}_{q^m} $-linear vector spaces generated by $ \mathcal{A} $, respectively. For an $ \mathbb{F}_{q^m} $-linear (respectively $ \mathbb{F}_q $-linear) code $ \mathcal{C} \subseteq \mathbb{F}_{q^m}^n $ (respectively $ \mathcal{C} \subseteq \mathbb{F}_q^n $), we denote its dimension over $ \mathbb{F}_{q^m} $ (respectively over $ \mathbb{F}_q $) by $ \dim(\mathcal{C}) $. If $ \mathcal{C} \subseteq \mathbb{F}_{q^m}^n $ or $ \mathcal{C} \subseteq \mathbb{F}_q^{m \times n} $ is $ \mathbb{F}_q $-linear, we denote its dimension over $ \mathbb{F}_q $ by $ \dim_{\mathbb{F}_q}(\mathcal{C}) $. \\

We conclude by defining error-correcting pairs (ECPs) for the Hamming metric, introduced independently by Pellikaan in \cite{on-ECP, pellikaan-error-location} and by K{\"o}tter in \cite{unified}. Define the coordinatewise product $ \ast $ of vectors in $ \mathbb{F}_q^n $ by
$$ \mathbf{a} \ast \mathbf{b} = (a_1b_1, a_2b_2, \ldots, a_nb_n), $$
for all $ \mathbf{a}, \mathbf{b} \in \mathbb{F}_q^n $. For two linear subspaces $ \mathcal{A}, \mathcal{B} \subseteq \mathbb{F}_q^n $, we define the linear subspace $ \mathcal{A} \ast \mathcal{B} = \langle \{ \mathbf{a} \ast \mathbf{b} \mid \mathbf{a} \in \mathcal{A}, \mathbf{b} \in \mathcal{B} \} \rangle \subseteq \mathbb{F}_q^n $. 

\begin{definition}
Let $ \mathcal{A}, \mathcal{B}, \mathcal{C} \subseteq \mathbb{F}_q^n $ be linear codes and $ t $ a positive integer. The pair $ (\mathcal{A}, \mathcal{B}) $ is called a $ t  $-error-correcting pair ($ t $-ECP) for $ \mathcal{C} $ if the following properties hold:
\begin{enumerate}
\item
$ \mathcal{A} \ast \mathcal{B} \subseteq \mathcal{C}^\perp $.
\item
$ \dim(\mathcal{A}) > t $.
\item
$ d_H(\mathcal{B}^\perp) > t $.
\item
$ d_H(\mathcal{A}) + d_H(\mathcal{C}) > n $.
\end{enumerate}
\end{definition}

In \cite{on-ECP, pellikaan-error-location} it is shown that, if $ \mathcal{C} $ has a $ t $-ECP, then it has a decoding algorithm with complexity $ O(n^3) $ that can correct up to $ t $ errors in the Hamming metric (and therefore, $ d_H(\mathcal{C}) \geq 2t + 1 $). This algorithm is analogous to the ones that we will describe in Subsections \ref{using extension} and \ref{using base}. Actually, as we will see in Subsection \ref{subsection Hamming}, the algorithm presented in Subsection \ref{using base} extends the classical algorithm for Hamming-metric codes.

\section{Vector products for the rank metric} \label{vector products}

In this section, we define and give the basic properties of a family of products of vectors in $ \mathbb{F}_{q^m}^n $, which will play the same role as the coordinatewise product $ \ast $ for vectors in $ \mathbb{F}_q^n $.

\begin{definition} \label{definition product}
We first define the product $ \star : \mathbb{F}_{q^m}^m \times \mathbb{F}_{q^m}^n \longrightarrow \mathbb{F}_{q^m}^n $ in the following way. For every $ \mathbf{c} \in \mathbb{F}_{q^m}^m $ and every $ \mathbf{d} \in \mathbb{F}_{q^m}^n $, we define
$$ \mathbf{c} \star \mathbf{d} = \sum_{i=1}^m c_i \mathbf{d}_i, $$
where $ \mathbf{d} = \sum_{i=1}^m \alpha_i \mathbf{d}_i $ and $ \mathbf{d}_i \in \mathbb{F}_q^n $, for all $ i $, and $ \mathbf{c} = (c_1, c_2, \ldots, c_m) $. Note that the second argument of $ \star $ and its codomain are the same, whereas its first argument is different if $ m \neq n $.

On the other hand, given a map $ \varphi : \mathbb{F}_{q^m}^n \longrightarrow \mathbb{F}_{q^m}^m $, we define the product $ \star_\varphi : \mathbb{F}_{q^m}^n \times \mathbb{F}_{q^m}^n \longrightarrow \mathbb{F}_{q^m}^n $ in the following way. For every $ \mathbf{c}, \mathbf{d} \in \mathbb{F}_{q^m}^n $, we define
$$ \mathbf{c} \star_\varphi \mathbf{d} = \varphi(\mathbf{c}) \star \mathbf{d} = \sum_{i=1}^m \varphi(\mathbf{c})_i \mathbf{d}_i, $$
where $ \mathbf{d} = \sum_{i=1}^m \alpha_i \mathbf{d}_i $ and $ \mathbf{d}_i \in \mathbb{F}_q^n $, for all $ i $, and $ \varphi(\mathbf{c}) = (\varphi(\mathbf{c})_1, \varphi(\mathbf{c})_2, \ldots, \varphi(\mathbf{c})_m) $.
\end{definition}

\begin{remark} \label{bilinearity}
The following basic properties of the previous products hold:
\begin{enumerate}
\item
The product $ \star $ depends on the choice of the basis $ \alpha_1, \alpha_2, \ldots, \alpha_m $ of $ \mathbb{F}_{q^m} $ over $ \mathbb{F}_q^n $, whereas the coordinatewise product $ \ast $ does not.
\item
The product $ \star $ is $ \mathbb{F}_{q^m} $-linear in the first component and $ \mathbb{F}_q $-linear in the second component.
\item
If $ \varphi $ is $ \mathbb{F}_q $-linear, then the product $ \star_\varphi $ is $ \mathbb{F}_q $-bilinear.
\item
On the other hand, if $ \varphi $ is $ \mathbb{F}_{q^m} $-linear, then the product $ \star_\varphi $ is $ \mathbb{F}_{q^m} $-linear in the first component and $ \mathbb{F}_q $-linear in the second component.
\end{enumerate}
\end{remark}

It is of interest to see if two maps give the same product:

\begin{lemma} \label{unique map}
Given maps $ \varphi, \psi : \mathbb{F}_{q^m}^n \longrightarrow \mathbb{F}_{q^m}^m $, it holds that $ \star_\varphi = \star_\psi $ if, and only if, $ \varphi = \psi $.
\end{lemma}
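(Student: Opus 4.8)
The statement is an equivalence, and the nontrivial direction is: if $\star_\varphi = \star_\psi$ then $\varphi = \psi$. The converse is immediate from the definition, since $\mathbf{c} \star_\varphi \mathbf{d} = \varphi(\mathbf{c}) \star \mathbf{d}$ depends on $\varphi$ only through the value $\varphi(\mathbf{c})$. So the plan is to show that, for a fixed $\mathbf{c} \in \mathbb{F}_{q^m}^n$, the vector $\varphi(\mathbf{c}) \in \mathbb{F}_{q^m}^m$ is uniquely determined by the map $\mathbf{d} \mapsto \varphi(\mathbf{c}) \star \mathbf{d}$. Equivalently, it suffices to prove that $\star : \mathbb{F}_{q^m}^m \times \mathbb{F}_{q^m}^n \to \mathbb{F}_{q^m}^n$ is nondegenerate in its first argument: if $\mathbf{a} \star \mathbf{d} = 0$ for all $\mathbf{d} \in \mathbb{F}_{q^m}^n$, then $\mathbf{a} = 0$.

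To see the nondegeneracy, write $\mathbf{a} = (a_1, \ldots, a_m)$ and recall that $\mathbf{a} \star \mathbf{d} = \sum_{i=1}^m a_i \mathbf{d}_i$ where $\mathbf{d} = \sum_{i=1}^m \alpha_i \mathbf{d}_i$ is the decomposition of $\mathbf{d}$ with $\mathbf{d}_i \in \mathbb{F}_q^n$. The key point is that the components $\mathbf{d}_1, \ldots, \mathbf{d}_m$ can be chosen independently: since $\alpha_1, \ldots, \alpha_m$ is a basis of $\mathbb{F}_{q^m}$ over $\mathbb{F}_q$, for any prescribed $\mathbf{d}_1, \ldots, \mathbf{d}_m \in \mathbb{F}_q^n$ there is a (unique) $\mathbf{d} \in \mathbb{F}_{q^m}^n$ realizing them. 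In particular, fix an index $j$ and any nonzero $\mathbf{v} \in \mathbb{F}_q^n$, and take $\mathbf{d}$ with $\mathbf{d}_j = \mathbf{v}$ and $\mathbf{d}_i = 0$ for $i \neq j$; then $\mathbf{a} \star \mathbf{d} = a_j \mathbf{v}$. If this vanishes and $\mathbf{v} \neq 0$, then $a_j = 0$. Ranging over $j = 1, \ldots, m$ gives $\mathbf{a} = 0$.

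Applying this with $\mathbf{a} = \varphi(\mathbf{c}) - \psi(\mathbf{c})$: using the $\mathbb{F}_q$-linearity of $\star$ in the second argument (Remark~\ref{bilinearity}(2)), for every $\mathbf{d}$ we have $(\varphi(\mathbf{c}) - \psi(\mathbf{c})) \star \mathbf{d} = \varphi(\mathbf{c}) \star \mathbf{d} - \psi(\mathbf{c}) \star \mathbf{d} = \mathbf{c} \star_\varphi \mathbf{d} - \mathbf{c} \star_\psi \mathbf{d} = 0$ by hypothesis. Hence $\varphi(\mathbf{c}) = \psi(\mathbf{c})$, and since $\mathbf{c}$ was arbitrary, $\varphi = \psi$. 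There is no serious obstacle here; the only thing to be careful about is that $\varphi$ and $\psi$ are merely maps (not assumed linear), so the argument must work pointwise in $\mathbf{c}$ rather than by comparing the induced bilinear forms — which is exactly how the outline above proceeds.
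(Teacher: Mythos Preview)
Your proof is correct and follows essentially the same approach as the paper: both arguments choose, for each index $i$, a test vector $\mathbf{d}$ with $\mathbf{d}_i$ nonzero and all other $\mathbf{d}_j = \mathbf{0}$ to isolate the $i$-th coordinate of $\varphi(\mathbf{c})$ (the paper simply takes $\mathbf{d}_i = \mathbf{e}_1$ and reads off $\varphi(\mathbf{c})_i = \psi(\mathbf{c})_i$ directly, whereas you phrase it via nondegeneracy applied to the difference $\varphi(\mathbf{c}) - \psi(\mathbf{c})$). One small slip: the identity $(\varphi(\mathbf{c}) - \psi(\mathbf{c})) \star \mathbf{d} = \varphi(\mathbf{c}) \star \mathbf{d} - \psi(\mathbf{c}) \star \mathbf{d}$ uses linearity of $\star$ in the \emph{first} argument, not the second---though Remark~\ref{bilinearity}(2) does cover this as well.
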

\begin{proof}
Fix $ i $ and take $ \mathbf{d} \in \mathbb{F}_{q^m}^n $ such that $ \mathbf{d}_i = \mathbf{e}_1 $, the first vector in the canonical basis of $ \mathbb{F}_q^n $ and $ \mathbf{d}_j = \mathbf{0} $, for $ j \neq i $. Since $ \mathbf{c} \star_\varphi \mathbf{d} = \mathbf{c} \star_\psi \mathbf{d} $, it follows that $ \varphi(\mathbf{c})_i = \psi (\mathbf{c})_i $. This is valid for an arbitrary $ i $, hence $ \varphi (\mathbf{c}) = \psi (\mathbf{c}) $, for any $ \mathbf{c} \in \mathbb{F}_{q^m}^n $, which implies that $ \varphi = \psi $. The reversed implication is trivial.
\end{proof}

One of the most important properties of the coordinatewise product $ \ast $ is that it preserves multiplications of polynomials after evaluation.

We now define a natural product that will preserve symbolic multiplications of linearized polynomials after evaluation. Recall that a $ q $-linearized polynomial over $ \mathbb{F}_{q^m} $ is a polynomial of the form
$$ F = a_0 x + a_1 x^{[1]} + \cdots + a_d x^{[d]}, $$
where $ a_0,a_1, \ldots, a_d \in \mathbb{F}_{q^m} $ and $ [i] = q^i $, for all $ i $. These polynomials induce $ \mathbb{F}_q $-linear maps in any extension field of $ \mathbb{F}_{q^m} $. We start by the following interpolation lemma, where we denote by $ \mathcal{L}_q\mathbb{F}_{q^m}[x] $ the set of $ q $-linearized polynomials over $ \mathbb{F}_{q^m} $. 

\begin{lemma} \label{interpolation}
If $ n \leq m $, and $ \mathbf{c} \in \mathbb{F}_{q^m}^n $, there exists a unique $ q $-polynomial $ F \in \mathcal{L}_q\mathbb{F}_{q^m}[x] $ of degree less than $ q^n = [n] $ such that $ F(\alpha_i) = c_i $, for all $ i =1,2, \ldots, n $.
\end{lemma}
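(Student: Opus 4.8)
The plan is to recognize the interpolation conditions as a linear system governed by a Moore matrix, and to exploit that the $\alpha_i$ are $\mathbb{F}_q$-linearly independent. First I would fix the shape of the candidate polynomial: a $q$-linearized polynomial of degree less than $[n] = q^n$ is exactly one of the form $F = a_0 x + a_1 x^{[1]} + \cdots + a_{n-1} x^{[n-1]}$ with $a_0, \ldots, a_{n-1} \in \mathbb{F}_{q^m}$, since $x^{[n-1]} = x^{q^{n-1}}$ has degree below $q^n$ while $x^{[n]}$ does not. Thus the set of admissible $F$ is an $\mathbb{F}_{q^m}$-vector space of dimension $n$, and the evaluation map $\mathrm{ev} \colon F \mapsto (F(\alpha_1), F(\alpha_2), \ldots, F(\alpha_n)) \in \mathbb{F}_{q^m}^n$ is $\mathbb{F}_{q^m}$-linear between two $\mathbb{F}_{q^m}$-spaces of the same dimension $n$. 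Hence existence and uniqueness of $F$ both follow at once if I show $\mathrm{ev}$ is injective (equivalently, that the associated matrix is nonsingular).

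For injectivity, suppose $F \neq 0$ lies in $\ker(\mathrm{ev})$. Since $n \le m$, the elements $\alpha_1, \ldots, \alpha_n$ are part of the fixed $\mathbb{F}_q$-basis $\alpha_1, \ldots, \alpha_m$ of $\mathbb{F}_{q^m}$, hence $\mathbb{F}_q$-linearly independent, so the subspace $V = \langle \alpha_1, \ldots, \alpha_n \rangle_{\mathbb{F}_q} \subseteq \mathbb{F}_{q^m}$ has exactly $q^n$ elements. Because $F$ is $q$-linearized, the map it induces on $\mathbb{F}_{q^m}$ is $\mathbb{F}_q$-linear, so its set of roots is an $\mathbb{F}_q$-subspace of $\mathbb{F}_{q^m}$; as it contains $\alpha_1, \ldots, \alpha_n$ it contains all of $V$, giving at least $q^n$ roots. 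But a nonzero polynomial $F$ of degree less than $q^n$ has fewer than $q^n$ roots, a contradiction. Therefore $\mathrm{ev}$ is injective, hence bijective, which proves the claim.

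Alternatively, and equivalently, one writes the conditions $F(\alpha_j) = c_j$ explicitly as $\sum_{i=0}^{n-1} a_i \alpha_j^{[i]} = c_j$ for $j = 1, \ldots, n$, i.e. $(a_0, a_1, \ldots, a_{n-1}) W = (c_1, \ldots, c_n)$ with $W = (\alpha_j^{[i]})_{0 \le i \le n-1,\, 1 \le j \le n}$ the Moore matrix of $\alpha_1, \ldots, \alpha_n$; the classical Moore determinant formula shows $W$ is invertible precisely because $\alpha_1, \ldots, \alpha_n$ are $\mathbb{F}_q$-linearly independent, so $(a_0, \ldots, a_{n-1}) = (c_1, \ldots, c_n) W^{-1}$ is the unique solution. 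I expect no serious obstacle here: the entire content is the $\mathbb{F}_q$-linear independence of $\alpha_1, \ldots, \alpha_n$ (which is where the hypothesis $n \le m$ enters) together with the standard root-count bound for linearized polynomials; the remainder is bookkeeping about dimensions.
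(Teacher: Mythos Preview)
Your proposal is correct and follows essentially the same approach as the paper: both set up the $\mathbb{F}_{q^m}$-linear evaluation map on the $n$-dimensional space of $q$-linearized polynomials of degree below $q^n$, and establish injectivity by observing that a polynomial vanishing on the $\mathbb{F}_q$-independent $\alpha_1,\ldots,\alpha_n$ vanishes on their entire $\mathbb{F}_q$-span, hence has $q^n$ roots and must be zero. Your Moore-matrix alternative is a legitimate equivalent formulation that the paper does not mention, but it is not a genuinely different route.
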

\begin{proof}
Consider the evaluation map $ {\rm ev}_{\boldsymbol\alpha} : \mathcal{L}_q\mathbb{F}_{q^m}[x] \longrightarrow \mathbb{F}_{q^m}^n $, defined by $ {\rm ev}_{\boldsymbol\alpha}(F) = (F(\alpha_1), F(\alpha_2), \ldots, F(\alpha_n)) $.

Since it is $ \mathbb{F}_{q^m} $-linear and the $ \mathbb{F}_{q^m} $-linear space of $ q $-linearized polynomials of degree less than $ [n] $ has dimension $ n $, it is enough to prove that, if $ F(\alpha_i) = 0 $, for $ i=1,2, \ldots, n $, then $ F = 0 $.

By the linearity of $ F $, we have that $ F(\sum_i \lambda_i\alpha_i) = \sum_i \lambda_i F(\alpha_i) = 0 $, for every $ \lambda_1, \lambda_2, \ldots, \lambda_n \in \mathbb{F}_q $. Therefore, $ F $ has $ q^n $ different roots and degree less than $ q^n $, hence $ F = 0 $, and we are done.
\end{proof}

From now on, if $ n \leq m $, we denote by $ F_\mathbf{c} $ the $ q $-linearized polynomial of degree less than $ [n] $ corresponding to $ \mathbf{c} \in \mathbb{F}_{q^m}^n $. In the following remark we show how to perform interpolation using symbolic multiplications of linearized polynomials. Recall that the symbolic multiplication of two linearized polynomials $ F, G \in \mathcal{L}_q\mathbb{F}_{q^m}[x] $ is defined as their composition $ F \circ G $, which lies in $ \mathcal{L}_q\mathbb{F}_{q^m}[x] $.

\begin{remark}
Interpolation as presented in the previous lemma can be performed as follows. First, we see that the map $ \mathbf{c} \in \mathbb{F}_{q^m}^n \mapsto F_\mathbf{c} $ is $ \mathbb{F}_{q^m} $-linear. Therefore, 
$$ F_\mathbf{c} = \sum_{i=1}^n c_i F_{\mathbf{e}_i}, $$
where $ \mathbf{e}_i = (0, \ldots, 0, 1, 0, \ldots, 0) $ is the $ i $-th vector in the canonical basis of $ \mathbb{F}_{q^m}^n $ over $ \mathbb{F}_{q^m} $, for $ i = 1,2, \ldots, n $. On the other hand, it holds that
$$ F_{\mathbf{e}_i} = \frac{G_i}{G_i(\alpha_i)}, \quad \textrm{where} \quad G_i = \prod_{j \neq i} \prod_{\beta \in \langle \alpha_j \rangle} (x-\beta), $$
and where $ 1 \leq j \leq n $. The polynomial $ G_i / G_i(\alpha_i) $ in this expression is well-defined since $ \alpha_i $ does not belong to the $ \mathbb{F}_q $-linear vector space generated by the elements $ \alpha_j $, for $ j \neq i $, and the expression in the numerator is a $ q $-linearized polynomial by \cite[Theorem 3.52]{lidl} and has degree less than $ q^n $. However, the complexity of constructing $ G_i $ in this way is of $ O(q^{n-1}) $ conventional multiplications. The following expression shows how to compute $ G_i $ with $ O(n-1) $ symbolic multiplications:
$$ G_i = L_{i,n} \circ L_{i,n-1} \circ \cdots \circ \widehat{L}_{i,i} \circ \cdots \circ L_{i,2} \circ L_{i,1}, $$
where $ L_{i,1} = x^{[1]} - ( \alpha_1^{[1]} / \alpha_1 )x $ and, for $ j = 2,3, \ldots, n $, 
$$ L_{i,j} = x^{[1]} - (\widetilde{L}_{i,j-1} (\alpha_j)^{[1]} / \widetilde{L}_{i,j-1} (\alpha_j) )x $$ 
and $ \widetilde{L}_{i,j} = L_{i,j-1} \circ \cdots \circ \widehat{L}_{i,i} \circ \cdots \circ L_{i,2} \circ L_{i,1} $. The notation $ \widehat{L}_{i,i} $ means that the polynomial $ L_{i,i} $ is omitted.
\end{remark}

On the other hand, for $ \mathbf{c} \in \mathbb{F}_{q^m}^n $ and $ n \leq m $, we define the vector $ \varphi_n(\mathbf{c}) \in \mathbb{F}_{q^m}^m $ as $ \varphi_n(\mathbf{c})_i = F_\mathbf{c}(\alpha_i) $, for $ i = 1,2, \ldots, m $. If $ n \geq m $, we define $ \varphi_n(\mathbf{c}) = (c_1,c_2, \ldots, c_m) $. 

Note that if $ n=m $, both definitions lead to $ \varphi_n(\mathbf{c}) = \mathbf{c} $. Also note that $ \varphi_n $ depends on the basis $ \alpha_1, \alpha_2, \ldots, \alpha_m $ for $ n < m $, while it does not for $ n \geq m $.

\begin{lemma} \label{phi is linear}
For any values of $ m $ and $ n $, the map $ \varphi_n : \mathbb{F}_{q^m}^n \longrightarrow \mathbb{F}_{q^m}^m $ is $ \mathbb{F}_{q^m} $-linear.
\end{lemma}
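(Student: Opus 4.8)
The plan is to split into the two cases $ n \geq m $ and $ n \leq m $, which overlap at $ n = m $ where both definitions of $ \varphi_n $ agree. For $ n \geq m $ the map $ \varphi_n(\mathbf{c}) = (c_1, c_2, \ldots, c_m) $ is just the projection of $ \mathbb{F}_{q^m}^n $ onto its first $ m $ coordinates, which is visibly $ \mathbb{F}_{q^m} $-linear, so nothing remains to check in that range.

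For $ n \leq m $, the idea is to factor $ \varphi_n $ as a composition of two $ \mathbb{F}_{q^m} $-linear maps. First consider $ \iota : \mathbb{F}_{q^m}^n \longrightarrow \mathcal{L}_q\mathbb{F}_{q^m}[x] $ sending $ \mathbf{c} $ to the interpolating $ q $-linearized polynomial $ F_\mathbf{c} $ of degree less than $ [n] $ provided by Lemma \ref{interpolation}. That lemma says the evaluation map $ {\rm ev}_{\boldsymbol\alpha} $, restricted to $ q $-linearized polynomials of degree less than $ [n] $, is an $ \mathbb{F}_{q^m} $-linear isomorphism onto $ \mathbb{F}_{q^m}^n $; since $ \iota $ is precisely its inverse, $ \iota $ is $ \mathbb{F}_{q^m} $-linear. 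Next consider the evaluation map $ \varepsilon : \mathcal{L}_q\mathbb{F}_{q^m}[x] \longrightarrow \mathbb{F}_{q^m}^m $ given by $ \varepsilon(F) = (F(\alpha_1), F(\alpha_2), \ldots, F(\alpha_m)) $; this is $ \mathbb{F}_{q^m} $-linear because for each fixed $ \beta \in \mathbb{F}_{q^m} $ the assignment $ F \mapsto F(\beta) $ is $ \mathbb{F}_{q^m} $-linear in the coefficients of $ F $. By the very definition of $ \varphi_n $ we have $ \varphi_n = \varepsilon \circ \iota $, and a composition of $ \mathbb{F}_{q^m} $-linear maps is $ \mathbb{F}_{q^m} $-linear.

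There is no genuine obstacle; the only point that calls for a moment's care is keeping straight the two notions of linearity attached to a $ q $-linearized polynomial: such a polynomial induces an $ \mathbb{F}_q $-linear map on field elements (linearity in the argument, which is what the prefix "$ q $-linearized" refers to), whereas passing from a coefficient vector to the corresponding polynomial, and evaluating at a fixed point, are $ \mathbb{F}_{q^m} $-linear operations (linearity in the coefficients). It is the latter that is being invoked here, and once the factorization $ \varphi_n = \varepsilon \circ \iota $ is in place together with the isomorphism statement of Lemma \ref{interpolation}, the conclusion is immediate.
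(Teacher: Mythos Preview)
Your proof is correct and takes essentially the same approach as the paper. The paper's argument is a one-line version of yours: it simply notes that $F_{\gamma \mathbf{c} + \delta \mathbf{d}} = \gamma F_{\mathbf{c}} + \delta F_{\mathbf{d}}$ (which is precisely the $\mathbb{F}_{q^m}$-linearity of your map $\iota$), leaving the linearity of evaluation implicit, whereas you make the factorization $\varphi_n = \varepsilon \circ \iota$ explicit and justify each factor separately.
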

\begin{proof}
For $ n \geq m $, it is clear. For $ n \leq m $, it is enough to note that $ F_{\gamma \mathbf{c} + \delta \mathbf{d}} = \gamma F_\mathbf{c} + \delta F_\mathbf{d} $ as in the remark above, for all $ \gamma, \delta \in \mathbb{F}_{q^m} $ and all $ \mathbf{c}, \mathbf{d} \in \mathbb{F}_{q^m}^n $.
\end{proof}

We will use the notation $ \star = \star_{\varphi_n} : \mathbb{F}_{q^m}^n \times \mathbb{F}_{q^m}^n \longrightarrow \mathbb{F}_{q^m}^n $. When $ m=n $, this coincides with the product $ \star $ in Definition \ref{definition product}, whereas if $ m \neq n $, then there is no confusion with the product $ \star $ in Definition \ref{definition product}, since the first argument is different. Hence the meaning of $ \star $ is clear from the context.

The interesting property of the product $ \star $ is that it preserves symbolic multiplications of linearized polynomials, as we will see now, and in the case $ n \leq m $, it is the unique product with this property. 

For a vector $ \mathbf{b} \in \mathbb{F}_{q^m}^n $, we will define the evaluation map
$$ {\rm ev}_\mathbf{b} : \mathcal{L}_q\mathbb{F}_{q^m}[x] \longrightarrow \mathbb{F}_{q^m}^n $$
by $ {\rm ev}_\mathbf{b}(F) = (F(b_1), F(b_2), \ldots, F(b_n)) $.

From now on, we denote $ \boldsymbol\alpha_n = (\alpha_1, \alpha_2, \ldots, \alpha_n) $ if $ n \leq m $, and we complete the vector with other elements if $ n > m $, $ \boldsymbol\alpha_n = (\alpha_1, \alpha_2, \ldots, \alpha_m, \gamma_1, \gamma_2, \ldots, \gamma_n) $. We will also denote $ \boldsymbol\alpha = (\alpha_1, \alpha_2, \ldots, \alpha_m) $. Observe that $ \varphi_n(\boldsymbol\alpha_n) = \boldsymbol\alpha $ in all cases, and moreover, $ \varphi_n(\boldsymbol\alpha_n^{[j]}) = \boldsymbol\alpha^{[j]} $, if $ j < n $.

\begin{proposition} \label{evaluation properties}
The following properties hold:
\begin{enumerate}
\item
$ \boldsymbol\alpha^{[j]} \star \mathbf{c} = \mathbf{c}^{[j]} $, for all $ \mathbf{c} \in \mathbb{F}_{q^m}^n $ and all $ j $. In particular, 
$$ {\rm ev}_\mathbf{b}(F \circ G) = {\rm ev}_{\boldsymbol\alpha}(F) \star {\rm ev}_\mathbf{b}(G), $$ 
for all $ \mathbf{b} \in \mathbb{F}_{q^m}^n $ and all $ F,G \in \mathcal{L}_q\mathbb{F}_{q^m}[x] $.
\item
$ \boldsymbol\alpha_n^{[j]} \star \mathbf{c} = \mathbf{c}^{[j]} $, for all $ \mathbf{c} \in \mathbb{F}_{q^m}^n $ and all $ j < n $. In particular, 
$$ {\rm ev}_\mathbf{b}(F \circ G) = {\rm ev}_{\boldsymbol\alpha_n}(F) \star {\rm ev}_\mathbf{b}(G), $$ 
for all $ \mathbf{b} \in \mathbb{F}_{q^m}^n $ and all $ F,G \in \mathcal{L}_q\mathbb{F}_{q^m}[x] $, where $ F $ has degree less than $ [n] $.
\item
If $ n \leq m $, then $ \star $ is associative, that is, $ \mathbf{a} \star (\mathbf{b} \star \mathbf{c}) = (\mathbf{a} \star \mathbf{b}) \star \mathbf{c} $, for all $ \mathbf{a}, \mathbf{b}, \mathbf{c} \in \mathbb{F}_{q^m}^n $.
\end{enumerate}
Moreover, if $ n \leq m $, and if $ \odot $ is another product that satisfies item 2 for $ \mathbf{b} = \boldsymbol\alpha_n $ (or item 1 for $ \mathbf{b} = \boldsymbol\alpha $), then $ \odot = \star $. In particular, by Lemma \ref{unique map}, if $ \star_\varphi $ satisfies this property, then $ \varphi = \varphi_n $.
\end{proposition}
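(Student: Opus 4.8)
The plan is to establish items 1 and 2 by a short direct computation for the ``scalar power'' identities followed by a linearity argument for the identities involving $ {\rm ev} $; to reduce item 3 to associativity of symbolic composition of linearized polynomials; and to obtain the uniqueness assertion by representing each vector of $ \mathbb{F}_{q^m}^n $ through its interpolating $ q $-polynomial of degree $ < [n] $.

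For item 1, I would write $ \mathbf{c} = \sum_{i=1}^m \alpha_i \mathbf{c}_i $ with $ \mathbf{c}_i \in \mathbb{F}_q^n $; since the Frobenius $ x \mapsto x^{[j]} $ fixes $ \mathbb{F}_q $ entrywise, $ \mathbf{c}^{[j]} = \sum_i \alpha_i^{[j]} \mathbf{c}_i^{[j]} = \sum_i \alpha_i^{[j]} \mathbf{c}_i = \boldsymbol\alpha^{[j]} \star \mathbf{c} $ by Definition~\ref{definition product}. Item 2 follows at once from the observation recorded just before the statement that $ \varphi_n(\boldsymbol\alpha_n^{[j]}) = \boldsymbol\alpha^{[j]} $ for $ j < n $, since then $ \boldsymbol\alpha_n^{[j]} \star \mathbf{c} = \varphi_n(\boldsymbol\alpha_n^{[j]}) \star \mathbf{c} = \boldsymbol\alpha^{[j]} \star \mathbf{c} = \mathbf{c}^{[j]} $. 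For the $ {\rm ev} $-identities I would expand $ F = \sum_j a_j x^{[j]} $ (with the sum restricted to $ j < n $ in item 2, as $ \deg F < [n] $), so that $ {\rm ev}_{\boldsymbol\alpha}(F) = \sum_j a_j \boldsymbol\alpha^{[j]} $ and $ {\rm ev}_{\boldsymbol\alpha_n}(F) = \sum_j a_j \boldsymbol\alpha_n^{[j]} $. Using $ \mathbb{F}_{q^m} $-linearity of the relevant product in its first argument (Remark~\ref{bilinearity}, combined with Lemma~\ref{phi is linear} for item 2) and the identity just proved, one gets $ {\rm ev}_{\boldsymbol\alpha}(F) \star {\rm ev}_{\mathbf{b}}(G) = \sum_j a_j \big( \boldsymbol\alpha^{[j]} \star {\rm ev}_{\mathbf{b}}(G) \big) = \sum_j a_j\, {\rm ev}_{\mathbf{b}}(G)^{[j]} $, and likewise with $ \boldsymbol\alpha_n $. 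Finally, $ {\rm ev}_{\mathbf{b}}(G)^{[j]} = {\rm ev}_{\mathbf{b}}(x^{[j]} \circ G) $ entrywise, and since both symbolic composition (in its left slot) and $ {\rm ev}_{\mathbf{b}} $ are $ \mathbb{F}_{q^m} $-linear, the sum equals $ {\rm ev}_{\mathbf{b}}\big( (\sum_j a_j x^{[j]}) \circ G \big) = {\rm ev}_{\mathbf{b}}(F \circ G) $, as claimed.

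For item 3, assume $ n \leq m $, and denote by $ F_\mathbf{u} $ the interpolating $ q $-polynomial of $ \mathbf{u} \in \mathbb{F}_{q^m}^n $ of degree $ < [n] $ (Lemma~\ref{interpolation}). I would first record the formula $ \mathbf{u} \star \mathbf{v} = {\rm ev}_{\mathbf{v}}(F_\mathbf{u}) = (F_\mathbf{u}(v_1), \ldots, F_\mathbf{u}(v_n)) $, which drops out either from the definition of $ \star_{\varphi_n} $ or from item 2 applied with $ G = x $, $ F = F_\mathbf{u} $ and evaluation point $ \mathbf{v} $. Iterating it yields $ \mathbf{a} \star (\mathbf{b} \star \mathbf{c}) = {\rm ev}_{\mathbf{c}}(F_\mathbf{a} \circ F_\mathbf{b}) $ and $ (\mathbf{a} \star \mathbf{b}) \star \mathbf{c} = {\rm ev}_{\mathbf{c}}(F_{\mathbf{a} \star \mathbf{b}}) $, so associativity is equivalent to $ F_\mathbf{a} \circ F_\mathbf{b} $ and $ F_{\mathbf{a} \star \mathbf{b}} $ inducing the same $ \mathbb{F}_q $-linear endomorphism of $ \mathbb{F}_{q^m} $. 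By construction $ F_{\mathbf{a} \star \mathbf{b}} $ is the unique $ q $-polynomial of degree $ < [n] $ agreeing with $ F_\mathbf{a} \circ F_\mathbf{b} $ at $ \alpha_1, \ldots, \alpha_n $, hence on the $ \mathbb{F}_q $-span $ \langle \alpha_1, \ldots, \alpha_n \rangle_{\mathbb{F}_q} $; promoting this to agreement on all of $ \mathbb{F}_{q^m} $ is the step I expect to be the main obstacle, and I would attack it by symbolic division of $ F_\mathbf{a} \circ F_\mathbf{b} $ by the monic $ q $-linearized polynomial of degree $ [n] $ whose root set is exactly that span (identifying the remainder with $ F_{\mathbf{a} \star \mathbf{b}} $) while keeping track of $ \deg(F_\mathbf{a} \circ F_\mathbf{b}) $; note that when $ n = m $ the span already equals $ \mathbb{F}_{q^m} $ and there is nothing left to do.

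For the uniqueness assertion, suppose $ \odot $ satisfies item 2 with $ \mathbf{b} = \boldsymbol\alpha_n $, i.e.\ $ {\rm ev}_{\boldsymbol\alpha_n}(F \circ G) = {\rm ev}_{\boldsymbol\alpha_n}(F) \odot {\rm ev}_{\boldsymbol\alpha_n}(G) $ whenever $ \deg F < [n] $. Given $ \mathbf{u}, \mathbf{v} \in \mathbb{F}_{q^m}^n $, I would apply this with $ F = F_\mathbf{u} $ (of degree $ < [n] $) and $ G = F_\mathbf{v} $ to obtain $ \mathbf{u} \odot \mathbf{v} = {\rm ev}_{\boldsymbol\alpha_n}(F_\mathbf{u} \circ F_\mathbf{v}) $; but item 2 with the same data gives $ \mathbf{u} \star \mathbf{v} = {\rm ev}_{\boldsymbol\alpha_n}(F_\mathbf{u} \circ F_\mathbf{v}) $ as well, so $ \odot = \star $ (and the variant phrased via item 1 with $ \mathbf{b} = \boldsymbol\alpha $ is entirely analogous, using that $ \boldsymbol\alpha $ is a basis of $ \mathbb{F}_{q^m} $). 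In particular, if $ \star_\varphi $ satisfies this property then $ \star_\varphi = \star = \star_{\varphi_n} $, so $ \varphi = \varphi_n $ by Lemma~\ref{unique map}.
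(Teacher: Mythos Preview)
Your treatment of items 1, 2, and the uniqueness claim matches the paper's proof essentially step for step: the same Frobenius computation, the same reduction via $\varphi_n(\boldsymbol\alpha_n^{[j]})=\boldsymbol\alpha^{[j]}$, and the same surjectivity-of-${\rm ev}_{\boldsymbol\alpha_n}$ argument for uniqueness.

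For item 3, however, you have correctly located a genuine obstacle that the paper's one-line proof glosses over, and your proposed symbolic-division attack cannot remove it. Writing $F_\mathbf{a}\circ F_\mathbf{b}=Q\circ L + F_{\mathbf{a}\star\mathbf{b}}$ with $L$ the $q$-polynomial of degree $[n]$ vanishing on $\langle\alpha_1,\ldots,\alpha_n\rangle_{\mathbb{F}_q}$ shows precisely that $F_\mathbf{a}\circ F_\mathbf{b}$ and $F_{\mathbf{a}\star\mathbf{b}}$ differ by $Q\circ L$, which vanishes on that span but generally \emph{not} on all of $\mathbb{F}_{q^m}$ when $n<m$ and $Q\neq 0$. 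So the step ``promote agreement on the span to agreement on $\mathbb{F}_{q^m}$'' is not just technically delicate---it is false. Concretely, take $q=2$, $m=3$, $n=2$, basis $1,\omega,\omega^2$ of $\mathbb{F}_8$ with $\omega^3=\omega+1$, and $\mathbf{a}=\mathbf{b}=\boldsymbol\alpha_2^{[1]}=(1,\omega^2)$, $\mathbf{c}=(\omega^2,0)$. Then $F_\mathbf{a}=F_\mathbf{b}=x^2$, $\mathbf{a}\star\mathbf{b}=(1,\omega^4)$, $F_{\mathbf{a}\star\mathbf{b}}=\omega^4 x+\omega^5 x^2$, and one computes $(\mathbf{a}\star\mathbf{b})\star\mathbf{c}=(1,0)$ while $\mathbf{a}\star(\mathbf{b}\star\mathbf{c})=(\omega,0)$. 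Thus associativity of $\star_{\varphi_n}$ fails for $n<m$; only the case $n=m$ (where, as you note, the span is already $\mathbb{F}_{q^m}$) survives. The paper's appeal to surjectivity of ${\rm ev}_{\boldsymbol\alpha_n}$ plus associativity of $\circ$ tacitly needs item~2 for a polynomial $F\circ G$ of degree possibly $\geq[n]$ in the left slot, which is exactly what is unavailable.
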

\begin{proof}
\begin{enumerate}
\item
The first part follows from the following chain of equalities:
$$ \boldsymbol\alpha^{[j]} \star \mathbf{c} = \sum_{i=1}^m \alpha_i^{[j]} \mathbf{c}_i = \left( \sum_{i=1}^m \alpha_i \mathbf{c}_i \right)^{[j]} = \mathbf{c}^{[j]}. $$
The second part follows from the first part, since $ \boldsymbol\alpha^{[j]} = {\rm ev}_{\boldsymbol\alpha}(x^{[j]}) $ and $ \star $ is $ \mathbb{F}_{q^m} $-linear in the first component, by Remark \ref{bilinearity} and Lemma \ref{phi is linear}.
\item
It follows from item 1, since $ \varphi_n(\boldsymbol\alpha_n^{[j]}) = \boldsymbol\alpha^{[j]} $, if $ j < n $.
\item
It follows from item 2, since $ {\rm ev}_{\boldsymbol\alpha_n} $ is surjective (by Lemma \ref{interpolation}) and symbolic multiplication of linearized polynomials is associative.
\end{enumerate}
If $ n \leq m $, the last part of the proposition follows from the fact that $ {\rm ev}_{\boldsymbol\alpha_n} $ (or $ {\rm ev}_{\boldsymbol\alpha} $) is surjective, which follows from Lemma \ref{interpolation}. \\
\end{proof}

We will now give a matrix representation of the products $ \star_\varphi $, and show that the product $ \star $ actually extends the product $ \ast $. For that purpose, we define the ``extension'' map $ E : \mathbb{F}_q^n \longrightarrow \mathbb{F}_{q^n}^n $ by $ E = M^{-1} \circ D $, which is $ \mathbb{F}_q $-linear and one to one. In other words,
\begin{equation}
E(\mathbf{c}) = (\alpha_1 c_1, \alpha_2 c_2, \ldots, \alpha_n c_n),
\label{extension map}
\end{equation}
for all $ \mathbf{c} \in \mathbb{F}_q^n $, which satisfies that $ {\rm wt_R}(E(\mathbf{c})) = {\rm wt_H}(\mathbf{c}) $. We gather in the next proposition the relations between the products $ \star_\varphi $ and $ \ast $, and the maps $ M, D $ and $ E $. The proof is straightforward.

\begin{proposition} \label{connection vector products}
For all values of $ m $ and $ n $, all maps $ \varphi $ and all $ \mathbf{c}^\prime \in \mathbb{F}_{q^m}^m $ and $ \mathbf{c}, \mathbf{d} \in \mathbb{F}_{q^m}^n $, we have that
$$ M(\mathbf{c}^\prime \star \mathbf{d}) = M(\mathbf{c}^\prime) M(\mathbf{d}) \quad \textrm{and} \quad M(\mathbf{c} \star_\varphi \mathbf{d}) = M(\varphi(\mathbf{c})) M(\mathbf{d}). $$
On the other hand, if $ m = n $ and $ \mathbf{a}, \mathbf{b} \in \mathbb{F}_q^n $, then
$$ D(\mathbf{a} \ast \mathbf{b}) = D(\mathbf{a}) D(\mathbf{b}) \quad \textrm{and} \quad E(\mathbf{a} \ast \mathbf{b}) = E(\mathbf{a}) \star E(\mathbf{b}). $$
\end{proposition}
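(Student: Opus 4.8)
The plan is to prove the four identities by simply unwinding the definitions; the only structural fact used is that $ M $ is an $ \mathbb{F}_q $-linear isomorphism (and $ D $, $ E $ are $ \mathbb{F}_q $-linear and injective), so that two vectors coincide as soon as their images under $ M $ do.

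First I would establish $ M(\mathbf{c}^\prime \star \mathbf{d}) = M(\mathbf{c}^\prime) M(\mathbf{d}) $. Write $ \mathbf{d} = \sum_{i=1}^m \alpha_i \mathbf{d}_i $ with $ \mathbf{d}_i = (d_{i,1}, \ldots, d_{i,n}) \in \mathbb{F}_q^n $, so that $ M(\mathbf{d}) = (d_{i,j}) $, and write $ c^\prime_i = \sum_{k=1}^m \alpha_k c^\prime_{k,i} $ with $ c^\prime_{k,i} \in \mathbb{F}_q $, so that $ M(\mathbf{c}^\prime) = (c^\prime_{k,i}) $. By definition $ \mathbf{c}^\prime \star \mathbf{d} = \sum_{i=1}^m c^\prime_i \mathbf{d}_i $, whose $ j $-th coordinate equals $ \sum_i c^\prime_i d_{i,j} = \sum_k \alpha_k \big( \sum_i c^\prime_{k,i} d_{i,j} \big) $. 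Reading off the coefficient of $ \alpha_k $ shows that the $ (k,j) $-entry of $ M(\mathbf{c}^\prime \star \mathbf{d}) $ is $ \sum_i c^\prime_{k,i} d_{i,j} $, which is exactly the $ (k,j) $-entry of $ M(\mathbf{c}^\prime) M(\mathbf{d}) $. The identity $ M(\mathbf{c} \star_\varphi \mathbf{d}) = M(\varphi(\mathbf{c})) M(\mathbf{d}) $ then follows at once from $ \mathbf{c} \star_\varphi \mathbf{d} = \varphi(\mathbf{c}) \star \mathbf{d} $ by taking $ \mathbf{c}^\prime = \varphi(\mathbf{c}) $.

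For the last two identities, suppose $ m = n $ and $ \mathbf{a}, \mathbf{b} \in \mathbb{F}_q^n $. The identity $ D(\mathbf{a} \ast \mathbf{b}) = D(\mathbf{a}) D(\mathbf{b}) $ is the observation that $ {\rm diag}(a_1, \ldots, a_n) \cdot {\rm diag}(b_1, \ldots, b_n) = {\rm diag}(a_1 b_1, \ldots, a_n b_n) $. Finally, since $ E = M^{-1} \circ D $, we obtain
$$ M(E(\mathbf{a} \ast \mathbf{b})) = D(\mathbf{a} \ast \mathbf{b}) = D(\mathbf{a}) D(\mathbf{b}) = M(E(\mathbf{a})) M(E(\mathbf{b})) = M(E(\mathbf{a}) \star E(\mathbf{b})), $$
where the last equality uses the first identity proved above; applying the injectivity of $ M $ gives $ E(\mathbf{a} \ast \mathbf{b}) = E(\mathbf{a}) \star E(\mathbf{b}) $.

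I do not expect any real obstacle — ``straightforward'' is accurate. The only things to be careful about are keeping the index ranges consistent (note $ \mathbf{c}^\prime $ has $ m $ entries over $ \mathbb{F}_{q^m} $, so $ M(\mathbf{c}^\prime) $ is $ m \times m $ while $ M(\mathbf{d}) $ is $ m \times n $, hence the matrix product is well-defined and lies in $ \mathbb{F}_q^{m \times n} = M(\mathbb{F}_{q^m}^n) $), and recalling that all three maps $ M $, $ D $, $ E $ are $ \mathbb{F}_q $-linear and one to one, so that equality of images under $ M $ suffices to conclude equality of the vectors.
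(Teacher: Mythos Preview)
Your proposal is correct and matches the paper's approach: the paper states only that ``the proof is straightforward'' and gives no details, so your direct unwinding of the definitions is exactly what is intended. The entrywise computation for $M(\mathbf{c}^\prime \star \mathbf{d})$, the reduction of the $\star_\varphi$ case to the $\star$ case, the diagonal-matrix identity for $D$, and the chain $M \circ E = D$ argument for the last identity are all correct and constitute the natural fill-in of the omitted proof.
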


Hence, the product $ \star : \mathbb{F}_{q^m}^m \times \mathbb{F}_{q^m}^n \longrightarrow \mathbb{F}_{q^m}^n $ is just the usual product of $ m \times m $ matrices with $ m \times n $ matrices over $ \mathbb{F}_q $, whereas the products $ \star_\varphi $ are also products of matrices after expanding the $ m \times n $ matrix in the first argument to an $ m \times m $ matrix over $ \mathbb{F}_q $.

\section{Rank error-correcting pairs}

We will define in this section error-correcting pairs (ECPs) for the rank metric, using the products $ \star $ and $ \star_\varphi $. However, which inner product to use for defining orthogonality and duality in $ \mathbb{F}_{q^m}^n $, or in $ \mathbb{F}_q^{m \times n} $, is not clear. First of all, we will always use the standard ($ \mathbb{F}_q $-bilinear) inner product $ \cdot $ in $ \mathbb{F}_q^n $. On the other hand, we will first present ECPs in $ \mathbb{F}_{q^m}^n $ that use the ($ \mathbb{F}_{q^m} $-bilinear) ``extension'' inner product,
\begin{equation}
\mathbf{c} \cdot \mathbf{d} = c_1d_1 + c_2d_2 + \cdots + c_nd_n \in \mathbb{F}_{q^m},
\label{extension inner}
\end{equation}
for all $ \mathbf{c} = (c_1,c_2, \ldots, c_n), \mathbf{d} = (d_1,d_2, \ldots, d_n) \in \mathbb{F}_{q^m}^n $, and afterwards we will use the ($ \mathbb{F}_q $-bilinear) ``base'' (or ``trace'') inner product in $ \mathbb{F}_q^{m \times n} $,
\begin{equation}
\langle C, D \rangle = \mathbf{c}_1 \cdot \mathbf{d}_1 + \mathbf{c}_2 \cdot \mathbf{d}_2 + \cdots + \mathbf{c}_m \cdot \mathbf{d}_m = {\rm Tr}(C D^T) = \sum_{i,j} c_{i,j} d_{i,j} \in \mathbb{F}_q,
\label{base inner}
\end{equation}
for $ C,D \in \mathbb{F}_q^{m \times n} $,where $ \mathbf{c}_i, \mathbf{d}_i \in \mathbb{F}_{q}^n $, for $ i = 1,2, \ldots, m $, are the rows of $ C $ and $ D $, respectively, and $ c_{i,j}, d_{i,j} \in \mathbb{F}_q $ are the entries of $ C $ and $ D $, respectively. $ {\rm Tr} $ denotes the usual trace of a square matrix.

Whereas the product $ \cdot $ is the standard $ \mathbb{F}_{q^m} $-bilinear product in $ \mathbb{F}_{q^m}^n $, the product $ \langle , \rangle $ corresponds to the standard $ \mathbb{F}_q $-bilinear product in $ \mathbb{F}_q^{mn} \cong \mathbb{F}_q^{m \times n} $. A duality theory for the product $ \langle , \rangle $ and $ \mathbb{F}_q $-linear rank-metric codes is developed originally in \cite{delsartebilinear} and further in \cite{ravagnani}, where it is also shown that duals of $ \mathbb{F}_{q^m} $-linear codes with respect to the ``extension'' inner product are equivalent to duals with respect to the ``base'' inner product (see \cite[Theorem 21]{ravagnani}). We will come back to this in Section \ref{connections}, where we will relate both kinds of error-correcting pairs.

Now we will give some relations between the product $ \star $ and the previous inner products that we will use later. If $ \mathbf{c}, \mathbf{d} \in \mathbb{F}_{q^m}^n $, $ \mathbf{d} = \sum_i \alpha_i \mathbf{d}_i $ and $ \mathbf{d}_i \in \mathbb{F}_{q}^n $, then we define
\begin{equation}
\mathbf{c}(\mathbf{d}) = (\mathbf{c} \cdot \mathbf{d}_1, \mathbf{c} \cdot \mathbf{d}_2, \ldots, \mathbf{c} \cdot \mathbf{d}_m) \in \mathbb{F}_{q^m}^m.
\label{matrix product transposed}
\end{equation}
On the other hand, given $ \mathbf{b} \in \mathbb{F}_{q^m}^m $, we define $ \mathbf{b}^T $ as the unique vector in $ \mathbb{F}_{q^m}^m $ such that $ M(\mathbf{b}^T) = M(\mathbf{b})^T $. 

\begin{lemma} \label{properties product transposed}
Given $ \mathbf{c}, \mathbf{d} \in \mathbb{F}_{q^m}^n $ and $ \mathbf{a}, \mathbf{b} \in \mathbb{F}_{q^m}^m $, and given $ C,D \in \mathbb{F}_q^{m \times n} $ and $ A,B \in \mathbb{F}_q^{m \times m} $, the following properties hold:
\begin{enumerate}
\item
$ M(\mathbf{c}(\mathbf{d})) = M(\mathbf{c}) M(\mathbf{d})^T $ and $ \mathbf{c}(\mathbf{d})^T = \mathbf{d}(\mathbf{c}) $.
\item
$ \mathbf{b} \cdot \mathbf{a}^T = \mathbf{b}^T \cdot \mathbf{a} $ and $ \langle B , A^T \rangle = \langle B^T, A \rangle $.
\item
$ (\mathbf{b} \star \mathbf{c}) \cdot \mathbf{d} = \mathbf{b} \cdot \mathbf{d}(\mathbf{c}) = \mathbf{b}^T \cdot \mathbf{c}(\mathbf{d}) = (\mathbf{b}^T \star \mathbf{d}) \cdot \mathbf{c} $.
\item
$ \langle BC , D \rangle = \langle B, DC^T \rangle = \langle B^T, CD^T \rangle = \langle B^T D, C \rangle $.
\item
$ \mathbf{c}(\mathbf{d}) = \mathbf{0} $ if, and only if, $ \mathbf{d}(\mathbf{c}) = \mathbf{0} $ if, and only if, $ {\rm RSupp}(\mathbf{c}) \subseteq {\rm RSupp}(\mathbf{d})^\perp $.
\item
$ CD^T = 0 $ if, and only if, $ DC^T = 0 $ if, and only if, $ {\rm Row}(C) \subseteq {\rm Row}(D)^\perp $.
\end{enumerate}
\end{lemma}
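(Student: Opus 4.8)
The plan is to transport the whole statement through the $ \mathbb{F}_q $-linear isomorphism $ M $ (and, for the purely matrix assertions, to argue with matrices directly), so that the lemma becomes a short list of elementary facts about $ \mathbb{F}_q $-matrices, the ordinary transpose, the Frobenius pairing $ \langle X, Y \rangle = {\rm Tr}(XY^T) = \sum_{i,j} x_{ij} y_{ij} $, and the trace. The two workhorses are Proposition \ref{connection vector products} (giving $ M(\mathbf{c}^\prime \star \mathbf{d}) = M(\mathbf{c}^\prime) M(\mathbf{d}) $ and $ M(\mathbf{c} \star_\varphi \mathbf{d}) = M(\varphi(\mathbf{c})) M(\mathbf{d}) $) together with the injectivity of $ M $.

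First I would prove item 1 by a direct entrywise computation. Writing $ \mathbf{c} = \sum_k \alpha_k \mathbf{c}_k $ and $ \mathbf{d} = \sum_l \alpha_l \mathbf{d}_l $ with $ \mathbf{c}_k, \mathbf{d}_l \in \mathbb{F}_q^n $, the $ l $-th coordinate of $ \mathbf{c}(\mathbf{d}) $ is $ \mathbf{c} \cdot \mathbf{d}_l = \sum_k \alpha_k (\mathbf{c}_k \cdot \mathbf{d}_l) $, so the $ (k,l) $-entry of $ M(\mathbf{c}(\mathbf{d})) $ is $ \mathbf{c}_k \cdot \mathbf{d}_l $, which is exactly $ (M(\mathbf{c}) M(\mathbf{d})^T)_{k,l} $; hence $ M(\mathbf{c}(\mathbf{d})) = M(\mathbf{c}) M(\mathbf{d})^T $. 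Transposing this and using injectivity of $ M $ gives $ \mathbf{c}(\mathbf{d})^T = \mathbf{d}(\mathbf{c}) $, since $ M(\mathbf{c}(\mathbf{d})^T) = (M(\mathbf{c}) M(\mathbf{d})^T)^T = M(\mathbf{d}) M(\mathbf{c})^T = M(\mathbf{d}(\mathbf{c})) $.

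Then I would settle the matrix-theoretic assertions: the matrix halves of items 2 and 4 and all of item 6. These are routine, using that $ \langle X, Y \rangle $ is symmetric and $ \mathbb{F}_q $-bilinear with $ \langle XZ, Y \rangle = \langle X, YZ^T \rangle $ and $ \langle X^T, Y \rangle = \langle X, Y^T \rangle $, and that $ {\rm Tr} $ is invariant under transposition and cyclic permutation. From these, $ \langle B, A^T \rangle = {\rm Tr}(BA) = {\rm Tr}(AB) = \langle B^T, A \rangle $; the four terms of item 4 all equal $ {\rm Tr}(BCD^T) $; and in item 6 one notes that $ CD^T = 0 $ says precisely that every row of $ C $ is orthogonal to every row of $ D $, i.e. $ {\rm Row}(C) \subseteq {\rm Row}(D)^\perp $, a condition symmetric in $ C $ and $ D $. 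The vector statements 5 and 3 then come back through $ M $: item 5 is immediate from item 1, since $ \mathbf{c}(\mathbf{d}) = \mathbf{0} $ iff $ M(\mathbf{c}) M(\mathbf{d})^T = 0 $, which by item 6 is symmetric and equivalent to $ {\rm RSupp}(\mathbf{c}) \subseteq {\rm RSupp}(\mathbf{d})^\perp $; for item 3 I would first check $ (\mathbf{b} \star \mathbf{c}) \cdot \mathbf{d} = \mathbf{b} \cdot \mathbf{d}(\mathbf{c}) $ directly (both sides equal $ \sum_k b_k (\mathbf{c}_k \cdot \mathbf{d}) $), and then derive the remaining two equalities from item 1 (which gives $ \mathbf{c}(\mathbf{d}) = \mathbf{d}(\mathbf{c})^T $), the vector half of item 2, and the first equality applied to $ (\mathbf{b}^T, \mathbf{d}, \mathbf{c}) $ in place of $ (\mathbf{b}, \mathbf{c}, \mathbf{d}) $.

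The step I expect to cost the most care is the vector half of item 2 — equivalently, the middle equality $ \mathbf{b} \cdot \mathbf{d}(\mathbf{c}) = \mathbf{b}^T \cdot \mathbf{c}(\mathbf{d}) $ of item 3 — since it is the one place where the $ \mathbb{F}_{q^m} $-bilinear extension inner product on $ \mathbb{F}_{q^m}^m $ has to be matched against the ordinary matrix transpose, and so is sensitive to exactly how $ \mathbf{b} \mapsto \mathbf{b}^T $ is read off through the basis $ \alpha_1, \ldots, \alpha_m $. My intention is to rewrite $ \mathbf{b} \cdot \mathbf{a} $ entirely in terms of $ M(\mathbf{b}) $ and $ M(\mathbf{a}) $, reduce the claim to the already-established matrix identity $ \langle B, A^T \rangle = \langle B^T, A \rangle $, and keep this as an isolated bookkeeping computation rather than folding it into the assembly of item 3.
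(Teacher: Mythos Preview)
Your plan for items 1, 4, 5, 6, the matrix half of item 2, and the outer two equalities of item 3 is correct and essentially matches the paper's own sketch: the paper also computes $M(\mathbf{c}(\mathbf{d}))_{i,k}=\mathbf{c}_i\cdot\mathbf{d}_k$ entrywise, verifies the first identity in item 3 as $\sum_i b_i(\mathbf{c}_i\cdot\mathbf{d})$, and reads item 5 off from item 1, declaring the remaining identities ``straightforward computations.''

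The step you rightly singled out, however, is a genuine obstruction, and your intended reduction cannot succeed. The extension product $\mathbf{b}\cdot\mathbf{a}$ takes values in $\mathbb{F}_{q^m}$ while $\langle B,A\rangle$ takes values in $\mathbb{F}_q$, so there is no identification of the former with the latter. Writing things out, $\mathbf{b}\cdot\mathbf{a}=\boldsymbol\alpha\,M(\mathbf{b})M(\mathbf{a})^T\boldsymbol\alpha^T$ with $\boldsymbol\alpha=(\alpha_1,\ldots,\alpha_m)$; hence $\mathbf{b}\cdot\mathbf{a}^T=\boldsymbol\alpha\,M(\mathbf{b})M(\mathbf{a})\,\boldsymbol\alpha^T$ whereas $\mathbf{b}^T\cdot\mathbf{a}=\boldsymbol\alpha\,M(\mathbf{a})M(\mathbf{b})\,\boldsymbol\alpha^T$, and these differ in general. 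Concretely, take $q=2$, $m=2$, basis $(1,\omega)$ of $\mathbb{F}_4$, $\mathbf{b}=(1,0)$, $\mathbf{a}=(\omega,0)$: then $\mathbf{a}^T=(0,1)$, $\mathbf{b}^T=\mathbf{b}$, and one gets $\mathbf{b}\cdot\mathbf{a}^T=0$ but $\mathbf{b}^T\cdot\mathbf{a}=\omega$. So the vector half of item 2 (and with it the middle equality of item 3) is false as stated; the paper's proof is silent on exactly this point. This does not affect anything downstream: the later arguments (Propositions \ref{properties extension}, \ref{properties base}, \ref{first bound}, \ref{roos}) use only the first equality of item 3, item 4, and items 5--6, all of which your proof does establish.
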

\begin{proof}
They are straightforward computations. For the first part of item 1, observe that
$$ \mathbf{c}(\mathbf{d}) = (\mathbf{c} \cdot \mathbf{d}_1, \mathbf{c} \cdot \mathbf{d}_2, \ldots, \mathbf{c} \cdot \mathbf{d}_m) = \sum_{i=1}^m \alpha_i (\mathbf{c}_i \cdot \mathbf{d}_1, \mathbf{c}_i \cdot \mathbf{d}_2, \ldots, \mathbf{c}_i \cdot \mathbf{d}_m). $$
Hence
$$ M(\mathbf{c}(\mathbf{d}))_{i,k}= \mathbf{c}_i \cdot \mathbf{d}_k = \sum_{j=1}^nc_{i,j}d_{k,j}=
\sum_{j=1}^n M(\mathbf{c})_{i,j}M(\mathbf{d})^T_{j,k}. $$
Therefore, $ M(\mathbf{c}(\mathbf{d})) =  M(\mathbf{c}) M(\mathbf{d})^T $.

For the first identity in Item 3, 
$$ (\mathbf{b} \star \mathbf{c}) \cdot \mathbf{d} =
\left( \sum_{i=1}^m b_i\mathbf{c}_i \right) \cdot \mathbf{d} =
\sum_{i=1}^m b_i(\mathbf{c}_i\cdot \mathbf{d}) =
\mathbf{b} \cdot \mathbf{d}(\mathbf{c}). $$
The first equivalence in Item 5 follows from item 1: $ \mathbf{c}(\mathbf{d})^T = \mathbf{d}(\mathbf{c}) $. Now, the second equivalence follows from the following chain of equivalences:
$$
\mathbf{c}(\mathbf{d}) =\mathbf{0} \Longleftrightarrow
\mathbf{c}_k \cdot \mathbf{d}_i, \forall i,k \Longleftrightarrow
{\rm RSupp} (\mathbf{c}) \subseteq {\rm RSupp} (\mathbf{d})^\perp .
$$
\end{proof}

\subsection{Using the extension inner product} \label{using extension}

Denote by $ \mathcal{D}^\perp $ the dual of an $ \mathbb{F}_{q^m} $-linear code $ \mathcal{D} \subseteq \mathbb{F}_{q^m}^n $ with respect to the extension product $ \cdot $. Fix $ \mathbb{F}_{q^m} $-linear codes $ \mathcal{A}, \mathcal{C} \subseteq \mathbb{F}_{q^m}^n $ and $ \mathcal{B} \subseteq \mathbb{F}_{q^m}^m $ such that $ \mathcal{B} \star \mathcal{A} \subseteq \mathcal{C}^\perp $, where $ \mathcal{B} \star \mathcal{A} $ is defined as
\begin{equation}
\mathcal{B} \star \mathcal{A} = \langle \{ \mathbf{b} \star \mathbf{a} \mid \mathbf{a} \in \mathcal{A}, \mathbf{b} \in \mathcal{B} \} \rangle_{\mathbb{F}_{q^m}}.
\label{A star B}
\end{equation}
In many cases, $ \mathcal{B} = \varphi(\mathcal{B}^\prime) $, where $ \varphi : \mathbb{F}_{q^m}^n \longrightarrow \mathbb{F}_{q^m}^m $ and $ \mathcal{B}^\prime \subseteq \mathbb{F}_{q^m}^n $ are both $ \mathbb{F}_{q^m} $-linear. In that case, we denote $ \mathcal{B}^\prime \star_\varphi \mathcal{A} = \varphi(\mathcal{B}^\prime) \star \mathcal{A} $.

Observe that, since $ \mathcal{B} $ is $ \mathbb{F}_{q^m} $-linear and $ \star $ is $ \mathbb{F}_{q^m} $-linear in the first component, it holds that $ \langle \{ \mathbf{b} \star \mathbf{a} \mid \mathbf{a} \in \mathcal{A}, \mathbf{b} \in \mathcal{B} \} \rangle_{\mathbb{F}_{q^m}} = \langle \{ \mathbf{b} \star \mathbf{a} \mid \mathbf{a} \in \mathcal{A}, \mathbf{b} \in \mathcal{B} \} \rangle_{\mathbb{F}_q} $.

We next compute generators of this space:

\begin{proposition} \label{generators ext}
If $ \mathbf{a}_1, \mathbf{a}_2, \ldots, \mathbf{a}_r $ generate $ \mathcal{A} $ and $ \mathbf{b}_1, \mathbf{b}_2, \ldots, \mathbf{b}_s $ generate $ \mathcal{B} $, as $ \mathbb{F}_{q^m} $-linear spaces, then the vectors
$$ \mathbf{b}_i \star (\alpha_l \mathbf{a}_j), $$
for $ 1 \leq i \leq s $, $ 1 \leq j \leq r $ and $ 1 \leq l \leq m $, generate $ \mathcal{B} \star \mathcal{A} $ as an $ \mathbb{F}_{q^m} $-linear space.
\end{proposition}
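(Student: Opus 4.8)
The plan is to expand an arbitrary basic product $\mathbf{b} \star \mathbf{a}$, with $\mathbf{a} \in \mathcal{A}$ and $\mathbf{b} \in \mathcal{B}$, as an $ \mathbb{F}_{q^m} $-linear combination of the vectors $\mathbf{b}_i \star (\alpha_l \mathbf{a}_j)$, using the mixed linearity of $\star$ recorded in item 2 of Remark \ref{bilinearity}, and then to observe that the reverse inclusion is immediate. Recall that, by \eqref{A star B}, $\mathcal{B} \star \mathcal{A}$ is the $ \mathbb{F}_{q^m} $-span of the set $\{ \mathbf{b} \star \mathbf{a} \mid \mathbf{a} \in \mathcal{A}, \mathbf{b} \in \mathcal{B} \}$, so it is enough to show that every such $\mathbf{b} \star \mathbf{a}$ lies in the $ \mathbb{F}_{q^m} $-span of the proposed generators.

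To this end I would write $\mathbf{b} = \sum_{i=1}^s \beta_i \mathbf{b}_i$ and $\mathbf{a} = \sum_{j=1}^r \gamma_j \mathbf{a}_j$ with $\beta_i, \gamma_j \in \mathbb{F}_{q^m}$, and then expand each $\gamma_j = \sum_{l=1}^m \gamma_{j,l} \alpha_l$ with $\gamma_{j,l} \in \mathbb{F}_q$, so that $\mathbf{a} = \sum_{j,l} \gamma_{j,l}\, \alpha_l \mathbf{a}_j$. Applying $ \mathbb{F}_{q^m} $-linearity of $\star$ in the first component gives $\mathbf{b} \star \mathbf{a} = \sum_{i} \beta_i (\mathbf{b}_i \star \mathbf{a})$, and applying $ \mathbb{F}_q $-linearity in the second component — this is where it matters that the scalars $\gamma_{j,l}$ lie in $ \mathbb{F}_q $ — gives $\mathbf{b}_i \star \mathbf{a} = \sum_{j,l} \gamma_{j,l} \big( \mathbf{b}_i \star (\alpha_l \mathbf{a}_j) \big)$. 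Combining the two, $\mathbf{b} \star \mathbf{a} = \sum_{i,j,l} \beta_i \gamma_{j,l} \big( \mathbf{b}_i \star (\alpha_l \mathbf{a}_j) \big)$, which is the desired $ \mathbb{F}_{q^m} $-linear combination. Conversely, each $\mathbf{b}_i \star (\alpha_l \mathbf{a}_j)$ is of the form $\mathbf{b} \star \mathbf{a}$ with $\mathbf{b} = \mathbf{b}_i \in \mathcal{B}$ and $\mathbf{a} = \alpha_l \mathbf{a}_j \in \mathcal{A}$ (using here that $\mathcal{A}$ is $ \mathbb{F}_{q^m} $-linear), hence it belongs to $\mathcal{B} \star \mathcal{A}$. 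This proves both inclusions.

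There is essentially no real obstacle in the argument; the only subtle point, and the reason the basis vectors $\alpha_l$ appear in the statement, is that $\star$ is not $ \mathbb{F}_{q^m} $-linear in its second component, so one cannot collapse the combination to $\sum_j \gamma_j (\mathbf{b}_i \star \mathbf{a}_j)$. Decomposing the $ \mathbb{F}_{q^m} $-scalars $\gamma_j$ along the fixed basis is precisely what guarantees that every scalar pulled through the second slot lies in $ \mathbb{F}_q $. I would also note in passing, as already observed just before the proposition, that the $ \mathbb{F}_{q^m} $-span and the $ \mathbb{F}_q $-span of $\{ \mathbf{b} \star \mathbf{a} \}$ coincide, so there is no ambiguity in the notion of "generate" used here.
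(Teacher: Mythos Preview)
Your argument is correct and is exactly the intended one: use $\mathbb{F}_{q^m}$-linearity of $\star$ in the first slot and $\mathbb{F}_q$-linearity in the second, after expanding each $\gamma_j$ along the basis $\alpha_1,\ldots,\alpha_m$. The paper in fact states this proposition without proof, so there is nothing further to compare.
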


In the case $ \mathcal{B} = \varphi(\mathcal{B}^\prime) $ and $ \mathbf{b}^\prime_1, \mathbf{b}^\prime_2, \ldots, \mathbf{b}^\prime_s $ generate $ \mathcal{B}^\prime $ as an $ \mathbb{F}_{q^m} $-linear space, then the elements $ \mathbf{b}^\prime_i \star_\varphi (\alpha_l \mathbf{a}_j) $ generate $ \mathcal{B}^\prime \star_\varphi \mathcal{A} $ as an $ \mathbb{F}_{q^m} $-linear space.

Regarding the dimension of $ \mathcal{B} \star \mathcal{A} $ (or $ \mathcal{B} \star_\varphi \mathcal{A} $), that is, how many of the elements $ \mathbf{b}_i \star (\alpha_l \mathbf{a}_j) $ are linearly independent, the next example shows that any number may be possible in the case $ n \leq m $, where the previous proposition says that an upper bound in the general case is $ \min \{ \dim(\mathcal{A}) \dim(\mathcal{B}) m, n \} $:

\begin{example}
Assume that $ n \leq m $, fix $ 1 \leq t \leq n $, and define $ \mathbf{a} = (\alpha_1, \alpha_2, \ldots, \alpha_n) \in \mathbb{F}_{q^m}^n $ and $ \mathbf{b} = \mathbf{a} + \mathbf{a}^{[1]} + \cdots + \mathbf{a}^{[t-1]} \in \mathbb{F}_{q^m}^n $. Let $ \gamma \in \mathbb{F}_{q^m} $ be such that $ \gamma, \gamma^{[1]}, \ldots, \gamma^{[t-1]} $ are pairwise distinct, and write $ \gamma_i = \gamma^{[i]} $, for $ i = 0,1, \ldots, t-1 $. Let $ \mathcal{A} $ and $ \mathcal{B} $ be the $ \mathbb{F}_{q^m} $-linear spaces generated by $ \mathbf{a} $ and $ \mathbf{b} $, respectively. By Proposition \ref{evaluation properties}, item 2, we have that
$$ \mathbf{b} \star (\gamma^j \mathbf{a}) = \sum_{i=0}^{t-1} \mathbf{a}^{[i]} \star (\gamma^j \mathbf{a}) = \gamma_0^j \mathbf{a} + \gamma_1^j \mathbf{a}^{[1]} + \cdots + \gamma_{t-1}^j \mathbf{a}^{[t-1]} \in \mathcal{B} \star \mathcal{A}, $$
for $ j = 0,1,2, \ldots, t-1 $, and these elements are linearly independent over $ \mathbb{F}_{q^m} $, since the coefficients $ \gamma_i^j $ of the vectors $ \mathbf{a}^{[i]} $ form a Vandermonde matrix. Furthermore, $ \mathcal{B} \star \mathcal{A} $ is contained in the subspace generated by $ \mathbf{a}, \mathbf{a}^{[1]}, \ldots, \mathbf{a}^{[t-1]} $, hence they are equal. Therefore, $ \dim(\mathcal{A}) = \dim(\mathcal{B}) = 1 $, whereas $ \dim(\mathcal{B} \star \mathcal{A}) = t $.
\end{example}

Let $ \mathbf{d} \in \mathbb{F}_{q^m}^n $ and define
$$ \mathcal{K}(\mathbf{d}) = \{ \mathbf{a} \in \mathcal{A} \mid (\mathbf{b} \star \mathbf{a}) \cdot \mathbf{d} = 0, \forall \mathbf{b} \in \mathcal{B} \}. $$
Then $ \mathcal{K}(\mathbf{d}) $ is $ \mathbb{F}_q $-linear and the condition defining it may be verified just on a basis of $ \mathcal{B} $ as $ \mathbb{F}_{q^m} $-linear space. Observe that (precomputing the values $ \varphi(\mathbf{b}^\prime) $, where the vectors $ \mathbf{b}^\prime $ are in a basis of $ \mathcal{B}^\prime $, in the case $ \mathcal{B} = \varphi(\mathcal{B}^\prime) $), we can efficiently verify whether $ \mathbf{a} \in \mathcal{K}(\mathbf{d}) $. On the other hand, if $ \mathcal{L} \subseteq \mathbb{F}_q^n $ is a linear subspace, define
$$ \mathcal{A}(\mathcal{L}) = \{ \mathbf{a} \in \mathcal{A} \mid {\rm RSupp}(\mathbf{a}) \subseteq \mathcal{L}^\perp \}, $$
as in \cite{jurrius-pellikaan, slides}. We briefly connect this definition with the so-called rank-shortened codes in \cite[Definition 6]{similarities}, where $ \mathcal{A}_{\mathcal{L}^\perp} = \mathcal{A} \cap \mathcal{V}^\perp $ and $ \mathcal{V} = \mathcal{L} \otimes \mathbb{F}_{q^m} $ is defined as the $ \mathbb{F}_{q^m} $-linear vector space in $ \mathbb{F}_{q^m}^n $ generated by $ \mathcal{L} $:

\begin{lemma} \label{lemma shortened}
It holds that $ \mathcal{A}(\mathcal{L}) = \mathcal{A}_{\mathcal{L}^\perp} $. In particular, it is an $ \mathbb{F}_{q^m} $-linear space.
\end{lemma}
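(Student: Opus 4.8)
The plan is to unwind both descriptions to a single condition on the rows of the matrix $ M(\mathbf{a}) $ and observe they coincide. Write $ \mathbf{a} = \sum_{i=1}^m \alpha_i \mathbf{a}_i $ with $ \mathbf{a}_i \in \mathbb{F}_q^n $; by the definition of $ M $ the rows of $ M(\mathbf{a}) $ are precisely $ \mathbf{a}_1, \mathbf{a}_2, \ldots, \mathbf{a}_m $, so $ {\rm RSupp}(\mathbf{a}) = \langle \mathbf{a}_1, \mathbf{a}_2, \ldots, \mathbf{a}_m \rangle_{\mathbb{F}_q} $. Hence, for $ \mathbf{a} \in \mathcal{A} $, the condition $ {\rm RSupp}(\mathbf{a}) \subseteq \mathcal{L}^\perp $ is equivalent to $ \mathbf{a}_i \cdot \mathbf{x} = 0 $ for every $ i = 1,2, \ldots, m $ and every $ \mathbf{x} \in \mathcal{L} $.

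On the other side, since $ \mathcal{V} = \mathcal{L} \otimes \mathbb{F}_{q^m} $ is the $ \mathbb{F}_{q^m} $-linear span of $ \mathcal{L} $ and the extension inner product $ \cdot $ is $ \mathbb{F}_{q^m} $-bilinear, belonging to $ \mathcal{V}^\perp $ is equivalent to $ \mathbf{a} \cdot \mathbf{x} = 0 $ for all $ \mathbf{x} \in \mathcal{L} $. The bridge between the two is the computation, valid for any $ \mathbf{x} = (x_1, x_2, \ldots, x_n) \in \mathbb{F}_q^n $,
$$ \mathbf{a} \cdot \mathbf{x} = \sum_{j=1}^n a_j x_j = \sum_{j=1}^n \left( \sum_{i=1}^m \alpha_i a_{i,j} \right) x_j = \sum_{i=1}^m \alpha_i ( \mathbf{a}_i \cdot \mathbf{x} ), $$
where the reordering uses $ x_j \in \mathbb{F}_q $. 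Since each $ \mathbf{a}_i \cdot \mathbf{x} $ lies in $ \mathbb{F}_q $ and $ \alpha_1, \alpha_2, \ldots, \alpha_m $ is a basis of $ \mathbb{F}_{q^m} $ over $ \mathbb{F}_q $, we get $ \mathbf{a} \cdot \mathbf{x} = 0 $ if and only if $ \mathbf{a}_i \cdot \mathbf{x} = 0 $ for all $ i $. Combining the three observations, $ \mathbf{a} \in \mathcal{V}^\perp $ if and only if $ {\rm RSupp}(\mathbf{a}) \subseteq \mathcal{L}^\perp $, and intersecting with $ \mathcal{A} $ on both sides yields $ \mathcal{A}(\mathcal{L}) = \mathcal{A} \cap \mathcal{V}^\perp = \mathcal{A}_{\mathcal{L}^\perp} $.

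For the final assertion, $ \mathcal{A} $ is $ \mathbb{F}_{q^m} $-linear by hypothesis and $ \mathcal{V} $ is $ \mathbb{F}_{q^m} $-linear, so $ \mathcal{V}^\perp $ is $ \mathbb{F}_{q^m} $-linear, whence so is $ \mathcal{A}_{\mathcal{L}^\perp} = \mathcal{A} \cap \mathcal{V}^\perp $, and by the equality just proved so is $ \mathcal{A}(\mathcal{L}) $. There is no real obstacle here; the only point requiring care is keeping track of which scalars lie in $ \mathbb{F}_q $ and which in $ \mathbb{F}_{q^m} $, so that the passage from $ \mathcal{V}^\perp $ to a condition on $ \mathcal{L} $ uses $ \mathbb{F}_{q^m} $-bilinearity of $ \cdot $ while the splitting $ \mathbf{a} \cdot \mathbf{x} = \sum_i \alpha_i (\mathbf{a}_i \cdot \mathbf{x}) $ is legitimate precisely because $ \mathbf{x} $ has entries in $ \mathbb{F}_q $.
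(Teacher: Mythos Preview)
Your proof is correct and follows essentially the same route as the paper: both unwind $ {\rm RSupp}(\mathbf{a}) \subseteq \mathcal{L}^\perp $ to the row-wise condition $ \mathbf{a}_i \cdot \mathbf{x} = 0 $ for all $ i $ and all $ \mathbf{x} \in \mathcal{L} $, and then use the linear independence of $ \alpha_1,\ldots,\alpha_m $ over $ \mathbb{F}_q $ to pass to $ \mathbf{a} \cdot \mathbf{x} = 0 $, i.e.\ $ \mathbf{a} \in \mathcal{V}^\perp $. The paper phrases this via a fixed basis $ \mathbf{v}_1,\ldots,\mathbf{v}_w $ of $ \mathcal{L} $ rather than arbitrary $ \mathbf{x} \in \mathcal{L} $, but that is a cosmetic difference; your version is simply more explicit about the key identity $ \mathbf{a}\cdot\mathbf{x}=\sum_i \alpha_i(\mathbf{a}_i\cdot\mathbf{x}) $.
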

\begin{proof}
Fix a basis $ \mathbf{v}_1, \mathbf{v}_2, \ldots, \mathbf{v}_w $ of $ \mathcal{L} $. The result follows from the following chain of equivalent conditions
$$ {\rm RSupp}(\mathbf{a}) \in \mathcal{L}^\perp \Longleftrightarrow \mathbf{a}_i \in \mathcal{L}^\perp, \forall i \Longleftrightarrow \mathbf{a}_i \cdot \mathbf{v}_j = 0, \forall i,j \Longleftrightarrow \mathbf{a} \cdot \mathbf{v}_j = 0, \forall j \Longleftrightarrow \mathbf{a} \in \mathcal{V}^\perp. $$
\end{proof}

The following properties are the basic tools for the decoding algorithm of error correcting pairs:

\begin{proposition} \label{properties extension}
Let $ \mathbf{r} = \mathbf{c} + \mathbf{e} $, where $ \mathbf{c} \in \mathcal{C} $ and $ {\rm wt_R}(\mathbf{e}) \leq t $. Define also $ \mathcal{L} = {\rm RSupp}(\mathbf{e}) \subseteq \mathbb{F}_q^n $. The following properties hold:
\begin{enumerate}
\item
$ \mathcal{K}(\mathbf{r}) = \mathcal{K}(\mathbf{e}) $.
\item
$ \mathcal{A}(\mathcal{L}) \subseteq \mathcal{K}(\mathbf{e}) $.
\item
If $ t < d_{R}(\mathcal{B}^\perp) $, then $ \mathcal{A}(\mathcal{L}) = \mathcal{K}(\mathbf{e}) $. In this case, $ \mathcal{K}(\mathbf{e}) $ is $ \mathbb{F}_{q^m} $-linear.
\end{enumerate}
\end{proposition}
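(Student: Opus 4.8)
The plan is to mimic the classical Hamming-metric argument, translating each step through the matrix/row-space dictionary set up in Section~2 and the identities of Lemma~\ref{properties product transposed}. For item~1, I would write $\mathbf{r} = \mathbf{c} + \mathbf{e}$ and, for any $\mathbf{a}\in\mathcal{A}$ and $\mathbf{b}\in\mathcal{B}$, expand $(\mathbf{b}\star\mathbf{a})\cdot\mathbf{r} = (\mathbf{b}\star\mathbf{a})\cdot\mathbf{c} + (\mathbf{b}\star\mathbf{a})\cdot\mathbf{e}$. Since $\mathbf{b}\star\mathbf{a}\in\mathcal{B}\star\mathcal{A}\subseteq\mathcal{C}^\perp$ and $\mathbf{c}\in\mathcal{C}$, the first term vanishes; hence $(\mathbf{b}\star\mathbf{a})\cdot\mathbf{r}=0$ if and only if $(\mathbf{b}\star\mathbf{a})\cdot\mathbf{e}=0$, which is exactly $\mathcal{K}(\mathbf{r})=\mathcal{K}(\mathbf{e})$. (The remark preceding the proposition already notes that the defining condition need only be checked on an $\mathbb{F}_{q^m}$-basis of $\mathcal{B}$, which I would invoke to keep the statement clean but it plays no role in the equality itself.)

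For item~2, take $\mathbf{a}\in\mathcal{A}(\mathcal{L})$, so ${\rm RSupp}(\mathbf{a})\subseteq\mathcal{L}^\perp={\rm RSupp}(\mathbf{e})^\perp$. By Lemma~\ref{properties product transposed}, item~5 (or item~6 at the matrix level), this means $\mathbf{e}(\mathbf{a})=\mathbf{0}$, equivalently $M(\mathbf{e})M(\mathbf{a})^T=0$. Now for any $\mathbf{b}\in\mathcal{B}$, item~3 of the same lemma gives $(\mathbf{b}\star\mathbf{a})\cdot\mathbf{e} = \mathbf{b}\cdot\mathbf{e}(\mathbf{a}) = \mathbf{b}\cdot\mathbf{0}=0$, so $\mathbf{a}\in\mathcal{K}(\mathbf{e})$; hence $\mathcal{A}(\mathcal{L})\subseteq\mathcal{K}(\mathbf{e})$.

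For item~3, I need the reverse inclusion $\mathcal{K}(\mathbf{e})\subseteq\mathcal{A}(\mathcal{L})$ under the hypothesis $t<d_R(\mathcal{B}^\perp)$. Let $\mathbf{a}\in\mathcal{K}(\mathbf{e})$; then for all $\mathbf{b}\in\mathcal{B}$ we have $0 = (\mathbf{b}\star\mathbf{a})\cdot\mathbf{e} = \mathbf{b}\cdot\mathbf{e}(\mathbf{a})$, i.e.\ the vector $\mathbf{e}(\mathbf{a})\in\mathbb{F}_{q^m}^m$ is orthogonal to every element of $\mathcal{B}$, so $\mathbf{e}(\mathbf{a})\in\mathcal{B}^\perp$ (the dual taken in $\mathbb{F}_{q^m}^m$). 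The key point is a rank bound: $M(\mathbf{e}(\mathbf{a})) = M(\mathbf{e})M(\mathbf{a})^T$ by Lemma~\ref{properties product transposed}, item~1, so ${\rm wt_R}(\mathbf{e}(\mathbf{a})) = {\rm Rk}(M(\mathbf{e})M(\mathbf{a})^T) \leq {\rm Rk}(M(\mathbf{e})) = {\rm wt_R}(\mathbf{e}) \leq t < d_R(\mathcal{B}^\perp)$. Since the only codeword of $\mathcal{B}^\perp$ of rank weight $<d_R(\mathcal{B}^\perp)$ is the zero vector, $\mathbf{e}(\mathbf{a})=\mathbf{0}$, hence $\mathbf{a}(\mathbf{e})=\mathbf{0}$ and by Lemma~\ref{properties product transposed}, item~5, ${\rm RSupp}(\mathbf{a})\subseteq{\rm RSupp}(\mathbf{e})^\perp=\mathcal{L}^\perp$, i.e.\ $\mathbf{a}\in\mathcal{A}(\mathcal{L})$. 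Combined with item~2 this gives $\mathcal{A}(\mathcal{L})=\mathcal{K}(\mathbf{e})$, and $\mathbb{F}_{q^m}$-linearity then follows from Lemma~\ref{lemma shortened}, which identifies $\mathcal{A}(\mathcal{L})$ with the $\mathbb{F}_{q^m}$-linear space $\mathcal{A}_{\mathcal{L}^\perp}$.

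The one genuine subtlety — the ``main obstacle'' — is making sure the duality $\mathbf{e}(\mathbf{a})\in\mathcal{B}^\perp$ is with respect to the same inner product and ambient space ($\mathbb{F}_{q^m}^m$ with the extension product) for which $d_R(\mathcal{B}^\perp)$ is defined, and that the rank-weight bound ${\rm Rk}(M(\mathbf{e})M(\mathbf{a})^T)\le{\rm Rk}(M(\mathbf{e}))$ is applied correctly (it is just subadditivity/monotonicity of matrix rank under multiplication). Everything else is a bookkeeping translation through the maps $M$, $D$, $E$ and the identities of Lemma~\ref{properties product transposed}; no new ideas beyond those already assembled in Section~3 and the start of Section~4 are needed.
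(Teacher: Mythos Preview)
Your proof is correct and follows essentially the same route as the paper: item~1 via $\mathcal{B}\star\mathcal{A}\subseteq\mathcal{C}^\perp$, items~2 and~3 via the identity $(\mathbf{b}\star\mathbf{a})\cdot\mathbf{e}=\mathbf{b}\cdot\mathbf{e}(\mathbf{a})$ from Lemma~\ref{properties product transposed} together with the rank bound ${\rm wt_R}(\mathbf{e}(\mathbf{a}))\le{\rm wt_R}(\mathbf{e})\le t<d_R(\mathcal{B}^\perp)$ to force $\mathbf{e}(\mathbf{a})=\mathbf{0}$. Your version is slightly more explicit in citing which parts of Lemma~\ref{properties product transposed} are used and in deriving the $\mathbb{F}_{q^m}$-linearity from Lemma~\ref{lemma shortened}, but there is no substantive difference.
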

\begin{proof}
\begin{enumerate}
\item
It follows from $ \mathcal{B} \star \mathcal{A} \subseteq \mathcal{C}^\perp $.
\item
Let  $ \mathbf{a} \in A(L) $. It follows from Lemma \ref{properties product transposed} that
$\mathbf{e}(\mathbf{a})= \mathbf{0}$. Hence
$ (\mathbf{b} \star \mathbf{a}) \cdot \mathbf{e} = \mathbf{b} \cdot \mathbf{e}(\mathbf{a}) = 0 $, for all $ \mathbf{b} \in B $. Thus $ \mathbf{a} \in K(\mathbf{e})$.
\item
By the previous item, we only need to prove that $ \mathcal{K}(\mathbf{e}) \subseteq \mathcal{A}(\mathcal{L}) $.

Let $ \mathbf{a} \in \mathcal{K}(\mathbf{e}) $. It follows from Lemma \ref{properties product transposed} that $ \mathbf{e}(\mathbf{a}) \in \mathcal{B}^\perp $. Moreover, since $ M(\mathbf{e}(\mathbf{a})) = M(\mathbf{e}) M(\mathbf{a})^T $ by the same lemma, it holds that $ {\rm wt_R}(\mathbf{e}(\mathbf{a})) \leq {\rm wt_R}(\mathbf{e}) \leq t $.

Since $ t < d_{R}(\mathcal{B}^\perp) $, it follows that $ \mathbf{e}(\mathbf{a}) = \mathbf{0} $ or, in other words, $ \mathbf{a}_i \cdot \mathbf{e} = 0 $, which implies that $ \mathbf{a}_i \in \mathcal{L}^\perp $, for all $ i =1,2, \ldots, m $, and therefore, $ {\rm RSupp}(\mathbf{a}) \subseteq \mathcal{L}^\perp $. \\
\end{enumerate}
\end{proof}

We now come to the definition of $ t $-rank error-correcting pairs of type I, where we use the extension inner product $ \cdot $. 

\begin{definition} \label{RECP type I}
The pair $ (\mathcal{A},\mathcal{B}) $ is called a $ t $-rank error-correcting pair ($ t $-RECP) of type I for $ \mathcal{C} $ if the following properties hold:
\begin{enumerate}
\item
$ \mathcal{B} \star \mathcal{A} \subseteq \mathcal{C}^\perp $. 
\item
$ \dim(\mathcal{A}) > t $.
\item
$ d_R(\mathcal{B}^\perp) > t $.
\item
$ d_R(\mathcal{A}) + d_R(\mathcal{C}) > n $.
\end{enumerate}
If $ \mathcal{B} = \varphi(\mathcal{B}^\prime) $, where $ \varphi $ and $ \mathcal{B}^\prime \subseteq \mathbb{F}_{q^m}^n $ are $ \mathbb{F}_{q^m} $-linear, we say that $ (\mathcal{A}, \mathcal{B}^\prime) $ is a $ t $-RECP of type I for $ \varphi $ and $ \mathcal{C} $, and if $ \varphi = \varphi_n $, we will call it simply a $ t $-RECP of type I for $ \mathcal{C} $.
\end{definition}

In order to describe a decoding algorithm for $ \mathcal{C} $ using $ (\mathcal{A},\mathcal{B}) $, we will need \cite[Proposition 16]{similarities}, slightly modified (the proof is the same), which basically states that error correction is equivalent to erasure correction if the rank support of the error is known:

\begin{lemma} \label{error erasure}
Assume that $ \mathbf{c} \in \mathcal{C} $ and $ \mathbf{r} = \mathbf{c} + \mathbf{e} $, where $ {\rm RSupp}(\mathbf{e}) \subseteq \mathcal{L} $ and $ \dim(\mathcal{L}) < d_R(\mathcal{C}) $. Then, $ \mathbf{c} $ is the only vector in $ \mathcal{C} $ such that $ {\rm RSupp}(\mathbf{r} - \mathbf{c}) \subseteq \mathcal{L} $.

Moreover, if $ G $ is a generator matrix of $ \mathcal{L}^\perp $, then $ \mathbf{c} $ is the unique solution in $ \mathcal{C} $ of the system of equations $ \mathbf{r}G^T = \mathbf{x}G^T $, where $ \mathbf{x} $ is the unknown vector. And if $ H $ is a parity check matrix for $ \mathcal{C} $ over $ \mathbb{F}_{q^m} $, then $ \mathbf{e} $ is the unique solution to the system $ \mathbf{r}H^T = \mathbf{x}H^T $ with $ {\rm RSupp} (\mathbf{x}) \subseteq \mathcal{L} $.
\end{lemma}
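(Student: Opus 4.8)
The plan is to reduce both assertions to standard linear-algebra facts about rank-metric codes once the key geometric observation is made: a codeword agreeing with $\mathbf{r}$ outside a small rank support is forced to be unique because the difference of two such codewords has rank weight strictly less than $d_R(\mathcal{C})$, hence must be zero. Concretely, for the first claim I would suppose $\mathbf{c}, \mathbf{c}' \in \mathcal{C}$ both satisfy $\mathrm{RSupp}(\mathbf{r} - \mathbf{c}) \subseteq \mathcal{L}$ and $\mathrm{RSupp}(\mathbf{r} - \mathbf{c}') \subseteq \mathcal{L}$. Then $\mathbf{c} - \mathbf{c}' = (\mathbf{r} - \mathbf{c}') - (\mathbf{r} - \mathbf{c})$, and since the rank support is the row space of the matrix representation $M(\cdot)$, which is subadditive in the sense that $\mathrm{Row}(M(\mathbf{x}+\mathbf{y})) \subseteq \mathrm{Row}(M(\mathbf{x})) + \mathrm{Row}(M(\mathbf{y}))$, we get $\mathrm{RSupp}(\mathbf{c} - \mathbf{c}') \subseteq \mathcal{L}$, so $\mathrm{wt_R}(\mathbf{c} - \mathbf{c}') \leq \dim(\mathcal{L}) < d_R(\mathcal{C})$. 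As $\mathbf{c} - \mathbf{c}' \in \mathcal{C}$, this forces $\mathbf{c} = \mathbf{c}'$, and since $\mathbf{c}$ itself witnesses the property (because $\mathbf{r} - \mathbf{c} = \mathbf{e}$ has rank support inside $\mathcal{L}$ by hypothesis), it is the unique such codeword.

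For the second paragraph of the statement, I would translate the condition ``$\mathrm{RSupp}(\mathbf{r}-\mathbf{x}) \subseteq \mathcal{L}$'' into a linear system. If $G$ is a generator matrix of $\mathcal{L}^\perp \subseteq \mathbb{F}_q^n$, then for any $\mathbf{y} \in \mathbb{F}_{q^m}^n$ we have $\mathrm{RSupp}(\mathbf{y}) \subseteq \mathcal{L}$ if and only if every row of $M(\mathbf{y})$ lies in $\mathcal{L}$, i.e.\ $M(\mathbf{y}) G^T = 0$, which (since $M$ is an $\mathbb{F}_q$-linear isomorphism and $G$ has $\mathbb{F}_q$-entries, so multiplication by $G^T$ commutes with the $\mathbb{F}_{q^m}$-structure) is equivalent to $\mathbf{y} G^T = \mathbf{0}$ in $\mathbb{F}_{q^m}^{\,\cdot}$. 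Applying this with $\mathbf{y} = \mathbf{r} - \mathbf{x}$ shows that the condition $\mathrm{RSupp}(\mathbf{r}-\mathbf{x}) \subseteq \mathcal{L}$ is exactly $\mathbf{r}G^T = \mathbf{x}G^T$; combined with the first part, $\mathbf{c}$ is the unique solution in $\mathcal{C}$. For the last sentence, I would use the parity-check characterization $\mathbf{x} \in \mathcal{C} \iff \mathbf{x}H^T = \mathbf{0}$: if $\mathbf{x}$ solves $\mathbf{r}H^T = \mathbf{x}H^T$ with $\mathrm{RSupp}(\mathbf{x}) \subseteq \mathcal{L}$, then $(\mathbf{r}-\mathbf{x})H^T = \mathbf{0}$, so $\mathbf{r}-\mathbf{x} \in \mathcal{C}$ and $\mathrm{RSupp}(\mathbf{r}-(\mathbf{r}-\mathbf{x})) = \mathrm{RSupp}(\mathbf{x}) \subseteq \mathcal{L}$; by uniqueness $\mathbf{r}-\mathbf{x} = \mathbf{c}$, i.e.\ $\mathbf{x} = \mathbf{e}$, and $\mathbf{e}$ itself is a solution by hypothesis.

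The only mildly delicate point is the equivalence $M(\mathbf{y})G^T = 0 \iff \mathbf{y}G^T = \mathbf{0}$: one must check that expressing $\mathbf{y} = \sum_i \alpha_i \mathbf{y}_i$ and computing $\mathbf{y}G^T = \sum_i \alpha_i (\mathbf{y}_i G^T)$ shows the components of $\mathbf{y}G^T$ vanish iff each $\mathbf{y}_i G^T = 0$ iff each row $\mathbf{y}_i$ of $M(\mathbf{y})$ is orthogonal to $\mathcal{L}^\perp$, i.e.\ lies in $(\mathcal{L}^\perp)^\perp = \mathcal{L}$. This is routine given that $\alpha_1, \dots, \alpha_m$ is an $\mathbb{F}_q$-basis. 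Everything else is bookkeeping; the real content — that it is \emph{the same proof} as \cite[Proposition 16]{similarities} — is that the subadditivity of rank support plus the minimum-distance bound does all the work, exactly as the triangle inequality does for Hamming-metric error/erasure decoding. I expect no genuine obstacle here, only the need to be careful about which ambient field the linear systems live over.
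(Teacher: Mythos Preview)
Your proposal is correct and is exactly the standard argument; the paper itself omits the proof entirely, merely citing \cite[Proposition~16]{similarities} and noting that ``the proof is the same,'' so your write-up supplies precisely the details the paper leaves to that reference. The key steps you identify---subadditivity of rank support giving ${\rm wt_R}(\mathbf{c}-\mathbf{c}')<d_R(\mathcal{C})$, the equivalence ${\rm RSupp}(\mathbf{y})\subseteq\mathcal{L}\iff\mathbf{y}G^T=\mathbf{0}$ via the $\mathbb{F}_q$-basis decomposition (this is also Lemma~\ref{lemma shortened} in the paper), and the parity-check reduction for the last clause---are all correct and constitute the expected proof.
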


Now we present, in the proof of the following theorem, a decoding algorithm for $ \mathcal{C} $ using $ (\mathcal{A}, \mathcal{B}) $.

\begin{theorem} \label{decoding type I}
If $ (\mathcal{A}, \mathcal{B}) $ is a $ t $-RECP of type I for $ \mathcal{C} $, then $ \mathcal{C} $ verifies that $ d_R(\mathcal{C}) \geq 2t + 1 $ and admits a decoding algorithm able to correct errors $ \mathbf{e} $ with $ {\rm wt_R}(\mathbf{e}) \leq t $ of complexity $ O(n^3) $ over the field $ \mathbb{F}_{q^m} $.
\end{theorem}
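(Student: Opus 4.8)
The plan is to transport the classical error-correcting-pair decoder to the rank metric, with the auxiliary spaces $ \mathcal{K}(\mathbf{r}) $ and $ \mathcal{A}(\mathcal{L}) $ of Subsection \ref{using extension} and the erasure-decoding Lemma \ref{error erasure} playing the roles that coordinate supports and erasure decoding play in the Hamming-metric case. Suppose $ \mathbf{r} = \mathbf{c} + \mathbf{e} $ is received, with $ \mathbf{c} \in \mathcal{C} $ and $ {\rm wt_R}(\mathbf{e}) \leq t $, and set $ \mathcal{L} = {\rm RSupp}(\mathbf{e}) $, so that $ \dim(\mathcal{L}) \leq t $.

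First I would compute $ \mathcal{K}(\mathbf{r}) $. Fix an $ \mathbb{F}_{q^m} $-basis $ \mathbf{b}_1, \ldots, \mathbf{b}_s $ of $ \mathcal{B} $ (obtained, in the $ \varphi $-variant, as $ \mathbf{b}_i = \varphi(\mathbf{b}^\prime_i) $ from a basis of $ \mathcal{B}^\prime $, a one-time precomputation). By item 3 of Lemma \ref{properties product transposed}, $ (\mathbf{b}_i \star \mathbf{a}) \cdot \mathbf{r} = (\mathbf{b}_i^T \star \mathbf{r}) \cdot \mathbf{a} $, so after computing $ \mathbf{g}_i = \mathbf{b}_i^T \star \mathbf{r} \in \mathbb{F}_{q^m}^n $ one obtains $ \mathcal{K}(\mathbf{r}) = \{ \mathbf{a} \in \mathcal{A} \mid \mathbf{g}_i \cdot \mathbf{a} = 0, \ i = 1, \ldots, s \} $ as the solution space of a linear system over $ \mathbb{F}_{q^m} $ in the $ \dim(\mathcal{A}) \leq n $ coordinates of $ \mathbf{a} $ with respect to a basis of $ \mathcal{A} $. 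By hypothesis 1 and items 1 and 3 of Proposition \ref{properties extension} — this is where hypothesis 3, $ d_R(\mathcal{B}^\perp) > t $, is used — we get $ \mathcal{K}(\mathbf{r}) = \mathcal{K}(\mathbf{e}) = \mathcal{A}(\mathcal{L}) $.

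Next I would extract a nonzero $ \mathbf{a} \in \mathcal{K}(\mathbf{r}) $. By Lemma \ref{lemma shortened}, $ \mathcal{A}(\mathcal{L}) = \mathcal{A} \cap \mathcal{V}^\perp $ with $ \mathcal{V} = \mathcal{L} \otimes \mathbb{F}_{q^m} $ of $ \mathbb{F}_{q^m} $-dimension $ \dim(\mathcal{L}) \leq t $, hence $ \dim(\mathcal{K}(\mathbf{r})) \geq \dim(\mathcal{A}) + (n-t) - n = \dim(\mathcal{A}) - t \geq 1 $ by hypothesis 2, so such an $ \mathbf{a} $ exists and is produced by the same Gaussian elimination. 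Since $ \mathbf{a} \in \mathcal{A}(\mathcal{L}) $ we have $ \mathcal{L} \subseteq {\rm RSupp}(\mathbf{a})^\perp =: \mathcal{L}^\prime $, and since $ \mathbf{a} \in \mathcal{A} \setminus \{ \mathbf{0} \} $,
$$ \dim(\mathcal{L}^\prime) = n - {\rm wt_R}(\mathbf{a}) \leq n - d_R(\mathcal{A}) < d_R(\mathcal{C}), $$
using hypothesis 4. Thus $ {\rm RSupp}(\mathbf{r} - \mathbf{c}) = \mathcal{L} \subseteq \mathcal{L}^\prime $ with $ \dim(\mathcal{L}^\prime) < d_R(\mathcal{C}) $, and Lemma \ref{error erasure}, applied with $ \mathcal{L}^\prime $ in place of $ \mathcal{L} $, finishes the decoding: compute a generator matrix $ G $ of $ (\mathcal{L}^\prime)^\perp $ and return the unique $ \mathbf{c} \in \mathcal{C} $ solving $ \mathbf{r} G^T = \mathbf{x} G^T $. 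For the bound $ d_R(\mathcal{C}) \geq 2t + 1 $ I would argue by uniqueness: the decoder returns a well-defined codeword for every $ \mathbf{r} $ within rank distance $ t $ of $ \mathcal{C} $, so were there distinct $ \mathbf{c}_1, \mathbf{c}_2 \in \mathcal{C} $ with $ {\rm wt_R}(\mathbf{c}_1 - \mathbf{c}_2) \leq 2t $, a splitting of $ M(\mathbf{c}_1 - \mathbf{c}_2) $ into two matrices of rank at most $ t $ produces an $ \mathbf{r} $ within rank distance $ t $ of both $ \mathbf{c}_1 $ and $ \mathbf{c}_2 $, and running the first two steps on $ \mathbf{r} $ (applying Proposition \ref{properties extension} to the decompositions $ \mathbf{r} = \mathbf{c}_1 + (\mathbf{r} - \mathbf{c}_1) $ and $ \mathbf{r} = \mathbf{c}_2 + (\mathbf{r} - \mathbf{c}_2) $) gives $ \mathbf{a} \neq \mathbf{0} $ with $ {\rm RSupp}(\mathbf{r} - \mathbf{c}_1) $ and $ {\rm RSupp}(\mathbf{r} - \mathbf{c}_2) $ both inside $ {\rm RSupp}(\mathbf{a})^\perp $, a subspace of dimension $ < d_R(\mathcal{C}) $, contradicting the uniqueness clause of Lemma \ref{error erasure}.

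Finally, for the complexity: each step is a bounded number of matrix products, rank computations and Gaussian eliminations over $ \mathbb{F}_{q^m} $ on matrices of size $ O(n) $, which yields the claimed $ O(n^3) $ cost over $ \mathbb{F}_{q^m} $. Once the framework of Subsection \ref{using extension} is in place the computations are largely routine; the points that will require the most care are using item 3 of Lemma \ref{properties product transposed} to present $ \mathcal{K}(\mathbf{r}) $ as a genuine $ \mathbb{F}_{q^m} $-linear system (so that solving for a single nonzero $ \mathbf{a} $ over $ \mathbb{F}_{q^m} $ is legitimate and the $ O(n^3) $ bound is attainable), the intersection-dimension estimate in which hypothesis 2 enters, and keeping the bookkeeping straight in the distance argument so that the uniqueness of Lemma \ref{error erasure} is invoked with the auxiliary support $ \mathcal{L}^\prime $ rather than the true error support.
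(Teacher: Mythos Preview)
Your proposal is correct and follows essentially the same approach as the paper's own proof: compute $\mathcal{K}(\mathbf{r}) = \mathcal{A}(\mathcal{L})$ via Proposition \ref{properties extension}, extract a nonzero $\mathbf{a}$ using hypothesis 2 and Lemma \ref{lemma shortened}, bound $\dim(\mathcal{L}')$ via hypothesis 4, apply Lemma \ref{error erasure}, and derive the distance bound from uniqueness of the output. Your use of item 3 of Lemma \ref{properties product transposed} to present $\mathcal{K}(\mathbf{r})$ directly as an $\mathbb{F}_{q^m}$-linear system in $\mathbf{a}$ is slightly more explicit than the paper (which simply cites Proposition \ref{generators ext} for the $O(n)$ equations), but the argument is otherwise the same.
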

\begin{proof}
We will explicitly describe the decoding algorithm. As a consequence, we will derive that $ d_R(\mathcal{C}) \geq 2t + 1 $. Assume that the received codeword is $ \mathbf{r} = \mathbf{c} + \mathbf{e} $, with $ \mathbf{c} \in \mathcal{C} $, $ {\rm RSupp}(\mathbf{e}) = \mathcal{L} $ and $ \dim(\mathcal{L}) \leq t $.

Compute the space $ \mathcal{K}(\mathbf{r}) $, which is equal to $ \mathcal{K}(\mathbf{e}) $ by the first condition of $ t $-RECP and Proposition \ref{properties extension}, item 1. Observe that $ \mathcal{K}(\mathbf{r}) $ can be described by a system of $ O(n) $ linear equations by Proposition \ref{generators ext}.

By the third condition of $ t $-RECP and Proposition \ref{properties extension}, we have that $ \mathcal{A}(\mathcal{L}) = \mathcal{K}(\mathbf{e}) = \mathcal{K}(\mathbf{r}) $. Therefore, we have computed the space $ \mathcal{A}(\mathcal{L}) $.

By the second condition of $ t $-RECP and Lemma \ref{lemma shortened}, we have that $ \mathcal{A}(\mathcal{L}) = \mathcal{A} \cap \mathcal{V}^\perp \neq 0 $, and therefore we may take a nonzero $ \mathbf{a} \in \mathcal{A}(\mathcal{L}) $. Define $ \mathcal{L}^\prime = {\rm RSupp}(\mathbf{a})^\perp $. Since $ \mathbf{a} \in \mathcal{A}(\mathcal{L}) $, we have that $ \mathcal{L} \subseteq \mathcal{L}^\prime $.

Now, by the fourth condition of $ t $-RECP, we have that
$$ \dim(\mathcal{L}^\prime) = n - {\rm wt_R}(\mathbf{a}) \leq n - d_R(\mathcal{A}) < d_R(\mathcal{C}). $$
Hence, by the previous lemma, we may compute $ \mathbf{e} $ or $ \mathbf{c} $ by solving a system of linear equations using a generator matrix $ G $ of $ \mathcal{L}^{\prime \perp} $, or a parity check matrix $ H $ of $ \mathcal{C} $, respectively. This has complexity $ O(n^3) $ over $ \mathbb{F}_{q^m} $.

Finally, assume that $ d_R(\mathcal{C}) \leq 2t $ and take two different vectors $ \mathbf{c}, \mathbf{c}^\prime \in \mathcal{C} $ and $ \mathbf{e}, \mathbf{e}^\prime \in \mathbb{F}_{q^m}^n $ such that $ \mathbf{r} = \mathbf{c} + \mathbf{e} = \mathbf{c}^\prime + \mathbf{e}^\prime $ and $ {\rm wt_R}(\mathbf{e}), {\rm wt_R}(\mathbf{e}^\prime) \leq t $. The previous algorithm gives as output both vectors $ \mathbf{e} $ and $ \mathbf{e}^\prime $, but the output is unique, hence $ \mathbf{e} = \mathbf{e}^\prime $. This implies that $ \mathbf{c} = \mathbf{c}^\prime $, contradicting the hypothesis. Therefore, $ d_R(\mathcal{C}) \geq 2t+1 $.
\end{proof}

If $ m=n $, then the order of complexity over $ \mathbb{F}_q $ increases, although it still is polynomial in $ n $. On the other hand, if $ m $ is considerably smaller than $ n $, then the complexity is $ O(n^3) $ also over $ \mathbb{F}_q $.

Gabidulin codes \cite{gabidulin} have decoding algorithms of cubic complexity (see for instance \cite{gabidulin}), and an algorithm of quadratic complexity was obtained in \cite{loidreau-gabidulin}. As we will see in Section \ref{codes with}, the previous decoding algorithm may be applied to a wider variety of rank-metric codes.

\begin{remark} \label{error-locating I}
Observe that, from the proof of the previous theorem, if the pair $ (\mathcal{A},\mathcal{B}) $ satisfies the first three properties in Definition \ref{RECP type I}, then we may use it to find a subspace $ \mathcal{L}^\prime \subseteq \mathbb{F}_q^n $ that contains the rank support of the error vector. 

Therefore, we say in this case that $ (\mathcal{A},\mathcal{B}) $ is a $ t $-rank error-locating pair of type I for $ \mathcal{C} $.
\end{remark}

\subsection{Using the base inner product} \label{using base}

Now we turn to the case where we use the base inner product $ \langle , \rangle $. We will denote by $ \mathcal{D}^* $ the dual of an $ \mathbb{F}_q $-linear code $ \mathcal{D} \subseteq \mathbb{F}_{q}^{m \times n} $ with respect to $ \langle , \rangle $.

We will use the same notation as in the previous subsection, although now $ \mathcal{A}, \mathcal{C} \subseteq \mathbb{F}_{q}^{m \times n} $ and $ \mathcal{B} \subseteq \mathbb{F}_{q}^{m \times m} $ are $ \mathbb{F}_q $-linear, and $ \mathcal{B} \mathcal{A} \subseteq \mathcal{C}^* $, where
\begin{equation}
\mathcal{B} \mathcal{A} = \langle \{ B A \mid A \in \mathcal{A}, B \in \mathcal{B} \} \rangle_{\mathbb{F}_q}.
\label{A star B base}
\end{equation}
Observe that $ M( \mathcal{B}^\prime \star \mathcal{A}^\prime) = M(\mathcal{B}^\prime)M(\mathcal{A}^\prime) $, if $ \mathcal{A}^\prime, \mathcal{B}^\prime \subseteq \mathbb{F}_{q^m}^n $ are $ \mathbb{F}_q $-linear spaces, by Proposition \ref{connection vector products}. Generators of this space are now simpler to compute:

\begin{proposition} \label{generators base}
If $ A_1, A_2, \ldots, A_r $ generate $ \mathcal{A} $ and $ B_1, B_2, \ldots, B_s $ generate $ \mathcal{B} $, as $ \mathbb{F}_q $-linear spaces, then the matrices
$$ B_i A_j, $$
for $ 1 \leq i \leq s $ and $ 1 \leq j \leq r $, generate $ \mathcal{B} \mathcal{A} $ as an $ \mathbb{F}_q $-linear space.
\end{proposition}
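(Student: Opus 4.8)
The plan is to establish the two inclusions between $ \mathcal{B}\mathcal{A} $ and the $ \mathbb{F}_q $-linear span of the matrices $ B_iA_j $. The easy direction is that each $ B_iA_j $ lies in $ \mathcal{B}\mathcal{A} $: by construction it is a product of an element of $ \mathcal{B} $ with an element of $ \mathcal{A} $, so it belongs to the generating set in the definition \eqref{A star B base}, whence $ \langle \{ B_iA_j \mid 1 \leq i \leq s, 1 \leq j \leq r \} \rangle_{\mathbb{F}_q} \subseteq \mathcal{B}\mathcal{A} $. The work, such as it is, goes into the reverse inclusion.

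For that, I would take an arbitrary generator $ BA $ of $ \mathcal{B}\mathcal{A} $, with $ A \in \mathcal{A} $ and $ B \in \mathcal{B} $. Since $ A_1, \ldots, A_r $ generate $ \mathcal{A} $ and $ B_1, \ldots, B_s $ generate $ \mathcal{B} $ over $ \mathbb{F}_q $, we may write $ A = \sum_{j=1}^r \lambda_j A_j $ and $ B = \sum_{i=1}^s \mu_i B_i $ with all $ \lambda_j, \mu_i \in \mathbb{F}_q $. Matrix multiplication $ \mathbb{F}_q^{m \times m} \times \mathbb{F}_q^{m \times n} \longrightarrow \mathbb{F}_q^{m \times n} $ is $ \mathbb{F}_q $-bilinear, so expanding gives
$$ BA = \left( \sum_{i=1}^s \mu_i B_i \right) \left( \sum_{j=1}^r \lambda_j A_j \right) = \sum_{i=1}^s \sum_{j=1}^r \mu_i \lambda_j \, B_i A_j, $$
which exhibits $ BA $ as an $ \mathbb{F}_q $-linear combination of the $ B_iA_j $. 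Taking $ \mathbb{F}_q $-linear spans on both sides then yields $ \mathcal{B}\mathcal{A} \subseteq \langle \{ B_iA_j \} \rangle_{\mathbb{F}_q} $, and combining with the easy inclusion finishes the proof.

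I do not expect any genuine obstacle: this is the matrix analogue of Proposition \ref{generators ext}, but strictly simpler, because here both $ \mathcal{A} $ and $ \mathcal{B} $ are $ \mathbb{F}_q $-linear and the map $ (X,Y) \mapsto XY $ is already $ \mathbb{F}_q $-bilinear, so one avoids the auxiliary generators $ \alpha_l \mathbf{a}_j $ needed in the extension-field setting where $ \star $ is only $ \mathbb{F}_q $-linear in the second component. The one point to keep straight is that all scalars lie in $ \mathbb{F}_q $ and all spans are $ \mathbb{F}_q $-spans, which is exactly what the hypotheses and the definition \eqref{A star B base} stipulate.
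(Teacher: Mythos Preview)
Your argument is correct and is exactly the intended one: the paper states this proposition without proof, treating it as immediate from the $\mathbb{F}_q$-bilinearity of matrix multiplication, which is precisely what you spell out. Your observation that this is the simpler analogue of Proposition~\ref{generators ext}, with no need for the extra factors $\alpha_l$, is also on point.
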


Let $ D \in \mathbb{F}_q^{m \times n} $ and define
$$ \mathcal{K}(D) = \{ A \in \mathcal{A} \mid \langle B A , D \rangle = 0, \forall B \in \mathcal{B} \}. $$
Then $ \mathcal{K}(D) $ is again $ \mathbb{F}_q $-linear and the condition may be verified just on a basis of $ \mathcal{B} $ as $ \mathbb{F}_q $-linear space. On the other hand, if $ \mathcal{L} \subseteq \mathbb{F}_q^n $ is a linear subspace, we define in the same way
$$ \mathcal{A}(\mathcal{L}) = \{ A \in \mathcal{A} \mid {\rm Row}(A) \subseteq \mathcal{L}^\perp \}, $$
which is $ \mathbb{F}_q $-linear (recall that we use the classical product $ \cdot $ in $ \mathbb{F}_q^n $), since we still have that $ M^{-1}( \mathcal{A}(\mathcal{L}) ) = M^{-1}( \mathcal{A}) \cap \mathcal{V}^\perp $, $ \mathcal{V} = \mathcal{L} \otimes \mathbb{F}_{q^m} $.

The following properties still hold:

\begin{proposition} \label{properties base}
Let $ R = C + E $, where $ C \in \mathcal{C} $ and $ {\rm Rk}(E) \leq t $. Define also $ \mathcal{L} = {\rm Row}(E) \subseteq \mathbb{F}_q^n $. Then
\begin{enumerate}
\item
$ \mathcal{K}(R) = \mathcal{K}(E) $.
\item
$ \mathcal{A}(\mathcal{L}) \subseteq \mathcal{K}(E) $.
\item
If $ t < d_{R}(\mathcal{B}^*) $, then $ \mathcal{A}(\mathcal{L}) = \mathcal{K}(E) $.
\end{enumerate}
\end{proposition}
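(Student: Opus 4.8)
The plan is to mirror the proof of Proposition \ref{properties extension} verbatim, translating it into the matrix language: the extension inner product $\cdot$ is replaced by the base inner product $\langle\,,\,\rangle$, the bilinear pairing identity $\mathbf{c}(\mathbf{d})$ is replaced by the matrix product $EA^T$, and instead of items~3 and~5 of Lemma \ref{properties product transposed} I would invoke items~4 and~6 of that same lemma. Throughout, the only bookkeeping to watch is that $E$ and $A$ are $m\times n$ matrices, so $BA$ is $m\times n$ (matching $C,E$) while $EA^T$ is $m\times m$ (matching $\mathcal{B}$ and $\mathcal{B}^*\subseteq\mathbb{F}_q^{m\times m}$); this is what makes the rank bound and the adjointness identities legitimate.

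For item~1, I would take arbitrary $A\in\mathcal{A}$, $B\in\mathcal{B}$ and split $\langle BA,R\rangle=\langle BA,C\rangle+\langle BA,E\rangle$; since $BA\in\mathcal{B}\mathcal{A}\subseteq\mathcal{C}^*$ and $C\in\mathcal{C}$, the first term vanishes, so the condition defining $\mathcal{K}(R)$ is the same as the one defining $\mathcal{K}(E)$. For item~2, take $A\in\mathcal{A}(\mathcal{L})$, so ${\rm Row}(A)\subseteq\mathcal{L}^\perp={\rm Row}(E)^\perp$; by Lemma \ref{properties product transposed}, item~6, this gives $AE^T=0$, equivalently $EA^T=0$, and then by item~4 of the same lemma $\langle BA,E\rangle=\langle B,EA^T\rangle=0$ for every $B\in\mathcal{B}$, so $A\in\mathcal{K}(E)$. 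For item~3, by item~2 it suffices to prove $\mathcal{K}(E)\subseteq\mathcal{A}(\mathcal{L})$: given $A\in\mathcal{K}(E)$, item~4 gives $\langle B,EA^T\rangle=\langle BA,E\rangle=0$ for all $B\in\mathcal{B}$, hence $EA^T\in\mathcal{B}^*$; but ${\rm Rk}(EA^T)\le{\rm Rk}(E)\le t<d_R(\mathcal{B}^*)$ forces $EA^T=0$, and then item~6 yields ${\rm Row}(A)\subseteq{\rm Row}(E)^\perp=\mathcal{L}^\perp$, i.e.\ $A\in\mathcal{A}(\mathcal{L})$.

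I do not expect a genuine obstacle here; the statement is the exact analogue of Proposition \ref{properties extension} and the "base" duality theory has already been set up in Lemma \ref{properties product transposed}. The single point requiring a moment's care is choosing the right instance of the four equivalent forms in item~4 (namely $\langle BA,E\rangle=\langle B,EA^T\rangle$) so that the dual condition lands in $\mathbb{F}_q^{m\times m}$ where the hypothesis $t<d_R(\mathcal{B}^*)$ can be applied, and using $\mathcal{B}\mathcal{A}\subseteq\mathcal{C}^*$ exactly once, in item~1.
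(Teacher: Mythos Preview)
Your proposal is correct and matches the paper's own proof essentially step for step: both use $\mathcal{B}\mathcal{A}\subseteq\mathcal{C}^*$ for item~1, the identity $\langle BA,E\rangle=\langle B,EA^T\rangle$ together with $EA^T=0$ (via ${\rm Row}(A)\subseteq\mathcal{L}^\perp$) for item~2, and the rank bound ${\rm Rk}(EA^T)\le t<d_R(\mathcal{B}^*)$ forcing $EA^T=0$ for item~3. If anything, your write-up is slightly more explicit than the paper's in naming which items of Lemma~\ref{properties product transposed} are invoked.
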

\begin{proof}
\begin{enumerate}
\item
It also follows from $ \mathcal{B} \mathcal{A} \subseteq \mathcal{C}^* $.
\item
Take $ A \in \mathcal{A}(\mathcal{L}) $. Hence by Lemma \ref{properties product transposed}, it holds that $ EA^T = 0 $, since $ {\rm Row}(E) = \mathcal{L} $. Therefore, for every $ B \in \mathcal{B} $, we have that
$$ \langle B A , E \rangle = \langle B, EA^T \rangle = 0, $$
by Lemma \ref{properties product transposed}. Then item 2 follows.
\item
By the previous item, we only need to prove that $ \mathcal{K}(E) \subseteq \mathcal{A}(\mathcal{L}) $.

Let $ A \in \mathcal{K}(E) $. It follows from Lemma \ref{properties product transposed} that $ EA^T \in B^* $. Moreover, it holds that $ {\rm Rk}(EA^T) \leq {\rm Rk}(E) \leq t $. Since $ t < d_{R}(\mathcal{B}^*) $, it follows that $ EA^T = 0 $, which implies that $ {\rm Row}(A) \in \mathcal{L}^\perp $. \\
\end{enumerate}
\end{proof}

We now define $ t $-rank error-correcting pairs of type II, where we use the base product $ \langle , \rangle $, in contrast with the $ t $-RECP of last subsection.

\begin{definition} \label{RECP type II}
The pair $ (\mathcal{A},\mathcal{B}) $ is called a $ t  $-rank error-correcting pair ($ t $-RECP) of type II for $ \mathcal{C} $ if the following properties hold:
\begin{enumerate}
\item
$ \mathcal{B} \mathcal{A} \subseteq \mathcal{C}^* $.
\item
$ \dim_{\mathbb{F}_q}(\mathcal{A}) > mt $.
\item
$ d_R(\mathcal{B}^*) > t $.
\item
$ d_R(\mathcal{A}) + d_R(\mathcal{C}) > n $.
\end{enumerate}
\end{definition}

The same decoding algorithm, with the corresponding modifications, works in this case with polynomial complexity:

\begin{theorem} \label{decoding type II}
If $ (\mathcal{A},\mathcal{B}) $ is a $ t $-RECP of type II for $ \mathcal{C} $, then $ \mathcal{C} $ satisfies that $ d_R(\mathcal{C}) \geq 2t+1 $ and admits a decoding algorithm able to correct errors $ E $ with $ {\rm Rk}(E) \leq t $ with polynomial complexity in $ (m,n) $ over the field $ \mathbb{F}_q $. 
\end{theorem}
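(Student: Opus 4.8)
The plan is to run, essentially verbatim, the decoding algorithm from the proof of Theorem~\ref{decoding type I}, replacing the extension inner product $\cdot$ by the base inner product $\langle\,,\,\rangle$, the vector products $\star,\star_\varphi$ by ordinary matrix multiplication, and keeping track that every linear-algebra computation is now carried out over $\mathbb{F}_q$ rather than over $\mathbb{F}_{q^m}$. Suppose the received word is $R=C+E$ with $C\in\mathcal{C}$ and $\mathcal{L}={\rm Row}(E)$, so that $\dim(\mathcal{L})={\rm Rk}(E)\le t$. First I would compute the $\mathbb{F}_q$-linear space $\mathcal{K}(R)$: by the first condition of a $t$-RECP of type II and Proposition~\ref{properties base}, item~1, we have $\mathcal{K}(R)=\mathcal{K}(E)$; and by Proposition~\ref{generators base} together with item~4 of Lemma~\ref{properties product transposed} (so that $\langle B_iA,R\rangle=\langle B_i^TR,A\rangle$), the membership $A\in\mathcal{K}(R)$ is cut out inside $\mathcal{A}$ by $\dim_{\mathbb{F}_q}(\mathcal{B})$ linear equations in $A$; hence $\mathcal{K}(R)$ is obtained by solving a linear system of polynomial size over $\mathbb{F}_q$. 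By the third condition and Proposition~\ref{properties base}, items~2 and~3, this space equals $\mathcal{A}(\mathcal{L})$, so $\mathcal{A}(\mathcal{L})$ is now explicitly known.

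The only genuinely new point is the argument that $\mathcal{A}(\mathcal{L})\ne 0$, which is where the second condition of Definition~\ref{RECP type II} enters, with its extra factor $m$. Indeed, the space of matrices $A\in\mathbb{F}_q^{m\times n}$ with ${\rm Row}(A)\subseteq\mathcal{L}^\perp$ has $\mathbb{F}_q$-dimension $m\dim(\mathcal{L}^\perp)=m(n-\dim(\mathcal{L}))\ge m(n-t)$, hence $\mathbb{F}_q$-codimension at most $mt$ in $\mathbb{F}_q^{m\times n}$; since $\mathcal{A}(\mathcal{L})$ is the intersection of $\mathcal{A}$ with this space and $\dim_{\mathbb{F}_q}(\mathcal{A})>mt$, we get $\dim_{\mathbb{F}_q}(\mathcal{A}(\mathcal{L}))\ge\dim_{\mathbb{F}_q}(\mathcal{A})-mt>0$. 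Pick a nonzero $A\in\mathcal{A}(\mathcal{L})$ and set $\mathcal{L}^\prime={\rm Row}(A)^\perp$; then $\mathcal{L}\subseteq\mathcal{L}^\prime$ because ${\rm Row}(A)\subseteq\mathcal{L}^\perp$. By the fourth condition, $\dim(\mathcal{L}^\prime)=n-{\rm Rk}(A)\le n-d_R(\mathcal{A})<d_R(\mathcal{C})$.

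At this point I would invoke the matrix analogue of Lemma~\ref{error erasure} (with the same proof, using that ${\rm Row}(A)\subseteq\mathcal{L}$ is equivalent to $AG^T=0$ whenever $G$ is a generator matrix of $\mathcal{L}^\perp$): since ${\rm Row}(R-C)={\rm Row}(E)\subseteq\mathcal{L}^\prime$ and $\dim(\mathcal{L}^\prime)<d_R(\mathcal{C})$, the codeword $C$ is the unique element of $\mathcal{C}$ with ${\rm Row}(R-C)\subseteq\mathcal{L}^\prime$, and it is recovered by solving the linear system $RG^T=XG^T$ with $X\in\mathcal{C}$ and $G$ a generator matrix of $\mathcal{L}^{\prime\perp}$ (equivalently, one recovers $E$ from a parity-check matrix of $\mathcal{C}$ together with the support constraint). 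All of these steps amount to solving linear systems in $\Theta(mn)$ unknowns over $\mathbb{F}_q$, so the whole procedure has polynomial complexity in $(m,n)$ over $\mathbb{F}_q$; this is also the reason the cubic bound of Theorem~\ref{decoding type I} is here relaxed to mere polynomiality. Finally, the bound $d_R(\mathcal{C})\ge 2t+1$ follows exactly as in the type~I case: if $d_R(\mathcal{C})\le 2t$, pick distinct $C,C^\prime\in\mathcal{C}$ and $E,E^\prime$ of rank at most $t$ with $R=C+E=C^\prime+E^\prime$; the algorithm outputs both $E$ and $E^\prime$, but its output is unique, so $E=E^\prime$ and hence $C=C^\prime$, a contradiction.

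I expect no serious obstacle here: the argument is parallel to that of Theorem~\ref{decoding type I}, and the two places that need attention are (i) the dimension count over $\mathbb{F}_q$ carried out above, which is precisely calibrated to condition~2 of Definition~\ref{RECP type II}, and (ii) recording the matrix version of Lemma~\ref{error erasure}, whose proof is unchanged.
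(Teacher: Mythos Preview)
Your proposal is correct and follows exactly the approach the paper takes: the paper's own proof is a two-line remark that the argument of Theorem~\ref{decoding type I} goes through verbatim, with the only new point being the dimension count showing $\mathcal{A}(\mathcal{L})\neq 0$ (phrased there via $\mathcal{V}=\mathcal{L}\otimes\mathbb{F}_{q^m}$ and $M^{-1}(\mathcal{A}(\mathcal{L}))=M^{-1}(\mathcal{A})\cap\mathcal{V}^\perp$, which is equivalent to your codimension argument). Your write-up is simply a more detailed version of the same proof.
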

\begin{proof}
The proof is the same as in Theorem \ref{decoding type I}, with the corresponding modifications. Note that in this case, if $ \mathcal{L} = {\rm Row}(E) $ and $ \mathcal{V} = \mathcal{L} \otimes \mathbb{F}_{q^m} $, then $ \dim_{\mathbb{F}_q}(\mathcal{V}) = m \dim(\mathcal{L}) \leq mt $. On the other hand, $ M^{-1}(\mathcal{A}(\mathcal{L})) = M^{-1}(\mathcal{A}) \cap \mathcal{V} $, as in the previous subsection. Hence the condition $ \dim_{\mathbb{F}_q}(\mathcal{A}) > mt $ ensures that $ \mathcal{A}(\mathcal{L}) \neq 0 $.
\end{proof}

\begin{remark} \label{error-locating II}
As in Remark \ref{error-locating I}, if the pair $ (\mathcal{A},\mathcal{B}) $ satisfies the first three properties in Definition \ref{RECP type II}, then we may use it to find a subspace $ \mathcal{L}^\prime \subseteq \mathbb{F}_q^n $ that contains the rank support of the error vector. We say in this case that $ (\mathcal{A},\mathcal{B}) $ is a $ t $-rank error-locating pair of type II for $ \mathcal{C} $.
\end{remark}

\section{The connection between the two types of RECPs} \label{connections}

So far we have three types of error-correcting pairs: classical ECPs for linear codes in $ \mathbb{F}_q^n $ that correct errors in the Hamming metric, ECPs for $ \mathbb{F}_{q^m} $-linear codes in $ \mathbb{F}_{q^m}^n $ (RECPs of type I), and ECPs for general $ \mathbb{F}_q $-linear codes in $ \mathbb{F}_{q^m}^n $ or $ \mathbb{F}_q^{m \times n} $ (RECPs of type II), where the two latter types correct errors in the rank metric. In this section we will see that RECPs of type II generalize RECPs of type I. In Section \ref{codes with} we will see that, in some way, RECPs of type II also generalize ECPs for the Hamming metric.

We will need the following:

\begin{definition}
Given the basis $ \alpha_1, \alpha_2, \ldots, \alpha_m $ of $ \mathbb{F}_{q^m} $ over $ \mathbb{F}_q $, we say that it is orthogonal (or dual) to another basis $ \alpha^\prime_1, \alpha^\prime_2, \ldots, \alpha^\prime_m $ if
$$ {\rm Tr}(\alpha_i \alpha^\prime_j) = \delta_{i,j}, $$
for all $ i,j = 1,2, \ldots, m $. Here, $ {\rm Tr} $ denotes the trace of the extension $ \mathbb{F}_q \subseteq \mathbb{F}_{q^m} $.
\end{definition}

It is well-known that, for a given basis $ \alpha_1, \alpha_2, \ldots, \alpha_m $, there exists a unique orthogonal basis (see for instance the discussion after \cite[Definition 2.50]{lidl}). We will denote it as in the previous definition: $ \alpha^\prime_1, \alpha^\prime_2, \ldots, \alpha^\prime_m $. In particular, the dual basis of $ \alpha^\prime_1, \alpha^\prime_2, \ldots, \alpha^\prime_m $ is $ \alpha_1, \alpha_2, \ldots, \alpha_m $.

Now denote by $ M_\alpha, M_{\alpha^\prime} : \mathbb{F}_{q^m}^n \longrightarrow \mathbb{F}_q^{m \times n} $ the matrix representation maps associated to the previous bases, respectively. The following lemma is \cite[Theorem 21]{ravagnani}:

\begin{lemma} \label{duality ravagnani}
Given an $ \mathbb{F}_{q^m} $-linear code $ \mathcal{C} \subseteq \mathbb{F}_{q^m}^n $, it holds that
$$ M_{\alpha^\prime}(\mathcal{C}^\perp) = M_\alpha(\mathcal{C})^*. $$
\end{lemma}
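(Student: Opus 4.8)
The statement to prove is Lemma~\ref{duality ravagnani}: for an $ \mathbb{F}_{q^m} $-linear code $ \mathcal{C} \subseteq \mathbb{F}_{q^m}^n $, we have $ M_{\alpha^\prime}(\mathcal{C}^\perp) = M_\alpha(\mathcal{C})^* $. Since this is cited as \cite[Theorem 21]{ravagnani}, one could simply invoke the reference, but it is short enough to reprove. Both sides are $ \mathbb{F}_q $-linear subspaces of $ \mathbb{F}_q^{m \times n} $. The plan is a dimension count plus a single containment. First I would note that $ \dim_{\mathbb{F}_q}(M_{\alpha^\prime}(\mathcal{C}^\perp)) = m \dim_{\mathbb{F}_{q^m}}(\mathcal{C}^\perp) = m(n - \dim_{\mathbb{F}_{q^m}}(\mathcal{C})) $, because $ M_{\alpha^\prime} $ is an $ \mathbb{F}_q $-linear isomorphism onto $ \mathbb{F}_q^{m\times n} $ and an $ \mathbb{F}_{q^m} $-subspace of dimension $ k $ has $ \mathbb{F}_q $-dimension $ mk $. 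On the other hand, $ M_\alpha(\mathcal{C}) $ has $ \mathbb{F}_q $-dimension $ m\dim_{\mathbb{F}_{q^m}}(\mathcal{C}) $, so its dual $ M_\alpha(\mathcal{C})^* $ with respect to the nondegenerate bilinear form $ \langle\,,\rangle $ on $ \mathbb{F}_q^{mn} \cong \mathbb{F}_q^{m\times n} $ has $ \mathbb{F}_q $-dimension $ mn - m\dim_{\mathbb{F}_{q^m}}(\mathcal{C}) $. The two dimensions agree, so it suffices to prove one inclusion, say $ M_{\alpha^\prime}(\mathcal{C}^\perp) \subseteq M_\alpha(\mathcal{C})^* $.

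For the inclusion, take $ \mathbf{d} \in \mathcal{C}^\perp $ and $ \mathbf{c} \in \mathcal{C} $; I must show $ \langle M_\alpha(\mathbf{c}), M_{\alpha^\prime}(\mathbf{d}) \rangle = 0 $. Expand $ \mathbf{c} = \sum_{i=1}^m \alpha_i \mathbf{c}_i $ and $ \mathbf{d} = \sum_{j=1}^m \alpha_j^\prime \mathbf{d}_j $ with $ \mathbf{c}_i, \mathbf{d}_j \in \mathbb{F}_q^n $; by definition the rows of $ M_\alpha(\mathbf{c}) $ are the $ \mathbf{c}_i $ and the rows of $ M_{\alpha^\prime}(\mathbf{d}) $ are the $ \mathbf{d}_j $, so by \eqref{base inner}, $ \langle M_\alpha(\mathbf{c}), M_{\alpha^\prime}(\mathbf{d}) \rangle = \sum_{i=1}^m \mathbf{c}_i \cdot \mathbf{d}_i $ (standard dot product in $ \mathbb{F}_q^n $). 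The key computational identity is that the extension inner product $ \mathbf{c} \cdot \mathbf{d} \in \mathbb{F}_{q^m} $ relates to these base-field dot products via the trace:
$$ {\rm Tr}(\mathbf{c} \cdot \mathbf{d}) = {\rm Tr}\left( \sum_{i,j} \alpha_i \alpha_j^\prime (\mathbf{c}_i \cdot \mathbf{d}_j) \right) = \sum_{i,j} {\rm Tr}(\alpha_i \alpha_j^\prime)(\mathbf{c}_i \cdot \mathbf{d}_j) = \sum_{i} \mathbf{c}_i \cdot \mathbf{d}_i, $$
where the last step uses the orthogonality relation $ {\rm Tr}(\alpha_i \alpha_j^\prime) = \delta_{i,j} $ from the definition preceding the lemma and the $ \mathbb{F}_q $-linearity of the trace. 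Since $ \mathbf{d} \in \mathcal{C}^\perp $ means $ \mathbf{c} \cdot \mathbf{d} = 0 $ in $ \mathbb{F}_{q^m} $, we get $ \langle M_\alpha(\mathbf{c}), M_{\alpha^\prime}(\mathbf{d}) \rangle = {\rm Tr}(0) = 0 $, giving the desired containment, and equality follows by the dimension count.

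**Main obstacle.** There is no serious obstacle; the only point that needs care is bookkeeping the two different bases correctly — remembering that $ \mathcal{C}^\perp $ is expanded in the dual basis $ \alpha^\prime $ while $ \mathcal{C} $ is expanded in $ \alpha $, which is exactly what makes the cross terms $ {\rm Tr}(\alpha_i\alpha_j^\prime) $ collapse to $ \delta_{i,j} $. If instead one tried to use the same basis on both sides, the sum would involve the Gram matrix $ ({\rm Tr}(\alpha_i\alpha_j)) $ and the identity would fail, so the choice of dual basis is essential rather than cosmetic. One should also check that the ambient bilinear form $ \langle\,,\rangle $ is nondegenerate (it is the standard form on $ \mathbb{F}_q^{mn} $, so this is immediate) to justify $ \dim M_\alpha(\mathcal{C})^* = mn - \dim M_\alpha(\mathcal{C}) $.
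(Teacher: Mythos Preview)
Your proof is correct. The paper does not actually prove this lemma; it simply cites it as \cite[Theorem 21]{ravagnani}, so there is no in-paper argument to compare against. Your dimension-count-plus-inclusion approach, with the trace computation $ {\rm Tr}(\mathbf{c}\cdot\mathbf{d}) = \sum_i \mathbf{c}_i\cdot\mathbf{d}_i $ via the dual-basis identity $ {\rm Tr}(\alpha_i\alpha_j^\prime)=\delta_{i,j} $, is precisely the standard direct proof of this fact and is fully self-contained given the definitions already in the paper.
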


On the other hand, we have the following:

\begin{lemma} \label{duality distance}
For every $ \mathbb{F}_{q^m} $-linear code $ \mathcal{D} \subseteq \mathbb{F}_{q^m}^n $, it holds that
$$ d_R(\mathcal{D}^\perp) = d_R(M_\alpha(\mathcal{D})^*) = d_R(M_{\alpha^\prime}(\mathcal{D})^*). $$
\end{lemma}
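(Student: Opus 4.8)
The plan is to reduce everything to Lemma \ref{duality ravagnani} together with the fact that the minimum rank distance of the matrix representation of a code in $ \mathbb{F}_{q^m}^n $ does not depend on which basis of $ \mathbb{F}_{q^m} $ over $ \mathbb{F}_q $ is used. Concretely, if $ P \in \mathrm{GL}_m(\mathbb{F}_q) $ denotes the change-of-basis matrix relating $ \boldsymbol\alpha $ and $ \boldsymbol\alpha^\prime $, then $ M_{\alpha^\prime}(\mathbf{v}) = P \, M_\alpha(\mathbf{v}) $ for every $ \mathbf{v} \in \mathbb{F}_{q^m}^n $. Since left multiplication by an invertible matrix preserves rank, $ {\rm Rk}(M_{\alpha^\prime}(\mathbf{v})) = {\rm Rk}(M_\alpha(\mathbf{v})) $, hence $ d_R(M_\alpha(\mathcal{E})) = d_R(M_{\alpha^\prime}(\mathcal{E})) = d_R(\mathcal{E}) $ for every $ \mathbb{F}_{q^m} $-linear code $ \mathcal{E} \subseteq \mathbb{F}_{q^m}^n $ (the last equality being the definition of $ d_R(\mathcal{E}) $ recalled in Section \ref{preliminaries}).

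Next I would invoke Lemma \ref{duality ravagnani} twice. Applied directly to $ \mathcal{D} $, it gives $ M_{\alpha^\prime}(\mathcal{D}^\perp) = M_\alpha(\mathcal{D})^* $. Applied with the roles of the two bases exchanged — which is legitimate because $ \boldsymbol\alpha $ is precisely the dual basis of $ \boldsymbol\alpha^\prime $, as noted after the definition of orthogonal bases — it gives $ M_\alpha(\mathcal{D}^\perp) = M_{\alpha^\prime}(\mathcal{D})^* $.

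Combining these, $ d_R(M_\alpha(\mathcal{D})^*) = d_R(M_{\alpha^\prime}(\mathcal{D}^\perp)) = d_R(\mathcal{D}^\perp) $ and $ d_R(M_{\alpha^\prime}(\mathcal{D})^*) = d_R(M_\alpha(\mathcal{D}^\perp)) = d_R(\mathcal{D}^\perp) $, where in each case the last equality is the basis-independence observation of the first paragraph applied to the $ \mathbb{F}_{q^m} $-linear code $ \mathcal{D}^\perp \subseteq \mathbb{F}_{q^m}^n $. This establishes both asserted equalities simultaneously.

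I do not anticipate a genuine obstacle here; the only point that needs a little care is justifying the "swapped" application of Lemma \ref{duality ravagnani}, which is exactly the elementary fact that passing to the dual basis is an involution. Everything else is the standard observation that the rank metric on $ \mathbb{F}_{q^m}^n $ is intrinsic and insensitive to the chosen basis.
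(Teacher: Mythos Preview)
Your proof is correct and follows essentially the same route as the paper's own proof: both use Lemma \ref{duality ravagnani} together with the basis-independence of the rank metric, and both obtain the second equality by swapping the roles of $\boldsymbol\alpha$ and $\boldsymbol\alpha^\prime$ (using that taking the dual basis is an involution). Your version merely spells out the change-of-basis argument and the involution point more explicitly than the paper does.
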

\begin{proof}
It follows from the fact that $ d_R(\mathcal{D}^\perp) = d_R(M_{\alpha^\prime}(\mathcal{D}^\perp)) = d_R(M_\alpha(\mathcal{D})^*) $, and analogously interchanging the roles of $ \alpha $ and $ \alpha^\prime $.
\end{proof}

Therefore, we may now prove that RECPs of type II generalize RECPs of type I:

\begin{theorem} \label{RECP I admit RECP II}
Take $ \mathbb{F}_{q^m} $-linear codes $ \mathcal{A},\mathcal{C} \subseteq \mathbb{F}_{q^m}^n $ and $ \mathcal{B} \subseteq \mathbb{F}_{q^m}^m $. If $ (\mathcal{A},\mathcal{B}) $ is a $ t $-RECP of type I for $ \mathcal{C} $ (in the basis $ \alpha $), then $ (M_\alpha(\mathcal{A}),M_\alpha(\mathcal{B})) $ is a $ t $-RECP of type II for $ M_{\alpha^\prime}(\mathcal{C}) $.
\end{theorem}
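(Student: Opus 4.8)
The plan is to verify the four defining conditions of a $t$-RECP of type II for the pair $(M_\alpha(\mathcal{A}),M_\alpha(\mathcal{B}))$ with respect to the code $M_{\alpha'}(\mathcal{C})$, translating each condition of the type I RECP for $(\mathcal{A},\mathcal{B})$ and $\mathcal{C}$ using the dictionary provided by Proposition \ref{connection vector products}, Lemma \ref{duality ravagnani} and Lemma \ref{duality distance}. The underlying principle is that $M_\alpha$ is an $\mathbb{F}_q$-linear isomorphism that preserves rank weight (hence minimum rank distance), intertwines the product $\star$ with matrix multiplication, and sends $\cdot$-duality to $*$-duality after passing to the dual basis.

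First I would check condition 1. By the first type I condition, $\mathcal{B}\star\mathcal{A}\subseteq\mathcal{C}^\perp$. Applying $M_\alpha$ and using $M_\alpha(\mathbf{b}\star\mathbf{a})=M_\alpha(\mathbf{b})M_\alpha(\mathbf{a})$ from Proposition \ref{connection vector products}, we get $M_\alpha(\mathcal{B})M_\alpha(\mathcal{A})\subseteq M_\alpha(\mathcal{C}^\perp)$; one has to be slightly careful that $M_\alpha(\mathcal{B}\star\mathcal{A})$, the $\mathbb{F}_q$-span of all $M_\alpha(\mathbf{b}\star\mathbf{a})$, equals $M_\alpha(\mathcal{B})M_\alpha(\mathcal{A})$ as defined in \eqref{A star B base}, which follows since $M_\alpha$ is $\mathbb{F}_q$-linear and $\star$ is $\mathbb{F}_{q^m}$-linear in the first argument (so the $\mathbb{F}_q$- and $\mathbb{F}_{q^m}$-spans coincide, as remarked before Proposition \ref{generators ext}). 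Then Lemma \ref{duality ravagnani} gives $M_\alpha(\mathcal{C}^\perp) = M_{\alpha'}(\mathcal{C})^*$, so condition 1 of type II holds. For condition 3, the type I hypothesis is $d_R(\mathcal{B}^\perp)>t$; here $\mathcal{B}\subseteq\mathbb{F}_{q^m}^m$, so Lemma \ref{duality distance} (with $n$ replaced by $m$) gives $d_R(\mathcal{B}^\perp)=d_R(M_\alpha(\mathcal{B})^*)$, which is exactly what condition 3 of type II demands. For condition 4, since $M_\alpha$ preserves rank weight, $d_R(M_\alpha(\mathcal{A}))=d_R(\mathcal{A})$ and $d_R(M_{\alpha'}(\mathcal{C}))=d_R(\mathcal{C})$, so $d_R(\mathcal{A})+d_R(\mathcal{C})>n$ transfers directly.

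The step needing the most care is condition 2: the type I hypothesis $\dim(\mathcal{A})>t$ is a dimension over $\mathbb{F}_{q^m}$, while type II requires $\dim_{\mathbb{F}_q}(M_\alpha(\mathcal{A}))>mt$. Since $\mathcal{A}$ is $\mathbb{F}_{q^m}$-linear of dimension $k:=\dim(\mathcal{A})>t$, it has $\mathbb{F}_q$-dimension $mk$, and $M_\alpha$ being an $\mathbb{F}_q$-isomorphism gives $\dim_{\mathbb{F}_q}(M_\alpha(\mathcal{A}))=mk>mt$, as required. This is the main (and only slightly subtle) point — one must use that $\mathcal{A}$ is $\mathbb{F}_{q^m}$-linear, not merely $\mathbb{F}_q$-linear, so that the factor $m$ appears. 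Having verified all four conditions, the theorem follows, and I would close by noting that this shows every $t$-RECP of type I yields, via the dual-basis matrix representation, a $t$-RECP of type II, so the type II framework and its decoding algorithm (Theorem \ref{decoding type II}) subsume the type I case.
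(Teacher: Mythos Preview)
Your proof is correct and follows essentially the same approach as the paper: verify the four type~II conditions one by one using Proposition~\ref{connection vector products} for condition~1, the $\mathbb{F}_{q^m}$-linearity of $\mathcal{A}$ together with the fact that $M_\alpha$ is an $\mathbb{F}_q$-isomorphism for condition~2, Lemma~\ref{duality distance} for condition~3, and rank-preservation of $M_\alpha$ (and $M_{\alpha'}$) for condition~4, with Lemma~\ref{duality ravagnani} supplying the identity $M_\alpha(\mathcal{C}^\perp)=M_{\alpha'}(\mathcal{C})^*$. Your write-up is in fact slightly more explicit than the paper's (e.g.\ the remark on $\mathbb{F}_q$- vs.\ $\mathbb{F}_{q^m}$-spans and the careful note that Lemma~\ref{duality distance} is applied with $n$ replaced by $m$ for $\mathcal{B}$), but the logical content is the same.
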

\begin{proof}
Using Lemma \ref{duality ravagnani} and Proposition \ref{connection vector products}, we obtain that
$$ M_\alpha(\mathcal{B}) M_\alpha(\mathcal{A}) = M_\alpha(\mathcal{B} \star \mathcal{A}) \subseteq M_\alpha(\mathcal{C}^\perp) = M_{\alpha^\prime}(\mathcal{C})^*, $$
and the first condition is satisfied.

The second condition follows from the fact that $ \dim_{\mathbb{F}_q}(\mathcal{A}) = m \dim_{\mathbb{F}_{q^m}}(\mathcal{A}) $, and $ M_\alpha $ is an $ \mathbb{F}_q $-linear vector space isomorphism.

Finally, the third condition follows from Lemma \ref{duality distance} and the fourth condition remains unchanged. Hence the result follows.
\end{proof}

Observe that in the same way, $ t $-rank error-locating pairs of type II generalize $ t $-rank error-locating pairs of type I.

\section{MRD codes and bounds on the minimum rank distance}

In this section we will give bounds on the minimum rank distance of codes that follow from the properties of rank error-correcting pairs, in a similar way to the bounds in \cite{on-the-existence}. We will also see that, in some cases, MRD conditions on two of the codes imply that the third is also MRD.

We will fix $ \mathbb{F}_{q} $-linear codes $ \mathcal{A},\mathcal{C} \subseteq \mathbb{F}_{q}^{m \times n} $ and $ \mathcal{B} \subseteq \mathbb{F}_{q}^{m \times m} $. Due to Lemmas \ref{duality ravagnani} and \ref{duality distance}, and Proposition \ref{connection vector products}, the results in this section may be directly translated into results where we consider the ``extension'' inner product $ \cdot $ and $ \mathbb{F}_{q^m} $-linear codes in $ \mathbb{F}_{q^m}^n $.

We will make use of the following consequence of the Singleton bound:

\begin{lemma}
For every $ \mathbb{F}_q $-linear code $ \mathcal{D} \subseteq \mathbb{F}_q^{m \times n} $ it holds that
$$ d_R(\mathcal{D}) + d_R(\mathcal{D}^*) \leq n + 2. $$
\end{lemma}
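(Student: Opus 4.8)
The plan is to apply the Singleton bound for the rank metric to both $\mathcal{D}$ and its dual $\mathcal{D}^*$, and then combine the two inequalities. Recall that the rank-metric Singleton bound states that for an $\mathbb{F}_q$-linear code $\mathcal{D} \subseteq \mathbb{F}_q^{m \times n}$ with $n \leq m$ (which we may assume, as the rank metric is symmetric in $m,n$ via transposition), one has $\dim_{\mathbb{F}_q}(\mathcal{D}) \leq m(n - d_R(\mathcal{D}) + 1)$; equivalently $d_R(\mathcal{D}) \leq n - \dim_{\mathbb{F}_q}(\mathcal{D})/m + 1$.

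First I would write $k = \dim_{\mathbb{F}_q}(\mathcal{D})$, so that $\dim_{\mathbb{F}_q}(\mathcal{D}^*) = mn - k$, since $\langle\,,\,\rangle$ is the standard nondegenerate $\mathbb{F}_q$-bilinear form on $\mathbb{F}_q^{m\times n} \cong \mathbb{F}_q^{mn}$. Applying the Singleton bound to $\mathcal{D}$ gives $d_R(\mathcal{D}) \leq n - k/m + 1$. Applying it to $\mathcal{D}^*$ gives $d_R(\mathcal{D}^*) \leq n - (mn-k)/m + 1 = n - n + k/m + 1 = k/m + 1$. Adding the two inequalities, the terms $\pm k/m$ cancel and we obtain $d_R(\mathcal{D}) + d_R(\mathcal{D}^*) \leq n + 2$, as desired.

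There is essentially no obstacle here; the only point requiring a word of care is the normalization convention. If one instead has $m \leq n$, then the Singleton bound reads $\dim_{\mathbb{F}_q}(\mathcal{D}) \leq n(m - d_R(\mathcal{D}) + 1)$, and the same cancellation argument yields $d_R(\mathcal{D}) + d_R(\mathcal{D}^*) \leq m + 2$; since the statement as written fixes $n$ as the relevant parameter, one should assume $n \leq m$ (consistent with the running hypothesis used elsewhere in the paper, e.g. in Lemma~\ref{interpolation}), or else replace $n$ by $\min\{m,n\}$. The remaining details — that $\dim_{\mathbb{F}_q}(\mathcal{D}) + \dim_{\mathbb{F}_q}(\mathcal{D}^*) = mn$ and that the rank-metric Singleton bound holds — are standard (the latter is due to Delsarte), so the proof is just the two-line computation above.
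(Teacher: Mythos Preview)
Your proof is correct and follows exactly the same approach as the paper: apply the rank-metric Singleton bound to both $\mathcal{D}$ and $\mathcal{D}^*$, then add the two inequalities using $\dim_{\mathbb{F}_q}(\mathcal{D}) + \dim_{\mathbb{F}_q}(\mathcal{D}^*) = mn$. Your remark about the $n \leq m$ normalization is a useful clarification that the paper leaves implicit.
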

\begin{proof}
The Singleton bound implies that
$$ \dim_{\mathbb{F}_q}(\mathcal{D})/m \leq n - d_R(\mathcal{D}) + 1, \quad \textrm{and} \quad \dim_{\mathbb{F}_q}(\mathcal{D}^*)/m \leq n - d_R(\mathcal{D}^*) + 1. $$
Adding both inequalities up and using that $ \dim_{\mathbb{F}_q}(\mathcal{D}) + \dim_{\mathbb{F}_q}(\mathcal{D}^*) = mn $, the result follows.
\end{proof}

\begin{proposition} \label{first bound}
Assume that $ \mathcal{B} \mathcal{A} \subseteq \mathcal{C}^* $. If $ d_R(\mathcal{A}^*) > a > 0 $ and $ d_R(\mathcal{B}^*) > b > 0 $, then $ d_R(\mathcal{C}) \geq a+b $.
\end{proposition}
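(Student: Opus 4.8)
The plan is to argue by contradiction. If $ \mathcal{C} = 0 $ there is nothing to prove, so choose a nonzero $ C \in \mathcal{C} $ with $ {\rm Rk}(C) = w := d_R(\mathcal{C}) $ and assume $ w \le a+b-1 $; set $ \mathcal{L} = {\rm Row}(C) \subseteq \mathbb{F}_q^n $, so $ \dim \mathcal{L} = w $. The key starting point is that $ \mathcal{B}\mathcal{A} \subseteq \mathcal{C}^* $ together with Lemma \ref{properties product transposed}, item 4, gives $ \langle B, CA^T \rangle = \langle BA, C \rangle = 0 $ for all $ A \in \mathcal{A} $, $ B \in \mathcal{B} $, so that $ CA^T \in \mathcal{B}^* $ and $ {\rm Rk}(CA^T) \le {\rm Rk}(C) = w $ for every $ A \in \mathcal{A} $.

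First I would localise to $ \mathcal{L} $. Fix a subspace $ \mathcal{L}_0 \subseteq \mathcal{L} $ with $ \dim \mathcal{L}_0 = d_0 := \max\{ w-b, 0 \} $ and a complement $ \mathcal{L}_1 $ of $ \mathcal{L}_0 $ in $ \mathcal{L} $, so $ \dim \mathcal{L}_1 = \min\{ w, b \} \le b $. For $ A \in \mathcal{A}(\mathcal{L}_0) $ the rows of $ A $ lie in $ \mathcal{L}_0^\perp $; decomposing each row of $ C $ along $ \mathcal{L} = \mathcal{L}_0 \oplus \mathcal{L}_1 $ shows $ CA^T = C_1 A^T $ with $ {\rm Row}(C_1) \subseteq \mathcal{L}_1 $, hence $ {\rm Rk}(CA^T) \le \dim \mathcal{L}_1 \le b < d_R(\mathcal{B}^*) $. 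Since $ CA^T \in \mathcal{B}^* $, this forces $ CA^T = 0 $, i.e. $ {\rm Row}(A) \subseteq {\rm Row}(C)^\perp = \mathcal{L}^\perp $, i.e. $ A \in \mathcal{A}(\mathcal{L}) $. Thus $ \mathcal{A}(\mathcal{L}_0) \subseteq \mathcal{A}(\mathcal{L}) $, and since the reverse inclusion is trivial, $ \mathcal{A}(\mathcal{L}_0) = \mathcal{A}(\mathcal{L}) $.

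The contradiction then comes from counting dimensions. Writing $ R(\mathcal{U}) = \{ D \in \mathbb{F}_q^{m\times n} : {\rm Row}(D) \subseteq \mathcal{U} \} $ for $ \mathcal{U} \subseteq \mathbb{F}_q^n $, one has $ \mathcal{A}(\mathcal{L}') = \mathcal{A} \cap R((\mathcal{L}')^\perp) $, and $ R(\mathcal{U})^* = R(\mathcal{U}^\perp) $ with respect to $ \langle , \rangle $, so dualising the intersection yields $ \dim_{\mathbb{F}_q} \mathcal{A}(\mathcal{L}') = \dim_{\mathbb{F}_q}\mathcal{A} - m \dim \mathcal{L}' + \dim_{\mathbb{F}_q}( \mathcal{A}^* \cap R(\mathcal{L}') ) $. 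Taking $ \mathcal{L}' = \mathcal{L}_0 $: since $ d_0 \le a-1 < d_R(\mathcal{A}^*) $, a nonzero matrix in $ \mathcal{A}^* \cap R(\mathcal{L}_0) $ would have rank at most $ d_0 $, which is impossible; hence $ \dim_{\mathbb{F}_q}\mathcal{A}(\mathcal{L}_0) = \dim_{\mathbb{F}_q}\mathcal{A} - m d_0 $. Taking $ \mathcal{L}' = \mathcal{L} $ and reading the matrices of $ \mathcal{A}^* \cap R(\mathcal{L}) $ in the $ w $ coordinates given by a basis of $ \mathcal{L} $ presents it as a rank-metric code in $ \mathbb{F}_q^{m \times w} $ of minimum rank distance at least $ d_R(\mathcal{A}^*) \ge a+1 $, so the Singleton bound (as used in the preceding lemma) gives $ \dim_{\mathbb{F}_q}( \mathcal{A}^* \cap R(\mathcal{L}) ) \le m \max\{ w-a, 0 \} $ and therefore $ \dim_{\mathbb{F}_q}\mathcal{A}(\mathcal{L}) \le \dim_{\mathbb{F}_q}\mathcal{A} - m \min\{ w, a \} $. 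Combining with $ \dim_{\mathbb{F}_q}\mathcal{A}(\mathcal{L}) = \dim_{\mathbb{F}_q}\mathcal{A}(\mathcal{L}_0) = \dim_{\mathbb{F}_q}\mathcal{A} - m d_0 $ gives $ \min\{ w, a \} \le \max\{ w-b, 0 \} $; running through the sign cases, with $ a, b > 0 $, $ w \ge 1 $ and $ w \le a+b-1 $, every possibility is absurd, so $ w \ge a+b $.

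I expect the main obstacle to be the dimension bookkeeping of the last paragraph: establishing $ \mathcal{A}(\mathcal{L}') = \mathcal{A} \cap R((\mathcal{L}')^\perp) $, the duality identities $ R(\mathcal{U})^* = R(\mathcal{U}^\perp) $ and $ (\mathcal{X} \cap \mathcal{Y})^* = \mathcal{X}^* + \mathcal{Y}^* $, and — most delicately — recognising that $ \mathcal{A}^* \cap R(\mathcal{L}) $, compressed to the $ w $ surviving coordinates, is a genuine rank-metric code of length $ w $ carrying the same lower bound on its minimum rank distance, so that the Singleton bound applies in the sharpened form $ \dim/m \le w - d_R(\mathcal{A}^*) + 1 $. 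The rank estimate $ {\rm Rk}(CA^T) \le \min\{ w, b \} $ for $ A $ vanishing on $ \mathcal{L}_0 $ is the other point to handle with care, but the splitting $ \mathcal{L} = \mathcal{L}_0 \oplus \mathcal{L}_1 $ reduces it to a one-line computation.
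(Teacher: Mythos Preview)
Your argument is correct, but it takes a noticeably different route from the paper's proof. Both start from the identity $\langle BA, C\rangle = \langle B, CA^T\rangle$ (equivalently $\langle B^T, AC^T\rangle$) coming from Lemma~\ref{properties product transposed}, but diverge from there. The paper forms the single auxiliary code $\mathcal{A}_1 = \{ AG^T \mid A \in \mathcal{A} \} \subseteq \mathbb{F}_q^{m\times t}$, where $G$ is a generator matrix of $\mathcal{L} = {\rm Row}(C)$ and $t = {\rm Rk}(C)$; it then shows $d_R(\mathcal{A}_1) > b$ (because $\mathcal{A}_1$ sits inside $(\mathcal{B}^T)^*$) and $d_R(\mathcal{A}_1^*) > a$ (by sending $D \in \mathcal{A}_1^*$ to $DG \in \mathcal{A}^*$ without losing rank), and finishes in one line with the preceding lemma: $a+1+b+1 \le d_R(\mathcal{A}_1)+d_R(\mathcal{A}_1^*) \le t+2$.

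Your approach instead splits $\mathcal{L} = \mathcal{L}_0 \oplus \mathcal{L}_1$, shows $\mathcal{A}(\mathcal{L}_0) = \mathcal{A}(\mathcal{L})$ via the rank estimate $\mathrm{Rk}(CA^T)\le b$ for $A\in\mathcal{A}(\mathcal{L}_0)$, and then does a separate dimension computation on each side, invoking the Singleton bound on the compressed code $\mathcal{A}^*\cap R(\mathcal{L}) \hookrightarrow \mathbb{F}_q^{m\times w}$. This is sound (and the case distinctions at the end all check out), but it is longer and carries more bookkeeping than the paper's argument; the paper packages both uses of the Singleton bound into the single inequality $d_R(\mathcal{D})+d_R(\mathcal{D}^*)\le t+2$ applied once to $\mathcal{A}_1$, avoiding the $\mathcal{L}_0/\mathcal{L}_1$ split and the explicit $R(\cdot)$ duality formulas. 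What your approach buys is a slightly more ``geometric'' picture of why the rank-support kernel $\mathcal{A}(\mathcal{L})$ is too small, whereas the paper's is cleaner but more symmetric and less constructive.
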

\begin{proof}
Take $ C \in \mathcal{C} $ and $ A \in \mathcal{A} $, and define $ \mathcal{L} = {\rm Row}(C) \subseteq \mathbb{F}_q^n $. By Lemma \ref{properties product transposed}, we have that 
$$ 0 = \langle B  A , C \rangle = \langle B^T , AC^T \rangle, $$
for all $ B \in \mathcal{B} $ and all $ A \in \mathcal{A} $, which means that the $ \mathbb{F}_q $-linear space $ \mathcal{A}(C) = \{ AC^T \mid A \in \mathcal{A} \} \subseteq (\mathcal{B}^T)^* $, and hence $ d_R(\mathcal{A}(C)) > b $.

Let $ G $ be a $ t \times n $ generator matrix over $ \mathbb{F}_q $ of $ \mathcal{L} $ (where $ t = {\rm Rk}(C) $). Taking a subset of rows of $ C $ that generate $ \mathcal{L} $, we see that $ \mathcal{A}(C) $ is $ \mathbb{F}_q $-linearly isomorphic and rank-metric equivalent to $ \mathcal{A}_1 = \{ A G^T \mid A \in \mathcal{A} \} \subseteq \mathbb{F}_{q}^{m \times t} $. Take $ D \in \mathcal{A}_1^* $. For every $ A \in \mathcal{A} $, it holds that
$$ \langle A, D G \rangle = \langle A G^T, D \rangle = 0, $$
by Lemma \ref{properties product transposed}. Therefore, $ D G \in \mathcal{A}^* $. Moreover, $ {\rm Rk}(D) = {\rm Rk}(D G) $ since $ G $ is full rank, and hence $ {\rm Rk}(D) > a $. Therefore, $ d_R(\mathcal{A}_1^*) > a $. Together with $ d_R(\mathcal{A}_1) > b $ and the previous lemma, we obtain that
$$ a + 1 + b + 1 \leq d_R(\mathcal{A}_1) + d_R(\mathcal{A}_1^*) \leq t + 2, $$
that is, $ t \geq a + b $, and the result follows.
\end{proof}

We obtain the following corollary on MRD codes:

\begin{corollary}
Assume that $ n \leq m $ (otherwise, take transposed matrices), $ d_R(\mathcal{A}) = n - t $, $ \dim_{\mathbb{F}_q}(\mathcal{A}) = m(t+1) $, $ d_R(\mathcal{B}) = m - t + 1 $ and $ \dim_{\mathbb{F}_q}(\mathcal{B}) = mt $. Then, for all $ \mathcal{D} \subseteq (\mathcal{B} \mathcal{A})^* $, it holds that $ d_R(\mathcal{D}) \geq 2t+1 $ and $ (\mathcal{A}, \mathcal{B}) $ is a $ t $-RECP of type II for $ \mathcal{D} $.
\end{corollary}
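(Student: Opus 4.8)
The plan is to turn the dimension/distance hypotheses into MRD statements for $\mathcal{A}$ and $\mathcal{B}$ and then feed the resulting dual distances into Proposition~\ref{first bound}. First I would record two trivialities: from $d_R(\mathcal{B}) = m-t+1 \le m$ we get $t \ge 1$, and from $d_R(\mathcal{A}) = n-t \ge 1$ we get $t \le n-1$. Next I would check that $\mathcal{A}$ and $\mathcal{B}$ meet the Singleton bound with equality, i.e. are MRD: indeed $\dim_{\mathbb{F}_q}(\mathcal{A})/m = t+1 = n - d_R(\mathcal{A}) + 1$ and (since $\mathcal{B}$ is square) $\dim_{\mathbb{F}_q}(\mathcal{B})/m = t = m - d_R(\mathcal{B}) + 1$. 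Invoking the well-known fact that the dual of an MRD code is again MRD (see \cite{delsartebilinear}), and using $\dim_{\mathbb{F}_q}(\mathcal{A}^*) = m(n-t-1)$ and $\dim_{\mathbb{F}_q}(\mathcal{B}^*) = m(m-t)$, I get $d_R(\mathcal{A}^*) = t+2$ and $d_R(\mathcal{B}^*) = t+1$ (in the degenerate case $t=n-1$ one has $\mathcal{A}^* = \{0\}$ and the inequality $d_R(\mathcal{A}^*) > t+1$ is to be read as vacuously true).

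Then I would fix $\mathcal{D} \subseteq (\mathcal{B}\mathcal{A})^*$. Dualizing the inclusion and using that taking the dual twice with respect to the nondegenerate form $\langle\,,\,\rangle$ is the identity, I obtain $\mathcal{B}\mathcal{A} = (\mathcal{B}\mathcal{A})^{**} \subseteq \mathcal{D}^*$. Now I apply Proposition~\ref{first bound} with $\mathcal{C} = \mathcal{D}$, $a = t+1$ and $b = t$: the hypotheses $d_R(\mathcal{A}^*) = t+2 > a = t+1 > 0$ and $d_R(\mathcal{B}^*) = t+1 > b = t > 0$ are exactly what was established, so the proposition yields $d_R(\mathcal{D}) \ge a+b = 2t+1$, which is the first assertion of the corollary.

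Finally I would verify the four conditions of Definition~\ref{RECP type II} for the pair $(\mathcal{A},\mathcal{B})$ and the code $\mathcal{D}$: condition~1 is the inclusion $\mathcal{B}\mathcal{A} \subseteq \mathcal{D}^*$ just shown; condition~2 is $\dim_{\mathbb{F}_q}(\mathcal{A}) = m(t+1) = mt + m > mt$; condition~3 is $d_R(\mathcal{B}^*) = t+1 > t$; and condition~4 follows from $d_R(\mathcal{A}) + d_R(\mathcal{D}) \ge (n-t) + (2t+1) = n + t + 1 > n$. Hence $(\mathcal{A},\mathcal{B})$ is a $t$-RECP of type II for $\mathcal{D}$.

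The only genuinely non-elementary ingredient is the duality of MRD codes in the rank metric, which is what upgrades the Singleton upper bounds $d_R(\mathcal{A}^*) \le t+2$, $d_R(\mathcal{B}^*) \le t+1$ (these follow from the lemma preceding Proposition~\ref{first bound}) to the equalities I need; everything else is bookkeeping with the Singleton bound and a single application of Proposition~\ref{first bound}. The one point to handle with a little care is the boundary case $t = n-1$, where $\mathcal{A}^*$ is the zero code, so that $d_R(\mathcal{A}^*) > t+1$ should be interpreted via $d_R(\{0\}) = +\infty$ in order for Proposition~\ref{first bound} to apply; symmetrically, $t \le n-1 < m$ guarantees $\mathcal{B}^* \ne \{0\}$, so no such subtlety arises for $\mathcal{B}$.
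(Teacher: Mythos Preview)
Your proof is correct and follows essentially the same route as the paper: both arguments observe that the hypotheses force $\mathcal{A}$ and $\mathcal{B}$ to be MRD, invoke the MRD-duality theorem to obtain $d_R(\mathcal{A}^*)>t+1$ and $d_R(\mathcal{B}^*)>t$, apply Proposition~\ref{first bound} with $a=t+1$, $b=t$ to get $d_R(\mathcal{D})\ge 2t+1$, and then read off the four RECP conditions. You are simply more explicit than the paper (spelling out $\mathcal{B}\mathcal{A}\subseteq\mathcal{D}^*$ via double duality, checking each RECP condition, and noting the degenerate case $t=n-1$), but the substance is identical.
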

\begin{proof}
$ \mathcal{A} $ and $ \mathcal{B} $ are MRD codes, since their minimum rank distance attains the Singleton bound. By \cite[Theorem 5.5]{delsartebilinear} (see also \cite[Corollary 41]{ravagnani}), $ \mathcal{A}^* $ and $ \mathcal{B}^* $ are also MRD, which implies that
$$ d_R(\mathcal{A}^*) > t+1, \quad \textrm{and} \quad d_R(\mathcal{B}^*) > t. $$
By the previous proposition, it holds that $ d_R(\mathcal{D}) \geq 2t + 1 $. We see that the properties of RECPs of type II are satisfied, and the result follows.
\end{proof}

Now we obtain bounds on $ d_R(\mathcal{A}) $ from bounds on $ d_R(\mathcal{B}^*) $ and $ d_R(\mathcal{C}^*) $:

\begin{proposition}
Assume that $ \mathcal{B} \mathcal{A} \subseteq \mathcal{C}^* $. If $ d_R(\mathcal{B}^*) > b > 0 $ and $ d_R(\mathcal{C}^*) > c > 0 $, then $ d_R(\mathcal{A}) \geq b+c $.
\end{proposition}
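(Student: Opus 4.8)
The plan is to deduce this statement from Proposition~\ref{first bound} by a transposition trick, using the symmetry relations collected in Lemma~\ref{properties product transposed}. The point is that the hypotheses here are obtained from those of Proposition~\ref{first bound} by interchanging the roles of $ \mathcal{A} $ and $ \mathcal{C} $ and transposing $ \mathcal{B} $, so no genuinely new argument is needed; I expect essentially no obstacle, only careful bookkeeping of transposes and of which instance of the lemma is invoked.

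First I would show that $ \mathcal{B}\mathcal{A} \subseteq \mathcal{C}^* $ implies $ \mathcal{B}^T \mathcal{C} \subseteq \mathcal{A}^* $, where $ \mathcal{B}^T = \{ B^T \mid B \in \mathcal{B} \} \subseteq \mathbb{F}_q^{m \times m} $. Indeed, fix $ B \in \mathcal{B} $, $ C \in \mathcal{C} $ and $ A \in \mathcal{A} $; by the relation $ \langle BC , D \rangle = \langle B^T D , C \rangle $ from item~4 of Lemma~\ref{properties product transposed} (applied with the roles of $ C $ and $ D $ played by $ A $ and $ C $), we get $ \langle B^T C , A \rangle = \langle BA , C \rangle = 0 $, the last equality because $ BA \in \mathcal{C}^* $; since $ A \in \mathcal{A} $ was arbitrary, $ B^T C \in \mathcal{A}^* $. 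Next I would record that transposition is compatible with duality and preserves the rank distance: item~2 of Lemma~\ref{properties product transposed} gives $ \langle B^T , D \rangle = \langle B , D^T \rangle $, whence $ (\mathcal{B}^T)^* = (\mathcal{B}^*)^T $, and since a matrix and its transpose have the same rank, $ d_R((\mathcal{B}^T)^*) = d_R(\mathcal{B}^*) $.

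Finally I would apply Proposition~\ref{first bound} to the triple $ (\mathcal{C}, \mathcal{B}^T, \mathcal{A}) $ in place of $ (\mathcal{A}, \mathcal{B}, \mathcal{C}) $; the matrix sizes are respected, since $ \mathcal{B}^T \subseteq \mathbb{F}_q^{m \times m} $ and $ \mathcal{A}, \mathcal{C} \subseteq \mathbb{F}_q^{m \times n} $. Its hypotheses then read $ \mathcal{B}^T \mathcal{C} \subseteq \mathcal{A}^* $ (the first step), $ d_R(\mathcal{C}^*) > c > 0 $ (given), and $ d_R((\mathcal{B}^T)^*) = d_R(\mathcal{B}^*) > b > 0 $ (given plus the second step), and its conclusion is $ d_R(\mathcal{A}) \geq c + b = b + c $, as desired. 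The only points to watch are choosing the right instance of item~4 of Lemma~\ref{properties product transposed} so that the auxiliary product lands in $ \mathcal{A}^* $ rather than in a dual of $ \mathcal{B}^T $, and keeping the transposes consistent. If a self-contained proof were preferred, one could instead mimic the proof of Proposition~\ref{first bound} directly: fix a nonzero $ A \in \mathcal{A} $ with $ t = {\rm Rk}(A) $, write $ A = XG $ with $ G $ a full-rank $ t \times n $ generator matrix of $ {\rm Row}(A) $ and $ X $ of rank $ t $, observe that $ \mathcal{B}A \subseteq \mathcal{C}^* $ is rank-equivalent to $ \{ BX \mid B \in \mathcal{B} \} \subseteq \mathbb{F}_q^{m \times t} $ so the latter has minimum rank distance $ > c $, show via $ D \mapsto DX^T $ that its dual embeds rank-preservingly into $ \mathcal{B}^* $ so that dual has minimum rank distance $ > b $, and conclude $ t \geq b + c $ from the Singleton-type inequality $ d_R(\mathcal{B}X) + d_R((\mathcal{B}X)^*) \leq t + 2 $ stated just before Proposition~\ref{first bound} (the degenerate case $ \mathcal{B}A = 0 $ cannot occur, since it would force $ d_R(\mathcal{B}^*) = 1 $).
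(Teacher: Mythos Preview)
Your proposal is correct and follows essentially the same approach as the paper. The paper's proof is a one-sentence sketch that says to redo the argument of Proposition~\ref{first bound} with the roles of $\mathcal{A}$ and $\mathcal{C}$ interchanged, invoking precisely the two identities you isolate: $\langle BA, C \rangle = \langle B^T C, A \rangle$ from Lemma~\ref{properties product transposed} and $d_R(\mathcal{B}^*) = d_R((\mathcal{B}^T)^*)$. Your main approach packages this more cleanly as a direct black-box application of Proposition~\ref{first bound} to the triple $(\mathcal{C}, \mathcal{B}^T, \mathcal{A})$, which is a slight improvement in presentation but not a different idea. Your alternative self-contained sketch differs from the paper's in its choice of auxiliary space (you use $\mathcal{B}A$, the paper uses $A(\mathcal{C}) = \{AC^T \mid C \in \mathcal{C}\}$), but both routes work.
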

\begin{proof}
The proof is analogous to the proof of Proposition \ref{first bound}. In this case, we fix $ A \in \mathcal{A} $, with $ \mathcal{L} = {\rm Row}(A) $, $ t = {\rm Rk}(A) $, and consider $ A(\mathcal{C}) = \{ AC^T \mid C \in \mathcal{C} \} $. The rest of the proof follows the same lines, interchanging the roles of $ \mathcal{A} $ and $ \mathcal{C} $, and using the fact that $ \langle BA , C \rangle = \langle B^T C , A \rangle $, from Lemma \ref{properties product transposed}, and $ d_R(\mathcal{B}^*) = d_R((\mathcal{B}^T)^*) $.
\end{proof}

Again, we may give the following corollary on MRD codes:

\begin{corollary}
Assume that $\mathcal{B} \mathcal{A} \subseteq \mathcal{C}^* $ and $ n \leq m $. If $ d_R(\mathcal{C}) = 2t+1 $, $ \dim_{\mathbb{F}_q}(\mathcal{C}) = m(n - 2t) $ and $ (\mathcal{A}, \mathcal{B}) $ is a $ t $-RECP of type II for $ \mathcal{C} $, then $ d_R(\mathcal{A}) \geq n -t $ and $ mt < \dim_{\mathbb{F}_q}(\mathcal{A}) \leq m (t+1) $. If $ \dim_{\mathbb{F}_q}(\mathcal{A}) $ is a multiple of $ m $ (in particular, if $ M^{-1}(\mathcal{A}) $ is $ \mathbb{F}_{q^m} $-linear), then $ \mathcal{A} $ is MRD.
\end{corollary}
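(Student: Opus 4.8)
The plan is to read off from the hypotheses that $\mathcal{C}$ is MRD, dualize, feed the resulting distance bound into the preceding proposition to bound $d_R(\mathcal{A})$ from below, and then squeeze $\dim_{\mathbb{F}_q}(\mathcal{A})$ between the Singleton bound and the second RECP axiom.

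First I would note that $d_R(\mathcal{C}) = 2t+1$ together with $\dim_{\mathbb{F}_q}(\mathcal{C}) = m(n-2t)$ says precisely that $\mathcal{C}$ attains the Singleton bound $\dim_{\mathbb{F}_q}(\mathcal{C})/m \le n - d_R(\mathcal{C}) + 1$; in particular $n \ge 2t+1$, since $\dim_{\mathbb{F}_q}(\mathcal{C}) > 0$ (the case $t=0$ being trivial). Hence $\mathcal{C}$ is MRD, and by \cite[Theorem 5.5]{delsartebilinear} (equivalently \cite[Corollary 41]{ravagnani}) so is $\mathcal{C}^*$. Since $\dim_{\mathbb{F}_q}(\mathcal{C}^*) = mn - m(n-2t) = 2mt$, the Singleton bound applied to $\mathcal{C}^*$ forces $d_R(\mathcal{C}^*) = n - 2t + 1$.

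Next I would apply the preceding proposition (the one asserting $d_R(\mathcal{A}) \ge b+c$ whenever $\mathcal{B}\mathcal{A} \subseteq \mathcal{C}^*$, $d_R(\mathcal{B}^*) > b > 0$ and $d_R(\mathcal{C}^*) > c > 0$) with $b = t$ and $c = n - 2t$. Indeed, the third RECP axiom gives $d_R(\mathcal{B}^*) > t = b$, we have just shown $d_R(\mathcal{C}^*) = n - 2t + 1 > n - 2t = c$, and $b,c > 0$ because $t \ge 1$ and $n \ge 2t+1$. This yields $d_R(\mathcal{A}) \ge b + c = n - t$, the first assertion. The Singleton bound for $\mathcal{A}$ then gives $\dim_{\mathbb{F}_q}(\mathcal{A})/m \le n - d_R(\mathcal{A}) + 1 \le t+1$, while the second RECP axiom gives $\dim_{\mathbb{F}_q}(\mathcal{A}) > mt$; combining, $mt < \dim_{\mathbb{F}_q}(\mathcal{A}) \le m(t+1)$.

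Finally, if $\dim_{\mathbb{F}_q}(\mathcal{A})$ is a multiple of $m$ — which in particular holds when $M^{-1}(\mathcal{A})$ is $\mathbb{F}_{q^m}$-linear, since then $\dim_{\mathbb{F}_q}(\mathcal{A}) = m \dim_{\mathbb{F}_{q^m}}(M^{-1}(\mathcal{A}))$ — the only multiple of $m$ lying in $(mt, m(t+1)]$ is $m(t+1)$, so $\dim_{\mathbb{F}_q}(\mathcal{A}) = m(t+1)$. Then the Singleton bound gives $d_R(\mathcal{A}) \le n - (t+1) + 1 = n - t$, and combined with $d_R(\mathcal{A}) \ge n - t$ from above we get $d_R(\mathcal{A}) = n - t$, so $\mathcal{A}$ meets the Singleton bound and is MRD. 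There is no real obstacle here: the entire argument is bookkeeping with the Singleton bound, duality of MRD codes, and the previous proposition; the only points that need a line of care are verifying the strict positivity $b, c > 0$ (i.e. $n \ge 2t+1$) and the elementary fact that $\mathbb{F}_{q^m}$-linearity of $M^{-1}(\mathcal{A})$ makes $\dim_{\mathbb{F}_q}(\mathcal{A})$ divisible by $m$.
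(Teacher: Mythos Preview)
Your proof is correct and follows essentially the same approach as the paper: identify $\mathcal{C}$ as MRD, dualize to get $d_R(\mathcal{C}^*) = n - 2t + 1$, apply the preceding proposition with $b=t$, $c=n-2t$, and then use the Singleton bound for $\mathcal{A}$. Your write-up is in fact more careful than the paper's (which omits the explicit derivation of the upper bound $\dim_{\mathbb{F}_q}(\mathcal{A}) \le m(t+1)$ and the verification that $b,c>0$), but the argument is the same.
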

\begin{proof}
By the properties of RECPs of type II, we have that $ d_R(\mathcal{B}^*) > t $, and since $ \mathcal{C} $ is MRD, then $ \mathcal{C}^* $ is also MRD and we have that $ d_R(\mathcal{C}^*) = n - 2t + 1 $. Therefore, $ d_R(\mathcal{A}) \geq n - t $ by the previous proposition. By the properties of RECPs of type II, $ \dim_{\mathbb{F}_q}(\mathcal{A}) > mt $, and we are done. The last statement follows from the Singleton bound for $ \mathcal{A} $.
\end{proof}

We now turn to a bound analogous to \cite[Proposition 3.1]{on-the-existence}. The BCH bound on the minimum Hamming distance of cyclic codes is generalized by the Hartmann-Tzeng bounds \cite{hartmann} and further generalized by the Roos bound \cite{roos1, roos2}. The next proposition is the rank-metric equivalent of the Roos bound \cite{roos1, roos2} for the Hamming metric, as mentioned in \cite[Proposition 5]{pellikaan-duursma}.

\begin{proposition} \label{roos}
Assume the following properties for $ a,b > 0 $:  
$$ {\rm (1)} \mathcal{B} \mathcal{A} \subseteq \mathcal{C}^*, \quad {\rm (2)} \dim_{\mathbb{F}_q}(\mathcal{A}) > ma, \quad {\rm (3)} d_R(\mathcal{B}^*) >b, $$
$$ {\rm (4)} d_R(\mathcal{A}) + a + b > n, \quad \textrm{and} \quad {\rm (5)} d_R(\mathcal{A}^*) > 1. $$
Then it holds that $ d_R(\mathcal{C}) > a+b $.
\end{proposition}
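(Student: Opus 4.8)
The plan is to adapt the argument in the proof of Proposition \ref{first bound}, but with the roles shifted so that the Singleton-type inequality $d_R(\mathcal{A}_1) + d_R(\mathcal{A}_1^*) \leq t + 2$ is fed an extra ``$+1$'' coming from conditions (4) and (5), which together force the restricted code $\mathcal{A}_1$ to have dimension over $\mathbb{F}_q$ strictly larger than $ma$, hence minimum rank distance bounded below using condition (3) in a sharper way. Concretely, I would fix $C \in \mathcal{C}$ with $t = {\rm Rk}(C)$ and $\mathcal{L} = {\rm Row}(C)$, take a full-rank $t \times n$ generator matrix $G$ of $\mathcal{L}$, and consider $\mathcal{A}_1 = \{ A G^T \mid A \in \mathcal{A} \} \subseteq \mathbb{F}_q^{m \times t}$. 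As in Proposition \ref{first bound}, from $\mathcal{B}\mathcal{A} \subseteq \mathcal{C}^*$ and Lemma \ref{properties product transposed} one gets that $\mathcal{A}_1$ is rank-equivalent to a subcode of $(\mathcal{B}^T)^*$, so $d_R(\mathcal{A}_1) > b$ by condition (3).

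The new ingredient is a dimension count. The map $A \mapsto A G^T$ is $\mathbb{F}_q$-linear from $M^{-1}(\mathcal{A})$ (viewed as matrices) onto $\mathcal{A}_1$, with kernel $\{ A \in \mathcal{A} \mid A G^T = 0 \} = \{ A \in \mathcal{A} \mid {\rm Row}(A) \subseteq \mathcal{L}^\perp \} = \mathcal{A}(\mathcal{L})$; since $\dim_{\mathbb{F}_q}(\mathcal{L}^\perp) = n - t$, this kernel has $\mathbb{F}_q$-dimension at most $m(n-t)$, whence
$$ \dim_{\mathbb{F}_q}(\mathcal{A}_1) \geq \dim_{\mathbb{F}_q}(\mathcal{A}) - m(n-t) > ma - m(n-t) = m(a - n + t). $$
If $t \leq a + b$ we are trying to derive a contradiction with $d_R(\mathcal{C}) > a+b$ failing; so suppose for contradiction $t = {\rm Rk}(C) \leq a+b$ for some nonzero $C$. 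By condition (4), $a - n + t \geq a - n + (\text{something})$; more precisely I will run the argument of Proposition \ref{first bound} to get the Singleton inequality $d_R(\mathcal{A}_1) + d_R(\mathcal{A}_1^*) \leq t + 2$, where $d_R(\mathcal{A}_1^*) \geq ?$. Here I use condition (5): $d_R(\mathcal{A}^*) > 1$ guarantees that the ``$DG \in \mathcal{A}^*$, ${\rm Rk}(D) = {\rm Rk}(DG)$'' step of Proposition \ref{first bound} yields $d_R(\mathcal{A}_1^*) \geq d_R(\mathcal{A}^*) > 1$, i.e. at least $2$, so $\mathcal{A}_1$ has no rank-one codewords in its dual and in particular $\dim_{\mathbb{F}_q}(\mathcal{A}_1) \leq m(t-1)$ would be needed for a rank-one dual word, which is excluded; combining $\dim_{\mathbb{F}_q}(\mathcal{A}_1) > m(a-n+t)$ with $d_R(\mathcal{A}_1) > b$ and the Singleton bound $\dim_{\mathbb{F}_q}(\mathcal{A}_1)/m \leq t - d_R(\mathcal{A}_1) + 1 \leq t - b$ gives $a - n + t < t - b$, i.e. $a + b < n$, contradicting condition (4). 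Hence no such $C$ exists, so $d_R(\mathcal{C}) > a + b$.

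The main obstacle I anticipate is getting the bookkeeping exactly right in how condition (5) enters: the clean Singleton estimate $\dim_{\mathbb{F}_q}(\mathcal{A}_1)/m \leq t - d_R(\mathcal{A}_1) + 1$ combined with $d_R(\mathcal{A}_1) \geq b+1$ already yields $\dim_{\mathbb{F}_q}(\mathcal{A}_1) \leq m(t-b)$, and this collides with $\dim_{\mathbb{F}_q}(\mathcal{A}_1) > m(a-n+t)$ only when $a - n + t < t - b$, which is precisely the negation of (4) — so (4) alone seems to give the contradiction for $t \le a+b$ \emph{as long as $\mathcal{A}_1 \neq 0$}. The role of (5), $d_R(\mathcal{A}^*) > 1$, must be exactly to rule out the degenerate case where the restriction map has such a large kernel that $\mathcal{A}_1 = 0$ or where $d_R(\mathcal{A}_1)$ could drop below $b+1$ because of a collapse not covered by the $(\mathcal{B}^T)^*$ argument; I would need to check carefully that $(\mathcal{A}, \mathcal{B})$ being a near-RECP forces $\mathcal{A}_1$ nonzero and $d_R(\mathcal{A}_1) \geq b+1$ for \emph{every} $C$ of rank $\le a+b$, which is where condition (5) and condition (2) together are doing their work. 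Once that case analysis is pinned down, the inequality chain closes immediately.
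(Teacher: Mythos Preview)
Your core inequality chain has a genuine error. From the crude kernel bound $\dim_{\mathbb{F}_q}(\mathcal{A}(\mathcal{L})) \leq m(n-t)$ and the Singleton bound $\dim_{\mathbb{F}_q}(\mathcal{A}_1) \leq m(t-b)$ you deduce $a + t - n < t - b$, i.e.\ $a+b < n$, and then assert that this ``is precisely the negation of (4)''. It is not: condition (4) reads $d_R(\mathcal{A}) + a + b > n$, so $a+b < n$ contradicts nothing unless $d_R(\mathcal{A}) \leq 0$. The point is that your kernel estimate $\dim_{\mathbb{F}_q}(\mathcal{A}(\mathcal{L})) \leq m(n-t)$ bounds the kernel inside the \emph{ambient} space of all matrices with rows in $\mathcal{L}^\perp$ and throws away exactly the quantity $d_R(\mathcal{A})$ that (4) is designed to exploit. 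A Singleton bound on $\mathcal{A}(\mathcal{L})$ viewed in $\mathbb{F}_q^{m \times (n-t)}$ would recover that dependence, but even then you must separately exclude the regime $t \leq b$, and that is where condition (5) actually enters: via Proposition \ref{first bound} (applied with $a=1$), conditions (1), (3), (5) give $d_R(\mathcal{C}) \geq 1+b$, hence $t > b$ for every nonzero $C$. Your speculation that (5) serves to keep $\mathcal{A}_1$ nonzero or to protect $d_R(\mathcal{A}_1)$ is not its role.

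For comparison, the paper's proof does not use the restricted code $\mathcal{A}_1$ and a Singleton collision at all. After establishing $t > b$ as above, it assumes $b < t \leq a+b$, chooses subspaces $\mathcal{L}_{-} \subseteq \mathcal{L} \subseteq \mathcal{L}_{+}$ with $\dim(\mathcal{L}_{-}) = b$, $\dim(\mathcal{L}_{+}) = a+b$, and $\mathcal{L}_{+} = \mathcal{U} \oplus \mathcal{L}_{-}$; condition (2) then produces a nonzero $A \in \mathcal{A}$ with ${\rm Row}(A) \subseteq \mathcal{U}^\perp$. A projection argument shows ${\rm Rk}(AC^T) \leq b$, whence $AC^T = 0$ by (3), forcing ${\rm Row}(A) \subseteq \mathcal{L}_{+}^\perp$ and ${\rm Rk}(A) \leq n - a - b$, which contradicts (4) directly. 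This route uses $d_R(\mathcal{A})$ exactly once, at the final step, and avoids the dimension bookkeeping on $\mathcal{A}_1$ entirely.
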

\begin{proof}
Take $ C \in \mathcal{C} $ and let $ \mathcal{L} = {\rm Row}(C) \subseteq \mathbb{F}_q^n $ and $ t = {\rm Rk}(C) $. Conditions (1), (3) and (5) imply that $ t > b $ by Proposition \ref{first bound}. 

Assume that $ b < t \leq a+b $. Take linear subspaces $ \mathcal{L}_{-}, \mathcal{L}_{+}, \mathcal{U} \subseteq \mathbb{F}_q^n $ such that $ \mathcal{L}_{-} \subseteq \mathcal{L} \subseteq \mathcal{L}_{+} $, $ \mathcal{L}_{+} = \mathcal{U} \oplus \mathcal{L}_{-} $, $ b = \dim(\mathcal{L}_{-}) $ and $ a+b = \dim(\mathcal{L}_{+}) $. Since $ m \dim(\mathcal{U}) = ma < \dim_{\mathbb{F}_q}(\mathcal{A}) $ by condition (2), we have that $ \mathcal{A}(\mathcal{U}) \neq 0 $, and therefore there exists a non-zero $ A \in \mathcal{A} $ with $ {\rm Row}(A) \subseteq \mathcal{U}^\perp $.

It holds that every row in $ C $ is in $ \mathcal{L}_{+} $. Since the rows in $ A $ are in $ \mathcal{U}^\perp $, it holds that $ AC^T = A N^T $, where $ N $ is obtained from $ C $ by substituting every row by its projection from $ \mathcal{U} \oplus \mathcal{L}_{-} $ to $ \mathcal{L}_{-} $.

Therefore $ {\rm Rk}(AC^T) \leq {\rm Rk}(N) \leq \dim(\mathcal{L}_{-}) = b $, but $ AC^T \in (\mathcal{B}^T)^* $ by condition (1) and Lemma \ref{properties product transposed}, and hence $ AC^T = 0 $ by condition (3). This means that $ {\rm Row}(A) \subseteq \mathcal{L}_{-}^\perp \cap \mathcal{U}^\perp = \mathcal{L}_{+}^\perp  $. Thus, $ {\rm Rk}(A) \leq n - a - b < d_R(\mathcal{A}) $, which is absurd by condition (4), since $ A \neq 0 $. We conclude that $ t > a+b $ and we are done.
\end{proof}

Taking $ a=b=t $ for some $ t > 0 $, where $ a $ and $ b $ are as in the previous proof, we obtain the following particular case:

\begin{corollary} \label{stronger RECPs}
For all $ \mathbb{F}_q $-linear codes $ \mathcal{D} \subseteq (\mathcal{B} \mathcal{A})^* $ such that $ \dim_{\mathbb{F}_q}(\mathcal{A}) > mt $, $ d_R(\mathcal{B}^*) > t $, $ d_R(\mathcal{A}) > n - 2t $ and $ d_R(\mathcal{A}^*) > 1 $, it holds that $ d_R(\mathcal{D}) \geq 2t+1 $ and $ (\mathcal{A}, \mathcal{B}) $ is a $ t $-RECP of type II for $ \mathcal{D} $. 
\end{corollary}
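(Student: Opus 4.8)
The plan is to read this off Proposition \ref{roos} by specializing to $ a = b = t $. The first step is to translate the hypothesis $ \mathcal{D} \subseteq (\mathcal{B}\mathcal{A})^* $ into the form needed for Proposition \ref{roos}: since the duality $ (\cdot)^* $ with respect to $ \langle , \rangle $ is involutive on $ \mathbb{F}_q $-linear codes, $ \mathcal{D} \subseteq (\mathcal{B}\mathcal{A})^* $ is equivalent to $ \mathcal{B}\mathcal{A} \subseteq \mathcal{D}^* $, which is condition (1) of Proposition \ref{roos} with $ \mathcal{C} $ replaced by $ \mathcal{D} $. The remaining hypotheses of the corollary then match conditions (2)--(5) verbatim after setting $ a = b = t $: the bound $ \dim_{\mathbb{F}_q}(\mathcal{A}) > mt $ is (2), the bound $ d_R(\mathcal{B}^*) > t $ is (3), the bound $ d_R(\mathcal{A}) > n - 2t $ rewrites as $ d_R(\mathcal{A}) + t + t > n $, which is (4), and $ d_R(\mathcal{A}^*) > 1 $ is (5). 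Proposition \ref{roos} therefore gives $ d_R(\mathcal{D}) > a + b = 2t $, i.e. $ d_R(\mathcal{D}) \geq 2t + 1 $.

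The second step is to verify that $ (\mathcal{A}, \mathcal{B}) $ meets all four requirements of Definition \ref{RECP type II} for $ \mathcal{D} $. Conditions 1, 2 and 3 of that definition are exactly the hypotheses already used (with condition 1 being the reformulation $ \mathcal{B}\mathcal{A} \subseteq \mathcal{D}^* $). For condition 4 one combines $ d_R(\mathcal{A}) \geq n - 2t + 1 $, which follows from $ d_R(\mathcal{A}) > n - 2t $, with the bound $ d_R(\mathcal{D}) \geq 2t + 1 $ just obtained, so that $ d_R(\mathcal{A}) + d_R(\mathcal{D}) \geq n + 2 > n $. This finishes both assertions.

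Since the whole argument is a bookkeeping specialization of Proposition \ref{roos} followed by a check against Definition \ref{RECP type II}, there is no substantive obstacle. The only points deserving a line of care are the equivalence $ \mathcal{D} \subseteq (\mathcal{B}\mathcal{A})^* \iff \mathcal{B}\mathcal{A} \subseteq \mathcal{D}^* $, and keeping track of strict versus non-strict inequalities when deducing condition 4 of Definition \ref{RECP type II} from $ d_R(\mathcal{A}) > n - 2t $ and $ d_R(\mathcal{D}) \geq 2t + 1 $.
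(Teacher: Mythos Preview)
Your proof is correct and follows exactly the approach the paper takes: specialize Proposition \ref{roos} with $a = b = t$ to obtain $d_R(\mathcal{D}) \geq 2t+1$, then read off the four conditions of Definition \ref{RECP type II}. The paper's own proof is a single sentence to this effect; you have simply spelled out the bookkeeping (in particular the duality equivalence $\mathcal{D} \subseteq (\mathcal{B}\mathcal{A})^* \Leftrightarrow \mathcal{B}\mathcal{A} \subseteq \mathcal{D}^*$ and the verification of condition 4 via $d_R(\mathcal{A}) + d_R(\mathcal{D}) \geq (n-2t+1) + (2t+1) > n$) that the paper leaves implicit.
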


Observe that the previous result states that, if some conditions on $ \mathcal{A} $ and $ \mathcal{B} $ hold, then they form a $ t $-RECP of type II for all $ \mathbb{F}_q $-linear codes contained in $ (\mathcal{B} \mathcal{A})^* $. That is, we have found a $ t $-rank error-correcting algorithm for all $ \mathbb{F}_q $-linear subcodes of $ (\mathcal{B} \mathcal{A})^* $.

\section{Some codes with a $ t $-RECP} \label{codes with}

In this section, we study families of codes that admit a $ t $-RECP of some type.

\subsection{Hamming-metric codes with ECPs} \label{subsection Hamming}

Take $ \mathbb{F}_q $-linear codes $ \mathcal{A}, \mathcal{B}, \mathcal{C} \subseteq \mathbb{F}_q^n $ such that $ (\mathcal{A}, \mathcal{B}) $ is a $ t $-ECP for $ \mathcal{C} $ in the Hamming metric. We will see that the algorithm presented in Theorem \ref{decoding type II} is actually an extension of the decoding algorithm in the Hamming metric using $ t $-ECPs \cite{on-ECP, pellikaan-error-location}. We observe the following (recall the definition of $ D $ in (\ref{diagonal})):

\begin{remark} \label{products extension Hamming}
For all $ \mathbf{a}, \mathbf{b} \in \mathbb{F}_q^n $, it holds that
$$ \mathbf{a} \cdot \mathbf{b} = \langle D(\mathbf{a}), D(\mathbf{b}) \rangle. $$
Moreover, it holds that
$$ D(\mathcal{B})  D(\mathcal{A}) \subseteq D(\mathcal{C})^*. $$
\end{remark}

Therefore, from the previous remark and the properties of $ D $, the $ \mathbb{F}_q $-linear codes $ D(\mathcal{A}), D(\mathcal{B}), D(\mathcal{C}) \subseteq \mathbb{F}_{q}^{n \times n} $ satisfy the following conditions:
\begin{enumerate}
\item
$ D(\mathcal{B})  D(\mathcal{A}) \subseteq D(\mathcal{C})^* $.
\item
$ \dim_{\mathbb{F}_q}(D(\mathcal{A})) > t $.
\item
$ d_R(D(\mathcal{B})^*) = 1 $.
\item
$ d_R(D(\mathcal{A})) + d_R(D(\mathcal{C})) > n $.
\end{enumerate}
That is, $ (D(\mathcal{A}), D(\mathcal{B})) $ satisfy the same conditions as $ t $-RECPs of type II for $ D(\mathcal{C}) $, except that conditions 2 and 3 are weakened. However, the previous conditions are enough to correct any error $ D(\mathbf{e}) \in \mathbb{F}_q^{n \times n} $, where $ \mathbf{e} \in \mathbb{F}_{q}^{n} $ and $ {\rm wt_H}(\mathbf{e}) \leq t $, 

Assume the received vector is $ R = D(\mathbf{c}) + D(\mathbf{e}) $, with $ \mathbf{c} \in \mathcal{C} $ and $ {\rm wt_H}(\mathbf{e}) \leq t $. Correcting the diagonal of $ R = D(\mathbf{c}) + D(\mathbf{e}) $ for the Hamming metric is the same as correcting the matrix $ R = D(\mathbf{c}) + D(\mathbf{e}) $ itself for the rank metric. We will next show that the algorithm in Theorem \ref{decoding type II} is exactly the same as the algorithm for ECPs in the Hamming metric.

Define $ I \subseteq \{ 1,2, \ldots, n \} $ as the Hamming support of $ \mathbf{e} \in \mathbb{F}_q^n $, that is, $ I  = {\rm HSupp}(\mathbf{e}) = \{ i \in \{ 1,2, \ldots, n \} \mid e_i \neq 0 \} $, and define
$$ \mathcal{K}_H(\mathbf{e}) = \{ \mathbf{a} \in \mathcal{A} \mid (\mathbf{b} \ast \mathbf{a}) \cdot \mathbf{e} = 0, \forall \mathbf{b} \in \mathcal{B} \}, \textrm{ and} $$
$$ \mathcal{A}(I) = \{ \mathbf{a} \in \mathcal{A} \mid {\rm HSupp}(\mathbf{a}) \subseteq I^c \}, $$
where $ I^c $ denotes the complementary of $ I $. It holds that $ {\rm Row}(D(\mathbf{e})) = \mathcal{L}_I \subseteq \mathbb{F}_q^n 
$, the space generated by the vectors $ \mathbf{e}_i $ in the canonical basis, for $ i \in I $. Therefore, by Remark \ref{products extension Hamming}, the properties of $ D $, Proposition \ref{connection vector products} and the fact that $ \mathcal{L}_I^\perp = \mathcal{L}_{I^c} $, it holds that
$$ \mathcal{K}(R) = \mathcal{K}(D(\mathbf{e})) = D(\mathcal{K}_H(\mathbf{e})) \quad \textrm{and} \quad (D(\mathcal{A}))(\mathcal{L}_I) = D(\mathcal{A}(I)). $$
Moreover, since $ \mathcal{A}(I) = \mathcal{K}_H(\mathbf{e}) $ by the properties of ECPs in the Hamming metric, we also have that
$$ \mathcal{K}(R) = D(\mathcal{K}_H(\mathbf{e})) = D(\mathcal{A}(I)) = (D(\mathcal{A}))(\mathcal{L}_I). $$
Hence, computing $ \mathcal{K}(R) $ implies computing $ (D(\mathcal{A}))(\mathcal{L}_I) $. Finally, since $ \mathcal{A}(I) \neq 0 $ by the properties of ECPs, we have that $ (D(\mathcal{A}))(\mathcal{L}_I) \neq 0 $. The rest of the algorithm goes in the same way as in Theorem \ref{decoding type II}. That is, the decoding algorithm in Theorem \ref{decoding type II} actually extends the decoding algorithm given by ECPs in the Hamming metric.

\subsection{Gabidulin codes}

Gabidulin codes, introduced in \cite{gabidulin}, are a well-known family of MRD $ \mathbb{F}_{q^m} $-linear codes in $ \mathbb{F}_{q^m}^n $, when $ n \leq m $. In \cite{new-construction}, a generalization of these codes is given, also formed by MRD codes.

Fix $ n \leq m $. They can be defined as follows. For each $ \mathbf{b} = (b_1,b_2, \ldots, b_n) \in \mathbb{F}_{q^m}^n $,  where $ b_1, b_2, \ldots, b_n $ are linearly independent over $ \mathbb{F}_q $, each $ k = 1,2, \ldots, n $ and each integer $ r $ such that $ r $ and $ m $ are coprime, we define the (generalized) Gabidulin code of dimension $ k $ in $ \mathbb{F}_{q^m}^n $ as
$$ {\rm Gab}_{k,m,n}(r, \mathbf{b}) = \{ (F(b_1), F(b_2), \ldots, F(b_n)) \mid F \in \mathcal{L}_{q,r,k} \mathbb{F}_{q^m}[x] \}, $$
where $ \mathcal{L}_{q,r,k} \mathbb{F}_{q^m}[x] $ denotes the $ \mathbb{F}_{q^m} $-linear space of $ q $-linearized polynomials of the form
$$ F(x) = a_0 x + a_1 x^{[r]} + a_2 x^{[2r]} + a_3 x^{[3r]} + \cdots + a_{k-1} x^{[(k-1)r]}, $$
for some $ a_0,a_1, \ldots, a_{k-1} \in \mathbb{F}_{q^m} $. Observe that classical Gabidulin codes as defined in  \cite{gabidulin} are obtained by setting $ r=1 $. Also observe that, for any invertible matrix $ P \in \mathbb{F}_q^{n \times n} $, it holds that
$$ {\rm Gab}_{k,m,n}(r, \mathbf{b}) P = {\rm Gab}_{k,m,n}(r, \mathbf{b}P), $$
and hence $ \mathbb{F}_{q^m} $-linearly rank-metric equivalent codes to Gabidulin codes are again Gabidulin codes.

The following lemma follows from Proposition \ref{evaluation properties}:

\begin{lemma} \label{gabidulin 1}
For every positive integers $ k,l $ with $ k + l -1 \leq n $, it holds that
$$ {\rm Gab}_{k,m,m}(r, \boldsymbol\alpha) \star {\rm Gab}_{l,m,n}(r, \mathbf{b}) = {\rm Gab}_{k+l-1,m,n}(r, \mathbf{b}). $$
In the case $ r=1 $, it holds that
$$ {\rm Gab}_{k,m,n}(1, \boldsymbol\alpha_n) \star {\rm Gab}_{l,m,n}(1, \mathbf{b}) = {\rm Gab}_{k+l-1,m,n}(1, \mathbf{b}). $$
\end{lemma}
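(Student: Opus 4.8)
The plan is to realise both factors as evaluation images of spaces of $ q $-linearized polynomials and then use Proposition~\ref{evaluation properties} to replace the $ \star $-product of codes by the symbolic composition $ \circ $ of the corresponding polynomial spaces. First I would record that, by definition of Gabidulin codes, $ {\rm Gab}_{k,m,m}(r,\boldsymbol\alpha) = {\rm ev}_{\boldsymbol\alpha}(\mathcal{L}_{q,r,k}\mathbb{F}_{q^m}[x]) $ and $ {\rm Gab}_{l,m,n}(r,\mathbf{b}) = {\rm ev}_{\mathbf{b}}(\mathcal{L}_{q,r,l}\mathbb{F}_{q^m}[x]) $; both expressions are legitimate because $ k \leq k+l-1 \leq n \leq m $ and $ \alpha_1,\ldots,\alpha_m $ are linearly independent over $ \mathbb{F}_q $, so $ \boldsymbol\alpha $ is an admissible evaluation vector of length $ m $.

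Next, since $ {\rm ev}_{\mathbf{b}} $ is $ \mathbb{F}_{q^m} $-linear and $ \star $ is $ \mathbb{F}_{q^m} $-linear in its first argument and additive in its second, the code $ {\rm Gab}_{k,m,m}(r,\boldsymbol\alpha) \star {\rm Gab}_{l,m,n}(r,\mathbf{b}) $ is the $ \mathbb{F}_{q^m} $-span of the vectors $ {\rm ev}_{\boldsymbol\alpha}(F) \star {\rm ev}_{\mathbf{b}}(G) $ with $ F \in \mathcal{L}_{q,r,k}\mathbb{F}_{q^m}[x] $ and $ G \in \mathcal{L}_{q,r,l}\mathbb{F}_{q^m}[x] $. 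By Proposition~\ref{evaluation properties}, item 1, each such vector equals $ {\rm ev}_{\mathbf{b}}(F \circ G) $, so the product equals $ {\rm ev}_{\mathbf{b}} $ applied to the $ \mathbb{F}_{q^m} $-span of $ \{ F \circ G \mid F \in \mathcal{L}_{q,r,k}\mathbb{F}_{q^m}[x],\ G \in \mathcal{L}_{q,r,l}\mathbb{F}_{q^m}[x] \} $. Taking $ F = x^{[ir]} $ and $ G = x^{[jr]} $ gives $ F \circ G = x^{[(i+j)r]} $, and as $ i,j $ run over $ \{0,\ldots,k-1\} $ and $ \{0,\ldots,l-1\} $ the exponent $ (i+j)r $ runs over $ \{0,r,\ldots,(k+l-2)r\} $; conversely every $ F \circ G $ is $ q $-linearized of degree at most $ [(k+l-2)r] $ and hence an $ \mathbb{F}_{q^m} $-combination of $ x^{[0]},x^{[r]},\ldots,x^{[(k+l-2)r]} $. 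Thus the span is exactly $ \mathcal{L}_{q,r,k+l-1}\mathbb{F}_{q^m}[x] $, and the product equals $ {\rm ev}_{\mathbf{b}}(\mathcal{L}_{q,r,k+l-1}\mathbb{F}_{q^m}[x]) = {\rm Gab}_{k+l-1,m,n}(r,\mathbf{b}) $. The case $ r=1 $ is handled identically, now with $ {\rm Gab}_{k,m,n}(1,\boldsymbol\alpha_n) = {\rm ev}_{\boldsymbol\alpha_n}(\mathcal{L}_{q,1,k}\mathbb{F}_{q^m}[x]) $ as the first factor and item 2 of Proposition~\ref{evaluation properties} in place of item 1.

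I do not expect a genuine obstacle; the only points that need care are (i) in the $ r=1 $ case, verifying the degree hypothesis of Proposition~\ref{evaluation properties}, item 2, namely that every $ F \in \mathcal{L}_{q,1,k}\mathbb{F}_{q^m}[x] $ has degree at most $ [k-1] < [n] $, which holds because $ k \leq n $; and (ii) the bookkeeping that $ \{(i+j)r : 0 \leq i \leq k-1,\ 0 \leq j \leq l-1\} = \{0,r,\ldots,(k+l-2)r\} $, together with the hypothesis $ k+l-1 \leq n $, which both makes $ {\rm Gab}_{k+l-1,m,n}(r,\mathbf{b}) $ a well-defined Gabidulin code and guarantees that the computed span is all of $ \mathcal{L}_{q,r,k+l-1}\mathbb{F}_{q^m}[x] $ rather than a proper subspace.
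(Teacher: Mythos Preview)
Your argument is correct and is exactly the elaboration the paper has in mind (the paper merely says the lemma ``follows from Proposition~\ref{evaluation properties}''). One small wording point: in the ``conversely'' step, for $r>1$ the inclusion $F \circ G \in \mathcal{L}_{q,r,k+l-1}\mathbb{F}_{q^m}[x]$ does not follow from the degree bound alone but from the explicit identity $x^{[ir]} \circ x^{[jr]} = x^{[(i+j)r]}$ together with bilinearity of composition over $\mathbb{F}_{q^m}$, so just tighten that sentence.
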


On the other hand, for $ r=1 $ and the maps $ \varphi_n $, the following lemma follows from the definitions:

\begin{lemma} \label{gabidulin 2}
It holds that
$$ \varphi_n({\rm Gab}_{k,m,n}(1, \boldsymbol\alpha_n)) = {\rm Gab}_{k,m,m}(1, \boldsymbol\alpha). $$
\end{lemma}

With these two lemmas, we can prove that Gabidulin codes have $ t $-RECP of type I. Recall from \cite{new-construction} that
$$ {\rm Gab}_{k,m,n}(r, \mathbf{b})^\perp = {\rm Gab}_{n-k,m,n}(r, \mathbf{b}^\prime), $$
for some $ \mathbf{b}^\prime \in \mathbb{F}_{q^m}^n $ that can be computed from $ \mathbf{b} $.

\begin{theorem}
If $ t > 0 $, $ \mathcal{A} = {\rm Gab}_{t+1,m,n}(r, \mathbf{b}) $, $ \mathcal{B} = {\rm Gab}_{t,m,m}(r, \boldsymbol\alpha) $ and $ \mathcal{C} = {\rm Gab}_{2t,m,n}(r, \mathbf{b})^\perp $, then $ (\mathcal{A}, \mathcal{B}) $ is a $ t $-RECP of type I for $ \mathcal{C} $. In the case $ r=1 $, we may take $ \mathcal{B} = {\rm Gab}_{t,m,n}(r, \boldsymbol\alpha_n) $.
\end{theorem}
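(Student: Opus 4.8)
The plan is to verify the four defining conditions of a $ t $-RECP of type I (Definition \ref{RECP type I}) for the pair $ (\mathcal{A}, \mathcal{B}) $ and the code $ \mathcal{C} $, using the two preparatory lemmas on Gabidulin codes together with the known formula for the dual of a Gabidulin code.

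\textbf{Condition 1 ($ \mathcal{B} \star \mathcal{A} \subseteq \mathcal{C}^\perp $).} By Lemma \ref{gabidulin 1} with $ k = t $ and $ l = t+1 $ (so $ k+l-1 = 2t \leq n $, which holds since $ d_R(\mathcal{C}) \geq 2t+1 $ forces $ n \geq 2t+1 $), we get $ \mathcal{B} \star \mathcal{A} = {\rm Gab}_{t,m,m}(r,\boldsymbol\alpha) \star {\rm Gab}_{t+1,m,n}(r,\mathbf{b}) = {\rm Gab}_{2t,m,n}(r,\mathbf{b}) $. But $ \mathcal{C}^\perp = \left( {\rm Gab}_{2t,m,n}(r,\mathbf{b})^\perp \right)^\perp = {\rm Gab}_{2t,m,n}(r,\mathbf{b}) $, so equality (hence inclusion) holds. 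In the case $ r = 1 $ one uses instead the second identity of Lemma \ref{gabidulin 1} with $ \mathcal{B} = {\rm Gab}_{t,m,n}(1,\boldsymbol\alpha_n) $; note that by Lemma \ref{gabidulin 2}, $ \varphi_n(\mathcal{B}) = {\rm Gab}_{t,m,m}(1,\boldsymbol\alpha) $, so this is indeed a $ t $-RECP of type I for $ \mathcal{C} $ in the sense of the last sentence of Definition \ref{RECP type I}.

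\textbf{Conditions 2 and 3.} Condition 2 is immediate: $ \dim(\mathcal{A}) = \dim {\rm Gab}_{t+1,m,n}(r,\mathbf{b}) = t+1 > t $. For condition 3, $ \mathcal{B}^\perp = {\rm Gab}_{m-t,m,m}(r,\boldsymbol\alpha^\prime) $ by the dual formula, and since generalized Gabidulin codes are MRD, $ d_R(\mathcal{B}^\perp) = m - (m-t) + 1 = t+1 > t $ (when $ r = 1 $, replace $ m $ by $ n $ throughout: $ \mathcal{B}^\perp = {\rm Gab}_{n-t,m,n}(1,\boldsymbol\alpha_n^\prime) $, with minimum rank distance $ t+1 $). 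For condition 4, $ \mathcal{A} $ is MRD so $ d_R(\mathcal{A}) = n - (t+1) + 1 = n - t $, and $ \mathcal{C} = {\rm Gab}_{2t,m,n}(r,\mathbf{b})^\perp = {\rm Gab}_{n-2t,m,n}(r,\mathbf{b}^\prime) $ is MRD so $ d_R(\mathcal{C}) = n - (n-2t) + 1 = 2t+1 $; hence $ d_R(\mathcal{A}) + d_R(\mathcal{C}) = (n-t) + (2t+1) = n + t + 1 > n $.

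I do not expect a serious obstacle here: the content is entirely packaged in Lemmas \ref{gabidulin 1} and \ref{gabidulin 2}, the MRD property of generalized Gabidulin codes (established in \cite{gabidulin, new-construction}), and the dual formula $ {\rm Gab}_{k,m,n}(r,\mathbf{b})^\perp = {\rm Gab}_{n-k,m,n}(r,\mathbf{b}^\prime) $. The only points requiring a little care are bookkeeping the degree constraint $ k+l-1 \leq n $ in Lemma \ref{gabidulin 1} (which is automatic from $ d_R(\mathcal{C}) \geq 2t+1 $), and, in the $ r=1 $ case, checking that working with $ {\rm Gab}_{t,m,n}(1,\boldsymbol\alpha_n) $ and the map $ \varphi_n $ is consistent with Definition \ref{RECP type I} — which is exactly what Lemma \ref{gabidulin 2} supplies. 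Everything else is a direct substitution into the four conditions.
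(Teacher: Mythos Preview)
Your proposal is correct and follows essentially the same route as the paper: verify the four conditions of Definition \ref{RECP type I} via Lemma \ref{gabidulin 1} for condition 1, the dimension formula for condition 2, the MRD property (and Lemma \ref{gabidulin 2} in the $r=1$ case) for condition 3, and the explicit computation $d_R(\mathcal{A})+d_R(\mathcal{C}) = (n-t)+(2t+1) = n+t+1$ for condition 4. One small cleanup: in the $r=1$ parenthetical for condition 3 you compute the dual of ${\rm Gab}_{t,m,n}(1,\boldsymbol\alpha_n)$ itself, but by Definition \ref{RECP type I} the condition is on $\varphi_n(\mathcal{B})^\perp$, which you already reduced to ${\rm Gab}_{t,m,m}(1,\boldsymbol\alpha)^\perp$ via Lemma \ref{gabidulin 2} in your treatment of condition 1 --- so that parenthetical is superfluous.
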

\begin{proof}
The first condition follows from Lemma \ref{gabidulin 1}. On the other hand, $ \dim_{\mathbb{F}_{q^m}}(\mathcal{A}) = t+1 $, so the second condition follows. The third condition is trivial, and for the case $ r=1 $ and $ \mathcal{B} = {\rm Gab}_{t,m,n}(1, \boldsymbol\alpha_n) $ it follows from Lemma \ref{gabidulin 2}. Finally, the fourth condition follows from the following computation:
$$ d_R(\mathcal{A}) + d_R(\mathcal{C}) = n-t + 2t+1 = n + t + 1. $$
\end{proof}

We see that $ d_R(\mathcal{A}) = n-t > n - 2t $. Hence, the pair $ (M_\alpha(\mathcal{A}), M_\alpha(\mathcal{B})) $, with notation as in Section \ref{connections}, can be used by Corollary \ref{stronger RECPs} to efficiently correct any error of rank at most $ t $ for every $ \mathbb{F}_q $-linear subcode of a (generalized) Gabidulin code. Such efficient decoding algorithms seem not to have been obtained yet.

\begin{corollary}
Let $ t, \mathcal{A}, \mathcal{B} $ and $ \mathcal{C} $ be as in the previous theorem. Then, for every $ \mathbb{F}_q $-linear subcode $ \mathcal{D} \subseteq \mathcal{C} $, the pair $ (M_\alpha(\mathcal{A}), M_\alpha(\mathcal{B})) $ is a $ t $-RECP of type II for $ M_{\alpha^\prime}(\mathcal{D}) $.
\end{corollary}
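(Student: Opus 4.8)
The plan is to derive the statement as a direct application of Corollary \ref{stronger RECPs} to the $ \mathbb{F}_q $-linear codes $ M_\alpha(\mathcal{A}) \subseteq \mathbb{F}_q^{m \times n} $ and $ M_\alpha(\mathcal{B}) \subseteq \mathbb{F}_q^{m \times m} $; an alternative route is to start from Theorem \ref{RECP I admit RECP II}, which already yields that $ (M_\alpha(\mathcal{A}), M_\alpha(\mathcal{B})) $ is a $ t $-RECP of type II for $ M_{\alpha^\prime}(\mathcal{C}) $, and to observe that replacing $ \mathcal{C} $ by a subcode $ \mathcal{D} \subseteq \mathcal{C} $ only enlarges the base dual (so condition 1 of Definition \ref{RECP type II} is preserved) and cannot decrease the minimum rank distance (so condition 4 is preserved), while conditions 2 and 3 depend only on $ (M_\alpha(\mathcal{A}), M_\alpha(\mathcal{B})) $ and are unaffected.

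For the first route, I would first identify the ambient dual space appearing in Corollary \ref{stronger RECPs}. By Proposition \ref{connection vector products}, $ M_\alpha(\mathcal{B}) M_\alpha(\mathcal{A}) = M_\alpha(\mathcal{B} \star \mathcal{A}) $, and by Lemma \ref{gabidulin 1} (with $ k = t $ and $ l = t+1 $) this equals $ M_\alpha({\rm Gab}_{2t,m,n}(r, \mathbf{b})) $. Applying Lemma \ref{duality ravagnani} then gives
$$ (M_\alpha(\mathcal{B}) M_\alpha(\mathcal{A}))^* = M_{\alpha^\prime}\big({\rm Gab}_{2t,m,n}(r, \mathbf{b})^\perp\big) = M_{\alpha^\prime}(\mathcal{C}), $$
so that $ \mathcal{D} \subseteq \mathcal{C} $ is equivalent to $ M_{\alpha^\prime}(\mathcal{D}) \subseteq (M_\alpha(\mathcal{B}) M_\alpha(\mathcal{A}))^* $, which is exactly the containment hypothesis of Corollary \ref{stronger RECPs}.

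It then remains to verify the four numerical hypotheses of Corollary \ref{stronger RECPs} for $ M_\alpha(\mathcal{A}) $ and $ M_\alpha(\mathcal{B}) $: that $ \dim_{\mathbb{F}_q}(M_\alpha(\mathcal{A})) = m\dim_{\mathbb{F}_{q^m}}(\mathcal{A}) = m(t+1) > mt $; that $ d_R(M_\alpha(\mathcal{B})^*) = d_R(\mathcal{B}^\perp) > t $, using Lemma \ref{duality distance} together with the third property of the $ t $-RECP of type I established in the previous theorem; that $ d_R(M_\alpha(\mathcal{A})) = d_R(\mathcal{A}) = n - t > n - 2t $, because $ t > 0 $; and that $ d_R(M_\alpha(\mathcal{A})^*) = d_R(\mathcal{A}^\perp) > 1 $, which follows since $ \mathcal{A}^\perp = {\rm Gab}_{n-t-1,m,n}(r, \mathbf{b}^\prime) $ is again a generalized Gabidulin code, hence MRD with $ d_R(\mathcal{A}^\perp) = t+2 $. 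Corollary \ref{stronger RECPs} then produces directly the conclusion that $ (M_\alpha(\mathcal{A}), M_\alpha(\mathcal{B})) $ is a $ t $-RECP of type II for $ M_{\alpha^\prime}(\mathcal{D}) $.

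I do not expect a genuine obstacle here; the only point that needs a small check is the last hypothesis, $ d_R(M_\alpha(\mathcal{A})^*) > 1 $, which rests on the fact (recalled just before the previous theorem) that duals of generalized Gabidulin codes are again of the same form, together with discarding the degenerate range $ n < t+2 $ in which $ \mathcal{A} = \mathbb{F}_{q^m}^n $ and the RECP carries no content. Everything else is bookkeeping with the correspondence between $ \mathbb{F}_{q^m} $-linear codes in $ \mathbb{F}_{q^m}^n $ and $ \mathbb{F}_q $-linear codes in $ \mathbb{F}_q^{m \times n} $ afforded by $ M_\alpha $, $ M_{\alpha^\prime} $ and Lemmas \ref{duality ravagnani} and \ref{duality distance}.
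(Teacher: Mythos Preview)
Your proposal is correct and matches the paper's approach: the paper's one-line proof simply cites the previous theorem, Theorem \ref{RECP I admit RECP II}, and Corollary \ref{stronger RECPs}, which is exactly your first route (with Theorem \ref{RECP I admit RECP II} supplying conditions 1--3 of Corollary \ref{stronger RECPs} at once, and the observation $d_R(\mathcal{A}) = n - t > n - 2t$ noted just before the corollary supplying the extra hypothesis). Your explicit verification of the four hypotheses, including the check that $d_R(\mathcal{A}^\perp) = t+2 > 1$, fills in the details the paper leaves implicit.
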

\begin{proof}
It follows from the previous theorem, Theorem \ref{RECP I admit RECP II} and Corollary \ref{stronger RECPs}.
\end{proof}

On the other hand, decoding algorithms for generalized Gabidulin codes with $ r \neq 1 $ seem to have been obtained only in \cite{new-construction}, also of cubic complexity.

\subsection{Skew cyclic codes}

Skew cyclic codes (or $ q^r $-cyclic codes) play the same role as cyclic codes in the theory of error-correcting codes for the rank metric. They were originally introduced in \cite{gabidulin} for $ r=1 $ and $ m=n $, and further generalized in \cite{qcyclic} for $ r=1 $ and any $ m $ and $ n $, and for any $ r $ in the work by Ulmer et al. \cite{skewcyclic1, skewcyclic3}. In this subsection we will only treat the case $ r=1 $.

Assume that $ n =sm $ is a multiple of $ m $. We will see in this subsection that, in that case, some $ \mathbb{F}_{q^m} $-linear $ q $-cyclic codes have rank error-locating pairs of type I, in analogy to the ideas in \cite{error-locating}. We say that an $ \mathbb{F}_{q^m} $-linear code $ \mathcal{C} \subseteq \mathbb{F}_{q^m}^n $ is $ q $-cyclic if the $ q $-shifted vector
$$ (c_{n-1}^{q}, c_0^{q}, c_1^{q}, \ldots, c_{n-2}^{q}) $$
lies in $ \mathcal{C} $, for every $ \mathbf{c} = (c_0, c_1, \ldots, c_{n-1}) \in \mathcal{C} $. As in \cite{rootskew}, we say that an $ \mathbb{F}_{q} $-linear subspace $ \mathcal{T} \subseteq \mathbb{F}_{q^{n}} $ is a $ q $-root space (over $ \mathbb{F}_{q^m} $) if it is the root space in $ \mathbb{F}_{q^{n}} $ of a linearized polynomial in $ \mathcal{L}_q \mathbb{F}_{q^m}[x] $.

By \cite[Theorem 3]{rootskew}, $ \mathbb{F}_{q^m} $-linear $ q $-cyclic codes are codes in $ \mathbb{F}_{q^m}^n $ with a parity check matrix over $ \mathbb{F}_{q^n} $ of the form
\begin{displaymath}
\mathcal{M}(\beta_1, \beta_2, \ldots, \beta_{n-k}) = \left(
\begin{array}{ccccc}
\beta_1 & \beta_1^{[1]} & \beta_1^{[2]} & \ldots & \beta_1^{[n-1]} \\
\beta_2 & \beta_2^{[1]} & \beta_2^{[2]} & \ldots & \beta_2^{[n-1]} \\
\vdots & \vdots & \vdots & \ddots & \vdots \\
\beta_{n-k} & \beta_{n-k}^{[1]} & \beta_{n-k}^{[2]} & \ldots & \beta_{n-k}^{[n-1]} \\
\end{array} \right),
\end{displaymath}
where $ \beta_1, \beta_2, \ldots, \beta_{n-k} $ is a basis of $ \mathcal{T} $ over $ \mathbb{F}_{q} $, for some $ q $-root space $ \mathcal{T} $. Moreover by \cite[Corollary 2]{rootskew}, $ \mathbb{F}_{q^m} $-linear $ q $-cyclic codes are in bijection with $ q $-root spaces over $ \mathbb{F}_{q^m} $. 

The next bound, which is given in \cite[Corollary 4]{rootskew}, is an extension of the rank-metric version of the BCH bound (by setting $ w=0 $ and $ c=1 $) found in \cite[Proposition 1]{skewcyclic3}:

\begin{lemma}[\textbf{Rank-HT bound}] \label{rank bound}
Let $ b $, $ c $, $ \delta $ and $ w $ be positive integers with $ \delta + w \leq m $ and $ d = {\rm gcd}(c,n) < \delta $, and $ \alpha \in \mathbb{F}_{q^{n}} $ be such that the set $ \mathcal{A} = \{ \alpha^{[b + i + jc]} \mid 0 \leq i \leq \delta -2, 0 \leq j \leq w \} $ is a linearly independent set of vectors. 

If $ \mathcal{C} $ is the $ \mathbb{F}_{q^m} $-linear $ q $-cyclic code corresponding to the $ q $-root space $ \mathcal{T} $ and $ \mathcal{A} \subseteq \mathcal{T} $, then $ d_R(\mathcal{C}) \geq \delta + w $.
\end{lemma}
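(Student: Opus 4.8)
The plan is to deduce the bound from the rank--metric Roos bound, Proposition~\ref{roos}, after exhibiting a suitable pair of Gabidulin--type auxiliary codes. First I would extend scalars to $ \mathbb{F}_{q^n} $: the parity check matrix $ \mathcal{M}(\beta_1, \dots, \beta_{n-k}) $ has entries in $ \mathbb{F}_{q^n} $, and since the $ (\delta - 1)(w+1) $ elements of $ \mathcal{A} = \{ \alpha^{[b+i+jc]} \} \subseteq \mathcal{T} $ are linearly independent over $ \mathbb{F}_q $ (so in particular $ (\delta-1)(w+1) \leq n $), they may be taken among a basis $ \beta_1, \dots, \beta_{n-k} $ of $ \mathcal{T} $. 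Hence the rows $ \mathbf{v}_{ij} = (\alpha^{[b+i+jc]}, \alpha^{[b+i+jc+1]}, \dots, \alpha^{[b+i+jc+n-1]}) $, $ 0 \leq i \leq \delta - 2 $, $ 0 \leq j \leq w $, lie among the parity checks of $ \mathcal{C} $, and the $ \mathbb{F}_{q^n} $-linear code $ \widehat{\mathcal{C}} \subseteq \mathbb{F}_{q^n}^n $ cut out by \emph{only} these rows contains $ \mathcal{C} $. Since the rank weight over $ \mathbb{F}_q $ of a vector is unchanged when the vector is regarded over the larger field, $ d_R(\mathcal{C}) \geq d_R(\widehat{\mathcal{C}}) $, so it suffices to prove $ d_R(\widehat{\mathcal{C}}) \geq \delta + w $.

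Next I would exploit the factorization $ x^{[b+i+jc]} = x^{[i]} \circ x^{[b+jc]} $ of linearized monomials under symbolic composition. Using the extension inner product $ \cdot $ over $ \mathbb{F}_{q^n} $, fixing a basis $ \boldsymbol\alpha_n $ of $ \mathbb{F}_{q^n} $ over $ \mathbb{F}_q $ for the product $ \star $, and noting $ \mathbf{v}_{ij} = {\rm ev}_{\mathbf{u}}(x^{[b+i+jc]}) $ with $ \mathbf{u} = (\alpha^{[0]}, \alpha^{[1]}, \dots, \alpha^{[n-1]}) $, Proposition~\ref{evaluation properties}(2) together with the linearity of $ \star $ (Remark~\ref{bilinearity}, Lemma~\ref{phi is linear}) yields $ \widehat{\mathcal{C}}^{\perp} = \mathcal{B}_0 \star \mathcal{A}_0 $, where $ \mathcal{B}_0 = {\rm ev}_{\boldsymbol\alpha_n}(\langle x, x^{[1]}, \dots, x^{[\delta-2]} \rangle) = {\rm Gab}_{\delta-1, n, n}(1, \boldsymbol\alpha_n) $ is a classical Gabidulin code of $ \mathbb{F}_{q^n} $-dimension $ \delta - 1 $, and $ \mathcal{A}_0 = {\rm ev}_{\mathbf{u}}(\langle x^{[b]}, x^{[b+c]}, \dots, x^{[b+wc]} \rangle) $ is the Gabidulin--type code of dimension $ w+1 $ attached to the arithmetic progression of step $ c $. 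I would then apply Proposition~\ref{roos} to $ \widehat{\mathcal{C}} $ with the pair $ (\mathcal{A}_0, \mathcal{B}_0) $ and the parameters $ a = w $, $ b = \delta - 1 $, translating its statement to the extension inner product via Lemmas~\ref{duality ravagnani} and~\ref{duality distance} and Proposition~\ref{connection vector products}. Here condition~(1) is precisely the displayed factorization; condition~(2) holds because $ \dim_{\mathbb{F}_q}(\mathcal{A}_0) = n(w+1) > nw $, the $ \mathbb{F}_{q^n} $-dimension $ w+1 $ following from the fact that $ (\delta-1)(w+1) \leq n $ and $ d \leq \delta - 1 $ force $ b, b+c, \dots, b+wc $ to be pairwise distinct modulo $ n $; condition~(3) holds because $ \mathcal{B}_0 $ is MRD, hence $ d_R(\mathcal{B}_0^{\perp}) = \delta > \delta - 1 = b $; and condition~(5), $ d_R(\mathcal{A}_0^{\perp}) > 1 $, is a one--line check (a weight--one vector $ \lambda \mathbf{v} $, $ \lambda \in \mathbb{F}_{q^n}^{*} $, $ \mathbf{v} \in \mathbb{F}_q^n \setminus \{ \mathbf{0} \} $, orthogonal to $ \mathcal{A}_0 $ would force $ \gamma^{[b+jc]} = \mathbf{0} $ for all $ j $ with $ \gamma = \sum_l v_l \alpha^{[l]} \neq \mathbf{0} $, a contradiction). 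The conclusion $ d_R(\widehat{\mathcal{C}}) > a + b = \delta + w - 1 $ then gives the lemma.

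The main obstacle is condition~(4), namely $ d_R(\mathcal{A}_0) \geq n - w - \delta + 2 $, and this is exactly where the hypothesis $ d = \gcd(c, n) < \delta $ must be consumed. If $ c $ is small --- concretely $ c \leq \delta - 1 $ --- then the blocks $ [\,b + jc, \, b + jc + \delta - 2\,] $ overlap, so the exponent span on which $ \mathcal{B}_0 \star (\cdot) \subseteq \widehat{\mathcal{C}}^{\perp} $ still holds actually contains the \emph{consecutive} block $ x^{[b]}, x^{[b+1]}, \dots, x^{[b+w]} $; choosing instead $ \mathcal{A}_0 = {\rm Gab}_{w+1, n, n}(1, \mathbf{u}^{[b]}) $, which is classical and MRD, gives $ d_R(\mathcal{A}_0) = n - w \geq n - w - \delta + 2 $. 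For general $ c $, after the invertible Frobenius twist $ x \mapsto x^{[b]} $ and grouping the surviving $ q $-powers by $ d $, the code $ \mathcal{A}_0 $ is governed by a generalized Gabidulin code over $ \mathbb{F}_{q^n} \cong \mathbb{F}_{(q^d)^{n/d}} $ of step $ c/d $ coprime to $ n/d $, hence MRD by \cite{new-construction}; one must then use $ d < \delta $ (together with $ (\delta-1)(w+1) \leq n $) to turn the resulting $ \mathbb{F}_{q^d} $-kernel estimate into the required bound on $ d_R(\mathcal{A}_0) $ --- the crude estimate alone gives only $ d_R(\mathcal{A}_0) \geq n - dw $, which is too weak when $ d > 1 $, so $ \mathcal{A}_0 $ must be picked with care, this being the rank--metric incarnation of the combinatorial heart of the classical Roos bound. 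If one prefers not to reprove it inside the error--correcting--pair framework, this is also exactly \cite[Corollary 4]{rootskew}, the form in which the lemma is quoted.
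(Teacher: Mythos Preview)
The paper does not prove this lemma at all: it is simply quoted as \cite[Corollary~4]{rootskew}. You are aware of this, since your last sentence says exactly that. So strictly speaking there is nothing to compare your argument against; any correct derivation would already exceed what the paper offers.

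Your attempt to recover the bound from Proposition~\ref{roos} is natural and in fact is precisely the direction the paper gestures at in the paragraph immediately following the last theorem of Section~\ref{codes with}: there the authors observe that for $q$-cyclic codes built from a \emph{normal} basis the rank-HT bound follows from Proposition~\ref{first bound}, and then write ``Further cases are left open.'' In other words, even the authors did not carry out the general derivation you are sketching. Your factorization $\widehat{\mathcal{C}}^{\perp}=\mathcal{B}_0\star\mathcal{A}_0$ is correct (indeed $\boldsymbol\alpha_n^{[i]}\star(\gamma\,\mathbf{u}^{[b+jc]})=\gamma^{[i]}\mathbf{v}_{ij}$ by Proposition~\ref{evaluation properties}), and conditions (1)--(3),(5) go through essentially as you say.

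The honest gap is exactly the one you flag: condition~(4). Your treatment of the case $c\leq\delta-1$ is fine, but for general $c$ your paragraph is a sketch rather than a proof. Passing to $\mathbb{F}_{q^d}$ and invoking an MRD property of a step-$c/d$ generalized Gabidulin code does not by itself give $d_R(\mathcal{A}_0)\geq n-w-\delta+2$; as you note, the crude estimate only yields $n-dw$. Getting the sharp inequality requires choosing $\mathcal{A}_0$ more carefully (or enlarging it using the overlap structure coming from $d<\delta$), and you have not written that step down. Also, a small point: your verification of $\dim_{\mathbb{F}_{q^n}}(\mathcal{A}_0)=w+1$ uses linear independence of the $\mathbf{u}^{[b+jc]}$ over $\mathbb{F}_{q^n}$, which is not quite the same as the hypothesis that the $\alpha^{[b+i+jc]}$ are $\mathbb{F}_q$-independent; this needs a sentence (it follows once one knows the relevant Moore-type matrix has the right rank, but it is not automatic from distinctness of the exponents alone). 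Given that the paper itself leaves this derivation open, the cleanest fix for the write-up is simply to cite \cite[Corollary~4]{rootskew}, as both you and the paper ultimately do.
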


To use it, we need to deal with normal bases. First, it is well-known \cite{lidl} that the orthogonal (or dual) basis of a normal basis $ \alpha, \alpha^{[1]}, \ldots, \alpha^{[n-1]} \in \mathbb{F}_{q^n} $ over $ \mathbb{F}_q $ is again a normal basis $ \beta, \beta^{[1]}, \ldots, \beta^{[n-1]} \in \mathbb{F}_{q^n} $. Define $ \boldsymbol\alpha = (\alpha, \alpha^{[1]}, \ldots, \alpha^{[n-1]}) $ and $ \boldsymbol\beta = (\beta, \beta^{[1]}, \ldots, \beta^{[n-1]}) $. Then it holds that
$$ \boldsymbol\alpha^{[i]} \cdot \boldsymbol\beta^{[j]} = {\rm Tr}(\alpha^{[i]} \beta^{[j]}) = \delta_{i,j} $$
by definition. On the other hand, for a subset $ I \subseteq \{ 1,2, \ldots, n \} $, define the matrix
$$ \mathcal{M}_{\boldsymbol\alpha}(I) = \mathcal{M}(\alpha^{[i]} \mid i \in I), $$
and similarly for $ \boldsymbol\beta $.

Define the $ \mathbb{F}_{q^m} $-linear codes $ \mathcal{A}, \mathcal{B} \subseteq \mathbb{F}_{q^m}^n $ as the subfield subcodes of the codes in $ \mathbb{F}_{q^n}^n $ with generator matrices $ \mathcal{M}_{\boldsymbol\alpha}(I) $ and $ \mathcal{M}_{\boldsymbol\alpha}(J) $, for some subsets $ I, J \subseteq \{ 1,2, \ldots, n \} $, respectively. 

In order to obtain $ q $-cyclic codes, we will assume that the space generated by $ \{ \alpha^{[i]} \mid i \in I \} $ is a $ q $-root space, and similarly for $ J $. Due to the cyclotomic space description of $ q $-root spaces in \cite[Proposition 2]{rootskew}, this holds if the following condition holds: if $ i \in I $, then $ i + m \in I $ (modulo $ n $), and similarly for $ J $.

Define the $ \mathbb{F}_{q^m} $-linear $ q $-cyclic code $ \mathcal{C} \subseteq \mathbb{F}_{q^m}^n $ with parity check matrix $ \mathcal{M}_{\boldsymbol\alpha}(I + J) $. Observe that $ I+J $ also gives a $ q $-root space by the previous paragraph. We have the following lemmas:

\begin{lemma}
$ \mathcal{A} $ and $ \mathcal{B} $ are the $ q $-cyclic codes with parity check matrices $ \mathcal{M}_{\boldsymbol\beta}(I^c) $ and $ \mathcal{M}_{\boldsymbol\beta}(J^c) $ over $ \mathbb{F}_{q^n} $, respectively.
\end{lemma}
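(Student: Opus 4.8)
The statement to prove is a duality fact about $q$-cyclic codes: the subfield subcode in $\mathbb{F}_{q^m}^n$ of the $\mathbb{F}_{q^n}$-code with generator matrix $\mathcal{M}_{\boldsymbol\alpha}(I)$ equals the $q$-cyclic code with parity check matrix $\mathcal{M}_{\boldsymbol\beta}(I^c)$ over $\mathbb{F}_{q^n}$ (and analogously for $J$). The key inputs I would invoke are: the structure theorem for $\mathbb{F}_{q^m}$-linear $q$-cyclic codes in terms of $q$-root spaces (\cite[Theorem 3, Corollary 2]{rootskew}), the fact that a normal basis $\boldsymbol\alpha$ and its dual $\boldsymbol\beta$ are both normal, and the orthogonality relation $\boldsymbol\alpha^{[i]} \cdot \boldsymbol\beta^{[j]} = \delta_{i,j}$ already recorded just above the statement.

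**Main steps.** First I would show that the $\mathbb{F}_{q^n}$-code with generator matrix $\mathcal{M}_{\boldsymbol\alpha}(I)$ has $\mathcal{M}_{\boldsymbol\beta}(I^c)$ as a parity check matrix over $\mathbb{F}_{q^n}$. This is immediate from the orthogonality relation: for $i \in I$ and $j \in I^c$, the rows $\boldsymbol\alpha^{[i]}$ and $\boldsymbol\beta^{[j]}$ satisfy $\boldsymbol\alpha^{[i]}\cdot\boldsymbol\beta^{[j]} = \delta_{i,j} = 0$, so $\mathcal{M}_{\boldsymbol\alpha}(I)\,\mathcal{M}_{\boldsymbol\beta}(I^c)^T = 0$; since $\{\boldsymbol\alpha^{[i]}\}_{i\in I}$ and $\{\boldsymbol\beta^{[j]}\}_{j\in I^c}$ together form the two halves of a basis of $\mathbb{F}_{q^n}^n$ (normality of $\boldsymbol\alpha$, $\boldsymbol\beta$ and the fact that a normal basis together with its dual spans), counting dimensions gives that $\mathcal{M}_{\boldsymbol\beta}(I^c)$ is a full parity check matrix. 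Second, I would use the characterization of \cite[Theorem 3]{rootskew}: an $\mathbb{F}_{q^m}$-linear code $\mathcal{D}\subseteq\mathbb{F}_{q^m}^n$ is $q$-cyclic precisely when it has a parity check matrix over $\mathbb{F}_{q^n}$ of the form $\mathcal{M}(\beta_1,\ldots,\beta_{n-k})$ where $\{\beta_1,\ldots,\beta_{n-k}\}$ is an $\mathbb{F}_q$-basis of a $q$-root space $\mathcal{T}$. Since we assumed $\{\alpha^{[i]}\mid i\in I\}$ spans a $q$-root space, and since the dual of the root space spanned by $\{\alpha^{[i]}\mid i\in I\}$ with respect to the trace pairing is spanned by $\{\beta^{[j]}\mid j\in I^c\}$ — which is again a $q$-root space because $I^c$ also satisfies the cyclotomic closure condition "$i\in I^c \Rightarrow i+m\in I^c \pmod n$" (this follows from the corresponding condition on $I$) — the matrix $\mathcal{M}_{\boldsymbol\beta}(I^c)$ is of the required form, so the code it defines as a parity check matrix is $q$-cyclic.

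**Linking to the subfield subcode.** The remaining point is that the subfield subcode in $\mathbb{F}_{q^m}^n$ of the $\mathbb{F}_{q^n}$-code generated by $\mathcal{M}_{\boldsymbol\alpha}(I)$ is the same thing as the $\mathbb{F}_{q^m}$-linear $q$-cyclic code whose parity check matrix over $\mathbb{F}_{q^n}$ is $\mathcal{M}_{\boldsymbol\beta}(I^c)$. By the first step, the $\mathbb{F}_{q^n}$-code with generator $\mathcal{M}_{\boldsymbol\alpha}(I)$ is exactly the set of $\mathbf{x}\in\mathbb{F}_{q^n}^n$ with $\mathbf{x}\,\mathcal{M}_{\boldsymbol\beta}(I^c)^T = 0$; intersecting with $\mathbb{F}_{q^m}^n$ gives the $\mathbb{F}_{q^m}$-linear code defined by that same parity check matrix, which by the second step is $q$-cyclic with root space the span of $\{\beta^{[j]}\mid j\in I^c\}$. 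The argument for $\mathcal{B}$ and $J$ is word-for-word the same.

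**Expected obstacle.** The genuinely delicate point is verifying that the trace-dual of the $\mathbb{F}_q$-span of $\{\alpha^{[i]}\mid i\in I\}$ inside $\mathbb{F}_{q^n}$ is precisely the $\mathbb{F}_q$-span of $\{\beta^{[j]}\mid j\in I^c\}$, and that this complementary span is again a $q$-root space. The first part is a clean consequence of $\boldsymbol\alpha^{[i]}\cdot\boldsymbol\beta^{[j]}=\delta_{i,j}$ together with a dimension count, but one must be careful that "$q$-root space" is about root spaces of $q$-linearized polynomials over $\mathbb{F}_{q^m}$, so I would lean on the cyclotomic-coset description of $q$-root spaces in \cite[Proposition 2]{rootskew} and check that complementation of index sets respects the closure condition modulo $n$ relative to shifts by $m$. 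Everything else is routine linear algebra over $\mathbb{F}_{q^n}$.
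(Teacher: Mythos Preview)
Your proposal is correct and follows essentially the same approach as the paper: the core argument in both is the orthogonality relation $\boldsymbol\alpha^{[i]}\cdot\boldsymbol\beta^{[j]}=\delta_{i,j}$ giving $\mathcal{M}_{\boldsymbol\alpha}(I)\,\mathcal{M}_{\boldsymbol\beta}(I^c)^T=0$, together with a dimension count ($\#I+\#I^c=n$ and both matrices full rank) to conclude that $\mathcal{M}_{\boldsymbol\beta}(I^c)$ is a parity check matrix for the $\mathbb{F}_{q^n}$-code $\widetilde{\mathcal{A}}$, whence intersecting with $\mathbb{F}_{q^m}^n$ yields $\mathcal{A}$. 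The paper's proof stops there and does not separately verify that $\{\beta^{[j]}:j\in I^c\}$ spans a $q$-root space or that the resulting code is $q$-cyclic; your additional care on that point (via the cyclotomic closure of $I^c$) is more than the paper provides, but it is supplementary rather than a different strategy.
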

\begin{proof}
We prove it for $ \mathcal{A} $. Define $ \widetilde{\mathcal{A}} $ as the $ \mathbb{F}_{q^n} $-linear code in $ \mathbb{F}_{q^n}^n $ with generator matrix $ \mathcal{M}_{\boldsymbol\alpha}(I) $. It is enough to prove that $ \mathcal{M}_{\boldsymbol\beta}(I^c) $ is a parity check matrix for $ \widetilde{\mathcal{A}} $. 

However, since $ \boldsymbol\alpha^{[i]} \cdot \boldsymbol\beta^{[j]} = 0 $, for every $ i \in I $ and $ j \notin I $, it holds that $ \mathcal{M}_{\boldsymbol\alpha}(I) \mathcal{M}_{\boldsymbol\beta}(I^c)^T = 0 $. On the other hand, these two matrices are full rank and the number of rows in $ \mathcal{M}_{\boldsymbol\alpha}(I) $ together with the number of rows in $ \mathcal{M}_{\boldsymbol\beta}(I^c) $ is $ \# I + \# (I^c) = n $, and the result follows.
\end{proof}

\begin{lemma}
It holds that $ \mathcal{B} \star \mathcal{A} \subseteq \mathcal{C}^\perp $.
\end{lemma}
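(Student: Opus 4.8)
\medskip

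The plan is to lift the three subfield subcodes to their $ \mathbb{F}_{q^n} $-linear parents and to describe everything through $ q $-linearized polynomials evaluated at the normal basis $ \boldsymbol\alpha = (\alpha, \alpha^{[1]}, \ldots, \alpha^{[n-1]}) $, in exact analogy with the Gabidulin computation of Lemma \ref{gabidulin 1}. Write $ \widetilde{\mathcal{A}}, \widetilde{\mathcal{B}} \subseteq \mathbb{F}_{q^n}^n $ for the $ \mathbb{F}_{q^n} $-linear codes with generator matrices $ \mathcal{M}_{\boldsymbol\alpha}(I), \mathcal{M}_{\boldsymbol\alpha}(J) $, and $ \widetilde{\mathcal{C}} \subseteq \mathbb{F}_{q^n}^n $ for the $ \mathbb{F}_{q^n} $-linear code with parity check matrix $ \mathcal{M}_{\boldsymbol\alpha}(I+J) $, so that $ \mathcal{A} \subseteq \widetilde{\mathcal{A}} $, $ \mathcal{B} \subseteq \widetilde{\mathcal{B}} $ and $ \mathcal{C} \subseteq \widetilde{\mathcal{C}} $. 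Since the $ \ell $-th row of $ \mathcal{M}_{\boldsymbol\alpha}(\cdot) $ is $ \boldsymbol\alpha^{[\ell]} = {\rm ev}_{\boldsymbol\alpha}(x^{[\ell]}) $, every element of $ \widetilde{\mathcal{A}} $ is of the form $ {\rm ev}_{\boldsymbol\alpha}(F) $ with $ F \in \mathcal{L}_q\mathbb{F}_{q^n}[x] $ whose exponents all lie in $ I $, and similarly for $ \widetilde{\mathcal{B}} $ and $ J $; and the $ \mathbb{F}_{q^n} $-dual $ \widetilde{\mathcal{C}}^\perp $ is exactly the row space of $ \mathcal{M}_{\boldsymbol\alpha}(I+J) $, that is, the set of all $ {\rm ev}_{\boldsymbol\alpha}(H) $ with the exponents of $ H $ in $ I+J $.

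Next I would invoke the defining property of $ \star $, that it turns symbolic products of linearized polynomials into ordinary products after evaluation (Proposition \ref{evaluation properties}, item 1): for $ \mathbf{a} = {\rm ev}_{\boldsymbol\alpha}(F) $ and $ \mathbf{b} = {\rm ev}_{\boldsymbol\alpha}(G) $ one has $ \mathbf{b} \star \mathbf{a} = {\rm ev}_{\boldsymbol\alpha}(G \circ F) $. Now $ G \circ F = \sum_{i \in I, j \in J} b_j a_i^{[j]} x^{[i+j]} $ has all its exponents in $ I+J $ (read modulo $ n $), so $ \mathbf{b} \star \mathbf{a} \in \widetilde{\mathcal{C}}^\perp $; hence $ \widetilde{\mathcal{B}} \star \widetilde{\mathcal{A}} \subseteq \widetilde{\mathcal{C}}^\perp $. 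Restricting back, $ \mathcal{B} \star \mathcal{A} \subseteq \widetilde{\mathcal{B}} \star \widetilde{\mathcal{A}} \subseteq \widetilde{\mathcal{C}}^\perp \subseteq \mathcal{C}^\perp $, where the last inclusion holds because $ \mathcal{C} \subseteq \widetilde{\mathcal{C}} $ forces everything orthogonal to all of $ \widetilde{\mathcal{C}} $ to be orthogonal to all of $ \mathcal{C} $. The index bookkeeping is routine: the hypotheses $ I + m \subseteq I $ and $ J + m \subseteq J $ (which make $ I $ and $ J $ correspond to $ q $-root spaces) give $ (I+J) + m \subseteq I+J $, and the normal-basis (Moore) matrix being invertible guarantees that the chosen rows of $ \mathcal{M}_{\boldsymbol\alpha}(\cdot) $ are linearly independent over $ \mathbb{F}_{q^n} $, so the identifications above are legitimate.

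The step I expect to be the main obstacle is pinning down which product $ \star $ and which duality the statement actually refers to, and checking that the computation above is carried out in the right framework. Here $ \boldsymbol\alpha $ is a normal basis of $ \mathbb{F}_{q^n} $, not the fixed basis of $ \mathbb{F}_{q^m} $ of Section \ref{vector products}, so Proposition \ref{evaluation properties} must be applied to the instantiation of the theory in which the extension field is $ \mathbb{F}_{q^n} $ and both length parameters equal $ n $ (so that $ \varphi_n $ is the identity and $ \star $ is the corresponding bare product). One then has to argue --- exactly as in the Hamming case of Subsection \ref{subsection Hamming}, where $ D(\mathcal{A}), D(\mathcal{B}), D(\mathcal{C}) $ are only $ \mathbb{F}_q $-linear and yet still behave well --- that restricting this product and this $ \mathbb{F}_{q^n} $-duality to the $ \mathbb{F}_{q^m} $-linear subfield subcodes $ \mathcal{A}, \mathcal{B}, \mathcal{C} $ yields precisely the closure property needed to use $ (\mathcal{A}, \mathcal{B}) $ as a $ t $-rank error-locating pair of type I for $ \mathcal{C} $. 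Everything else --- that the rows of $ \mathcal{M}_{\boldsymbol\alpha}(I) $ span $ \widetilde{\mathcal{A}} $, and that $ {\rm ev}_{\boldsymbol\alpha} $ identifies $ \widetilde{\mathcal{C}}^\perp $ with the polynomials supported on $ I+J $ --- is immediate from the normal-basis and trace-duality facts recalled just before the lemma.
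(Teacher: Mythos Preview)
Your argument is correct and follows the same opening move as the paper: use Proposition \ref{evaluation properties} to see that products $\mathbf{b}\star\mathbf{a}$ with $\mathbf{b}\in\widetilde{\mathcal{B}}$, $\mathbf{a}\in\widetilde{\mathcal{A}}$ land in the $\mathbb{F}_{q^n}$-linear code $\mathcal{D}$ generated by $\mathcal{M}_{\boldsymbol\alpha}(I+J)$, which is exactly $\widetilde{\mathcal{C}}^\perp$. Where you diverge from the paper is in the descent to $\mathbb{F}_{q^m}$. The paper invokes Galois closure of $\mathcal{D}$ over $\mathbb{F}_{q^m}$ together with Delsarte's theorem to obtain the \emph{equality} $\mathcal{D}\cap\mathbb{F}_{q^m}^n=(\mathcal{D}^\perp\cap\mathbb{F}_{q^m}^n)^\perp=\mathcal{C}^\perp$, and then concludes $\mathcal{B}\star\mathcal{A}\subseteq\mathcal{D}\cap\mathbb{F}_{q^m}^n=\mathcal{C}^\perp$. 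You instead use only the trivial \emph{inclusion} coming from $\mathcal{C}\subseteq\widetilde{\mathcal{C}}$, which is enough for the lemma and sidesteps the Delsarte/Galois-closure machinery entirely. That is a genuine simplification; the paper's stronger equality is not needed here.

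One cosmetic fix: the chain $\widetilde{\mathcal{C}}^\perp\subseteq\mathcal{C}^\perp$ as written does not typecheck, since the left-hand side sits in $\mathbb{F}_{q^n}^n$ and the right-hand side in $\mathbb{F}_{q^m}^n$. What you mean (and what makes the argument work) is $\widetilde{\mathcal{C}}^\perp\cap\mathbb{F}_{q^m}^n\subseteq\mathcal{C}^\perp$, together with the observation that $\mathcal{B}\star\mathcal{A}\subseteq\mathbb{F}_{q^m}^n$ already, since $\mathcal{A},\mathcal{B}\subseteq\mathbb{F}_{q^m}^n$ and the product is $\mathbb{F}_q$-bilinear with values in the same ambient space as its second argument. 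Your concern in the third paragraph about which instantiation of $\star$ is in play is well founded: the paper is indeed tacitly applying the Section \ref{vector products} framework with the extension field taken to be $\mathbb{F}_{q^n}$ and the basis taken to be the normal basis $\boldsymbol\alpha$ (so that $\varphi_n$ is the identity), exactly as you surmise; once that is said, nothing further is needed.
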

\begin{proof}
By Proposition \ref{evaluation properties}, item 2, we see that $ \mathcal{B} \star \mathcal{A} $ is contained in the $ \mathbb{F}_{q^n} $-linear code with generator matrix $ \mathcal{M}_{\boldsymbol\alpha}(I + J) $. Denote such code by $ \mathcal{D} $, that is, $ \mathcal{B} \star \mathcal{A} \subseteq \mathcal{D} $ and $ \mathcal{D} \subseteq \mathbb{F}_{q^n}^n $.

By definition, $ \mathcal{C} = \mathcal{D}^\perp \cap \mathbb{F}_{q^m}^n $, and by \cite[Corollary 3]{rootskew}, $ \mathcal{D} $ is Galois closed over $ \mathbb{F}_{q^m} $, which means that $ \mathcal{D}^\perp \cap \mathbb{F}_{q^m}^n = (\mathcal{D} \cap \mathbb{F}_{q^m}^n)^\perp $ by \cite[Proposition 2]{similarities} and Delsarte's theorem \cite[Theorem 2]{delsarte}. Hence 
$$ \mathcal{B} \star \mathcal{A} \subseteq \mathcal{D} \cap \mathbb{F}_{q^m}^n = (\mathcal{D}^\perp \cap \mathbb{F}_{q^m}^n)^\perp = \mathcal{C}^\perp. $$
\end{proof}

We may now prove that $ (\mathcal{A}, \mathcal{B}) $ is a $ t $-rank error-locating pair of type I for $ \mathcal{C} $, and with some stronger hypotheses, it is also a $ t $-rank error-correcting pair for $ \mathcal{C} $.

\begin{theorem}
Fix a positive integer $ t $ and assume that $ \#I > t $ and $ J $ contains $ \delta - 1 $ consecutive elements, for some $ \delta > t $. Then $ (\mathcal{A}, \mathcal{B}) $ is a $ t $-rank error-locating pair for $ \mathcal{C} $ (see Remark \ref{error-locating I}). If moreover, $ d_R(\mathcal{A}) + d_R(\mathcal{C}) > n $, then $ (\mathcal{A}, \mathcal{B}) $ is a $ t $-rank error-correcting pair of type I for $ \mathcal{C} $.
\end{theorem}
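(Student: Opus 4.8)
The plan is to verify the four conditions of Definition~\ref{RECP type I} for the pair $(\mathcal{A},\mathcal{B})$ and $\mathcal{C}$; once conditions~1, 2 and 3 are in hand the first assertion follows from Remark~\ref{error-locating I}, and adding condition~4 gives the second. Condition~1, $\mathcal{B}\star\mathcal{A}\subseteq\mathcal{C}^\perp$, is exactly the lemma proved just above, and condition~4, $d_R(\mathcal{A})+d_R(\mathcal{C})>n$, is the extra hypothesis of the second assertion. For condition~2, $\dim(\mathcal{A})>t$: by the lemma identifying $\mathcal{A}$ with the $q$-cyclic code with parity-check matrix $\mathcal{M}_{\boldsymbol\beta}(I^c)$ over $\mathbb{F}_{q^n}$, the corresponding $q$-root space is spanned by the $\mathbb{F}_q$-linearly independent set $\{\beta^{[j]}\mid j\in I^c\}$, hence has $\mathbb{F}_q$-dimension $\#(I^c)=n-\#I$; by \cite[Theorem 3]{rootskew} (with the bijection of \cite[Corollary 2]{rootskew}) this forces $\dim(\mathcal{A})=\#I$, so $\#I>t$ is precisely condition~2. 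Thus everything reduces to condition~3.

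For condition~3, $d_R(\mathcal{B}^\perp)>t$, the idea is to realize $\mathcal{B}^\perp$ as an $\mathbb{F}_{q^m}$-linear $q$-cyclic code whose $q$-root space contains $\delta-1$ consecutive Frobenius powers of a single element, and then apply the rank-metric BCH bound, i.e.\ Lemma~\ref{rank bound} with $c=1$ and $w=0$. To pin down the $q$-root space of $\mathcal{B}^\perp$, I would write $\mathcal{B}=\widetilde{\mathcal{B}}\cap\mathbb{F}_{q^m}^n$, where $\widetilde{\mathcal{B}}\subseteq\mathbb{F}_{q^n}^n$ is the code with generator matrix $\mathcal{M}_{\boldsymbol\alpha}(J)$; the orthogonality $\boldsymbol\alpha^{[i]}\cdot\boldsymbol\beta^{[j]}=\delta_{i,j}$ and a rank count show that $\mathcal{M}_{\boldsymbol\beta}(J^c)$ is a parity-check matrix for $\widetilde{\mathcal{B}}$, hence $\mathcal{M}_{\boldsymbol\alpha}(J)$ is a parity-check matrix for $\widetilde{\mathcal{B}}^\perp$. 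Since $J$ is closed under $+m$, $\widetilde{\mathcal{B}}$ is Galois closed over $\mathbb{F}_{q^m}$ by \cite[Corollary 3]{rootskew}, so---exactly as in the proof of the lemma $\mathcal{B}\star\mathcal{A}\subseteq\mathcal{C}^\perp$, via \cite[Proposition 2]{similarities} and Delsarte's theorem \cite[Theorem 2]{delsarte}---one gets $\mathcal{B}^\perp=(\widetilde{\mathcal{B}}\cap\mathbb{F}_{q^m}^n)^\perp=\widetilde{\mathcal{B}}^\perp\cap\mathbb{F}_{q^m}^n$, which by \cite[Theorem 3]{rootskew} is the $\mathbb{F}_{q^m}$-linear $q$-cyclic code attached to the $q$-root space $\mathcal{T}=\langle\alpha^{[j]}\mid j\in J\rangle$. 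If $\{b,b+1,\dots,b+\delta-2\}\subseteq J$ are the $\delta-1$ consecutive elements, then $\alpha^{[b]},\alpha^{[b+1]},\dots,\alpha^{[b+\delta-2]}\in\mathcal{T}$, and these are $\mathbb{F}_q$-linearly independent since they belong to the normal basis $\boldsymbol\alpha$; Lemma~\ref{rank bound} with this $b$, $c=1$ and $w=0$ then gives $d_R(\mathcal{B}^\perp)\geq\delta>t$. (If $\delta-1\geq m$, then $J$, being a union of cosets of $\langle m\rangle$ and containing a transversal of them, equals $\{1,\dots,n\}$, so $\mathcal{B}=\mathbb{F}_{q^m}^n$, $\mathcal{B}^\perp=\{0\}$, and there is nothing to prove; otherwise $\delta\leq m$ and the hypothesis $\delta+w\leq m$ of Lemma~\ref{rank bound} holds.)

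I expect the main difficulty to be purely organizational: one must keep straight, after each passage to a dual or a subfield subcode, which of the Moore matrices $\mathcal{M}_{\boldsymbol\alpha}(\cdot)$, $\mathcal{M}_{\boldsymbol\beta}(\cdot)$ is a generator and which a parity-check matrix, use the $q$-root-space/dimension dictionary of \cite[Theorem 3]{rootskew} consistently, and---since $\mathcal{B}$ here lies in $\mathbb{F}_{q^m}^n$ rather than $\mathbb{F}_{q^m}^m$---read condition~3 in the sense of Definition~\ref{RECP type I} for the map $\varphi_n$. The single genuinely non-routine input is the identity $\mathcal{B}^\perp=\widetilde{\mathcal{B}}^\perp\cap\mathbb{F}_{q^m}^n$, which is handled verbatim as in the preceding lemma, together with the observation that it is the defining set of $\mathcal{B}^\perp$ (and not of $\mathcal{B}$ itself) in which the $\delta-1$ consecutive indices of $J$ show up.
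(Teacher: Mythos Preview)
Your proposal is correct and follows the same approach as the paper: verify the four conditions of Definition~\ref{RECP type I}, citing the preceding lemma for condition~1 and applying the rank-HT bound (Lemma~\ref{rank bound}) to $\mathcal{B}^\perp$ for condition~3. The paper's proof is far terser---it simply asserts $\dim(\mathcal{A})=\#I$ and $d_R(\mathcal{B}^\perp)\geq\delta$ ``by Lemma~\ref{rank bound}'' without spelling out the Galois-closure/Delsarte step identifying the $q$-root space of $\mathcal{B}^\perp$ as $\langle\alpha^{[j]}\mid j\in J\rangle$---so your write-up usefully fills in details the paper leaves implicit.
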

\begin{proof}
From the previous lemma, we have that $ \mathcal{B} \star \mathcal{A} \subseteq \mathcal{C}^\perp $. On the other hand, $ \mathcal{A} $ satisfies that $ \dim(\mathcal{A}) = \#I > t $, and $ \mathcal{B} $ satisfies that $ d_R(\mathcal{B}^\perp) \geq \delta > t $ by Lemma \ref{rank bound}. Then $ (\mathcal{A}, \mathcal{B}) $ is a $ t $-rank error-locating pair of type I for $ \mathcal{C} $.
\end{proof}

Observe that we may obtain the bound $ d_R(\mathcal{C}) \geq \delta + w $ by Proposition \ref{first bound} assuming that $ I $ contains the elements $ ic $, for $ 0 \leq i \leq w $. This means that, for $ q $-cyclic codes constructed with a normal basis, the rank-HT bound found in \cite[Corollary 4]{rootskew} is implied by Proposition \ref{first bound}, as in the classical case. Further cases are left open.

\section*{Acknowledgement}

Rank error-correcting pairs of type II have been obtained in the case $ m = n $ independently by Alain Couvreur \cite{couvreur}. The authors wish to thank him for this communication and for useful discussions and comments.

The authors also gratefully acknowledge the support from The Danish Council for Independent Research (Grant No. DFF-4002-00367). This paper was started during the visit of the second author to Aalborg University, Denmark, which was supported by the previous grant.


\begin{thebibliography}{10}

\bibitem{berger}
T.P. Berger.
\newblock Isometries for rank distance and permutation group of {G}abidulin
  codes.
\newblock {\em IEEE Transactions Information Theory}, 49(11):3016--3019, 2003.

\bibitem{skewcyclic1}
D.~Boucher, W.~Geiselmann, and F.~Ulmer.
\newblock Skew-cyclic codes.
\newblock {\em Applicable Algebra in Engineering, Communication and Computing},
  18(4):379--389, 2007.

\bibitem{skewcyclic3}
L.~Chaussade, P.~Loidreau, and F.~Ulmer.
\newblock Skew codes of prescribed distance or rank.
\newblock {\em Designs, Codes and Cryptography}, 50(3):267--284, 2009.

\bibitem{couvreur}
A.~Couvreur.
\newblock About error correcting pairs. {P}ersonal communication.
\newblock September 15, 2015.

\bibitem{delsartebilinear}
P.~Delsarte.
\newblock Bilinear forms over a finite field, with applications to coding
  theory.
\newblock {\em Journal of Combinatorial Theory, Series A}, 25(3):226 -- 241,
  1978.

\bibitem{delsarte}
P.~Delsarte.
\newblock On subfield subcodes of modified reed-solomon codes (corresp.).
\newblock {\em IEEE Transactions Information Theory}, 21(5):575--576, September
  2006.

\bibitem{error-locating}
I.M. Duursma and R.~K{\"o}tter.
\newblock Error-locating pairs for cyclic codes.
\newblock {\em IEEE Transactions on Information Theory}, 40(4):1108--1121,
  1994.

\bibitem{pellikaan-duursma}
I.M. Duursma and R.~Pellikaan.
\newblock A symmetric {R}oos bound for linear codes.
\newblock {\em Journal of Combinatorial Theory, Series A}, 113(8):1677--1688,
  2006.

\bibitem{gabidulin}
E.M. Gabidulin.
\newblock Theory of codes with maximum rank distance.
\newblock {\em Problems Information Transmission}, 21, 1985.

\bibitem{qcyclic}
E.M. Gabidulin.
\newblock Rank q-cyclic and pseudo-q-cyclic codes.
\newblock In {\em IEEE International Symposium on Information Theory, 2009.
  ISIT 2009}, pages 2799--2802, 2009.

\bibitem{hartmann}
C.R.P. Hartmann and K.K. Tzeng.
\newblock Generalizations of the {BCH} bound.
\newblock {\em Information and Control}, 20(5):489 -- 498, 1972.

\bibitem{jurrius-pellikaan}
R.P.M.J. Jurrius and R.~Pellikaan.
\newblock The extended and generalized rank weight enumerator.
\newblock ACA 2014, Applications of Computer Algebra, 9 July 2014, Fordham
  University, New York, Computer Algebra in Coding Theory and Cryptography,
  2014.

\bibitem{slides}
R.P.M.J. Jurrius and R.~Pellikaan.
\newblock On defining generalized rank weights.
\newblock {\em arXiv:1506.02865}, 2015.

\bibitem{unified}
R.~K{\"o}tter.
\newblock A unified description of an error locating procedure for linear
  codes.
\newblock {\em Proceedings of Algebraic and Combinatorial Coding Theory}, pages
  113 -- 117, 1992.
\newblock Voneshta Voda.

\bibitem{new-construction}
A.~Kshevetskiy and E.M. Gabidulin.
\newblock The new construction of rank codes.
\newblock In {\em IEEE International Symposium on Information Theory, 2005.
  ISIT 2005}, pages 2105--2108, 2005.

\bibitem{lidl}
R.~Lidl and H.~Niederreiter.
\newblock {\em Finite Fields.}, volume~20.
\newblock Encyclopedia of Mathematics and its Applications. Addison-Wesley,
  Amsterdam, 1956.

\bibitem{loidreau-gabidulin}
Pierre Loidreau.
\newblock A {W}elch–{B}erlekamp like algorithm for decoding {G}abidulin
  codes.
\newblock In Øyvind Ytrehus, editor, {\em Coding and Cryptography}, volume
  3969 of {\em Lecture Notes in Computer Science}, pages 36--45. Springer
  Berlin Heidelberg, 2006.

\bibitem{rootskew}
U.~Mart{\'i}nez-Pe{\~n}as.
\newblock On the roots and minimum rank distance of skew cyclic codes.
\newblock {\em arXiv:1511.09329}, 2015.

\bibitem{similarities}
U.~Mart{\'i}nez-Pe{\~n}as.
\newblock On the similarities between generalized rank and {H}amming weights
  and their applications to network coding.
\newblock {\em arXiv:1506.04036}, 2015.

\bibitem{on-ECP}
R.~Pellikaan.
\newblock On decoding linear codes by error correcting pairs.
\newblock {\em Preprint. Eindhoven University of Technology}, 1988.

\bibitem{pellikaan-error-location}
R.~Pellikaan.
\newblock On decoding by error location and dependent sets of error positions.
\newblock {\em Discrete Mathematics}, 106:369--381, 1992.

\bibitem{on-the-existence}
R.~Pellikaan.
\newblock On the existence of error-correcting pairs.
\newblock {\em Journal of Statistical Planning and Inference}, 51(2):229 --
  242, 1996.
\newblock Shanghai Conference Issue on Designs, Codes, and Finite Geometries,
  Part I.

\bibitem{ravagnani}
A.~Ravagnani.
\newblock Rank-metric codes and their duality theory.
\newblock {\em Designs, Codes and Cryptography}, pages 1--20, 2015.

\bibitem{roos1}
C.~Roos.
\newblock A generalization of the {BCH} bound for cyclic codes, including the
  {H}artmann-{T}zeng bound.
\newblock {\em Journal Combinatorial Theory (Series A)}, 33:229--232, 1982.

\bibitem{roos2}
C.~Roos.
\newblock A new lower bound for the minimum distance of a cyclic code.
\newblock {\em IEEE Transactions Information Theory}, IT-29:330--332, 1982.

\bibitem{on-metrics}
D.~Silva and F.R. Kschischang.
\newblock On metrics for error correction in network coding.
\newblock {\em IEEE Transactions Information Theory}, 55(12):5479--5490, 2009.

\end{thebibliography}

\def\cprime{$'$}

\end{document}